\let\today\relax
\def\ps@pprintTitle{%
    \let\@oddhead\@empty
    \let\@evenhead\@empty
    \def\@oddfoot{\footnotesize\itshape
       \hfill\today}%
    \let\@evenfoot\@oddfoot
    }
\newcommand{\OMIT}[1]{}
\newcommand{\vl}{\, \, \vline \, \, }
\newcommand{\bz}{\mathbf{z}}
\newcommand{\bw}{\mathbf{w}}
\newcommand{\by}{\mathbf{y}}
\newcommand{\bx}{\mathbf{x}}
\newcommand{\br}{\mathbf{r}}
\newcommand{\ta}{\mathtt{a}}
\newcommand{\tb}{\mathtt{b}}
\newcommand{\clean}{\mathsf{clr}} 
\newcommand{\wild}{^*}
\newcommand{\rlang}{\mathcal{R}}
\newcommand{\refl}{\mathsf{Ref}}
\newcommand{\cdoto}{\mathop{\cdot}}
\newcommand{\doc}{\mathbf{d}}
\newcommand{\alphabet}{\Sigma}
\newcommand{\mspan}[2]{\ensuremath{[#1,#2\rangle}}
\newcommand{\vars}{\mathsf{Vars}}
\newcommand{\df}{:=}
\newcommand{\join}{\bowtie}
\newcommand{\p}{^\prime}
\newcommand{\repspnr}[1]{\llbracket{#1}\rrbracket}
\newcommand{\nter}{V}
\newcommand{\ter}{{\Sigma}}
\newcommand{\prodrules}{{P}}
\newcommand{\initvar}{{S}}
\newcommand{\nterdec}{V^\text{\sc{dec}}}
\newcommand{\prodrel}{ \Rightarrow }
\newcommand{\moverel}{\vdash }
\newcommand{\grmr}[1]{\text{\sc{#1}}}
\newcommand{\decgrmr}[1]{\text{\sc{decorGrmr}}(#1)}
\newcommand{\jmp}{\text{\sc{jump}}}
\newcommand{\vop}[1]{ \vdash_{#1}}
\newcommand{\vcl}[1]{ \dashv_{#1}}
\newlength\boxwidth
\newlength\questionwidth
\newcommand{\PDA}{\mathsf{PDA}}
\newcommand{\sel}{\zeta}
\newcommand{\diff}{\setminus}
\newcommand{\map}{\mathsf{map}}
\newcommand{\enum}{\text{\sc{enumerate}} }
\newcommand{\applyProd}{\text{\sc{applyProd}} }
\newcommand{\tsc}[1]{\text{\sc{#1}}}
\newtheorem{example}{Example}[section]
\newtheorem{definition}{Definition}[section]
\newtheorem{proposition}{Proposition}[section]
\newtheorem{lemma}{Lemma}[section]
\newtheorem{theorem}{Theorem}[section]
\newtheorem{corollary}{Corollary}[section]
\journal{Discrete Applied Mathematics}
\begin{document}

\begin{frontmatter}



\title{Enumerating Grammar-Based Extractions}


\author{Liat Peterfreund\corref{l}}
\address{CNRS, LIGM, Gustave Eiffel University, Paris, France}
\cortext[l]{Part of this work was done while affiliated with DI ENS, ENS, CNRS, PSL University, and Inria.}

\begin{abstract}
\sloppy{
We propose a new grammar-based language for defining information extractors from documents (text) that is built upon the well-studied framework of document spanners for extracting structured data from text. While previously studied formalisms for document spanners are mainly based on regular expressions, we use an extension of context-free grammars, called {extraction grammars}, to define the new class of context-free spanners. 
Extraction grammars are simply context-free grammars extended with variables that capture interval positions of the document, namely spans.
While regular expressions are efficient for tokenizing and tagging, context-free grammars are also efficient for capturing structural properties. Indeed, we show that context-free spanners are strictly more expressive than their regular counterparts. We reason about the expressive power of our new class and present a pushdown-automata model that captures it.
We show that extraction grammars can be evaluated with polynomial data complexity. Nevertheless, as the degree of the polynomial  depends on the query, we present an enumeration algorithm for unambiguous extraction grammars that, after quintic preprocessing, outputs the results sequentially, without repetitions, with a constant delay between every two consecutive ones. }
\end{abstract}



\begin{keyword}
Information Extraction \sep
Document Spanners \sep
Context-Free Grammars \sep
Constant-Delay Enumeration \sep
Regular Expressions \sep
Pushdown Automata


\end{keyword}

\end{frontmatter}

\section{Introduction}\label{sec:intro}
The abundance and availability of valuable textual resources in the last decades position
text analytics as a standard component in data-driven workflows.
One way to facilitate the analysis and integration of textual content is to extract
 structured data from
it, an operation we refer to as \emph{Information Extraction (IE)}.  
IE arises in a large variety of domains, including social media
analysis~\cite{DBLP:conf/acl/BensonHB11}, health-care
analysis~\cite{DBLP:journals/jamia/XuSDJWD10}, customer relationship
management~\cite{DBLP:conf/www/AjmeraANVCDD13}, information
retrieval~\cite{DBLP:conf/www/ZhuRVL07}, and more. 


\emph{Rules} have always been a key component in various paradig\-m\-s for IE, and their roles have varied and evolved over the time. 
Systems such as Xlog~\cite{DBLP:conf/vldb/ShenDNR07} and IBM's
SystemT~\cite{DBLP:conf/acl/LiRC11, DBLP:conf/acl/ChiticariuKLRRV10} use rules to extract 
relations from text (e.g., tokenizer,
dictionary lookup, and part-of-speech tagger) that are further manipulated with relational query languages. Other
systems use rules to generate features for machine-learning classifiers~\cite{DBLP:conf/dsmml/LiBC04,DBLP:journals/sigmod/SaRR0WWZ16}. 

\paragraph*{Document Spanners}
The framework of document spanners, presented by Fagin et al., provides a theoretical basis for investigating the principles of relational rule systems for IE~\cite{DBLP:journals/jacm/FaginKRV15}. 
The research on document spanners has roughly focused on their expressive power~\cite{DBLP:journals/jacm/FaginKRV15, DBLP:conf/icdt/Freydenberger17, DBLP:conf/icdt/PeterfreundCFK19, DBLP:journals/mst/FreydenbergerH18, DBLP:conf/webdb/NahshonPV16, DBLP:journals/corr/abs-1912-06110, schmid2020purely, peterfreund2022weight} their computational complexity~\cite{DBLP:conf/icdt/AmarilliBMN19,DBLP:conf/pods/FlorenzanoRUVV18, DBLP:conf/pods/FreydenbergerKP18, DBLP:conf/pods/PeterfreundFKK19, freydenberger2021splitting}, extensions that extract incomplete information~\cite{DBLP:conf/pods/MaturanaRV18, DBLP:conf/pods/PeterfreundFKK19}, system aspects such as cleaning~\cite{DBLP:journals/tods/FaginKRV16},
dynamic complexity~\cite{DBLP:conf/icdt/FreydenbergerT20},
distributivity~\cite{DBLP:conf/pods/DoleschalKMNN19, doleschal2021optimization}, their evaluation over compressed documents~\cite{schmid2021spanner,schmid2022query,munoz2022constant}, and 
logic based approaches~\cite{thompson2022conjunctive, DBLP:conf/icalp/FreydenbergerP21}.
For a broad overview of recent advancements, we refer the reader to~\cite{doleschal2021database, schmid2022document}


In the documents spanners framework, a
\emph{document} $\doc$ is a string over a fixed finite alphabet, and a \emph{spanner} is a function that extracts from a document a relation over the spans of $\doc$. A \emph{span} $x$ is a half-open interval of positions of $\doc$ and it represents a substring $\doc_x$ of $\doc$ that is identified by these positions. A natural way to specify a spanner is by a \emph{regex formula}: a regular expression with embedded \emph{capture variables} that are viewed as relational attributes.
For instance, the spanner that is given by the regex formula 
$(\ta \vee \tb)^* \vop{x} \ta \ta^* \vcl{x} \vop{y}  \tb \tb^* \vcl{y} (\ta \vee \tb)^*$ extracts from documents spans $x$ and $y$ that correspond, respectively, to a non-empty substring of $\ta$'s followed by a non-empty substring of $\tb$'s. In particular, it extracts from the document $\ta \tb \ta \tb \tb$ the relation depicted in Figure~\ref{fig:introb}.

\begin{figure}
		\centering
		\begin{center}
			\begin{tabular}{cc}
				\toprule
				x & y \\
				\toprule
				$\mspan{1}{2}$ & $\mspan{2}{3}$\\
				$\mspan{3}{4}$ & $\mspan{4}{5}$\\
				$\mspan{3}{4}$ & $\mspan{4}{6}$\\
				\midrule
			\end{tabular}
			\vspace{-0.4cm}
		\end{center}
		\caption{ {The relation extracted from the document $\ta \tb \ta \tb \tb$ by the spanner defined by $(\ta \vee \tb)^* \vop{x} \ta \ta^* \vcl{x} \vop{y}  \tb \tb^* \vcl{y} (\ta \vee \tb)^*$. 
  }}
		\label{fig:introb}
\end{figure}

The class of \emph{regular spanners} is the class of spanners definable as the closure of regex formulas under positive relational algebra operations: projection, natural join and union. The class of regular spanners can  be represented alternatively by finite state machines, namely \emph{variable-set automata (vset-automata)}, which are nondeterministic finite-state automata that can open and close variables (and that, as in the case of regex formulas, play the role of the attributes of the extracted relation).
\emph{Core} spanners~\cite{DBLP:journals/jacm/FaginKRV15} 
are obtained by extending the class of regular spanners with string-equality selection on span variables, and thus 
are strictly more expressive than their regular counterpart.  

To date, most research on document spanners has been focused on the regular representation that is based on regular expressions and finite state automata.
Regular expressions are useful for parsing tasks that involve simple patterns of strings. These tasks include segmentation and tokenization of simple data formats such as CSV files or log files. Nevertheless, regular languages fall short in  intricate tasks such as syntax highlighting~\cite{moro2001syntax} and finding patterns in source code~\cite{smith2003spqr}. Generally, for parsing tasks that involve more complex structures such as those found in many programming languages, and for nested data formats such as XML and JSON, context-free languages are necessary. 

It is well known that context-free languages are strictly more expressive than regular languages. 
B\"{u}chi~\cite{buchi} has shown that regular languages are equivalent to monadic second order logic (over strings), and Lautemann et al.~\cite{DBLP:conf/csl/LautemannST94} have shown that adding an existential quantification over a binary relation interpreted as a matching is enough to express all context-free languages.
This quantification, intuitively, is what makes it possible to also express structural properties.

\paragraph*{Contribution}
In this work we propose a new grammar-based approach for defining the class of \emph{context-free spanners}. Context-free spanners are defined via \emph{extraction grammars} which, like regex formulas, incorporate \emph{capture variables} that are viewed as relational attributes. Extraction grammars produce \emph{ref-words} which are words over an extended alphabet that consists of standard terminal symbols along with \emph{variable operations} that denote opening and closing of variables. 
The result of evaluating an extraction grammar on a document $\doc$ is defined
via the ref-words that are produced by the grammar and equal to $\doc$ after erasing the
variable operations.

\begin{figure}[t]
\begin{center}
	\begin{minipage}[b]{0.3\textwidth}
		\centering
		\captionsetup{justification=centering}
		\begin{tabular}{ l } 
			$	{S}\rightarrow B \vop{x} \ta  A \tb \vcl{y} B $ \\ 
			$A\rightarrow\ta  A \tb  \vl \vcl{x}  \vop{y}$  \\
			$B\rightarrow \ta B \vl \tb B \vl \epsilon$
		\end{tabular}
	\end{minipage}
	\begin{minipage}[b]{0.3\textwidth}
 	\end{minipage}
	\begin{minipage}[b]{0.3\textwidth}
		\captionsetup{justification=centering}
		\,\,\,\,
		\begin{tabular}{ l } 
			$	 \vop{x} \ta  \ta  \vcl{x} \vop{y} \tb \tb \vcl{y} \tb $ \\ 
			$	 \ta \ta \vop{x} \ta  \ta  \vcl{x} \vop{y} \tb \tb \vcl{y} \tb $   \\
			$	 \ta \ta \vop{x} \ta   \vcl{x} \vop{y}  \tb \vcl{y} \tb $   \\
		\end{tabular}
	\end{minipage}
	\end{center}
 \caption{\label{fig:introc} Production rules and some of the ref-words they produce.}
\end{figure}
For example, the extraction grammar on the left of Figure~\ref{fig:introc} produces also the ref-words $\vop{x} \ta \vcl{x} \vop{y} \tb \vcl{y} \ta \tb \tb $ and $\ta \tb \vop{x} \ta \vcl{x} \vop{y} \tb \vcl{y} \tb $. 
Hence, it extracts from $\doc \df \ta \tb \ta \tb \tb$ the two first tuples from the relation in Figure~\ref{fig:introb}.
In the right Figure~\ref{fig:introc}, there are additional examples of ref-words produced by this grammar.
In general, the given grammar extracts from documents the spans $x$ and $y$ that correspond, respectively, with a non-empty substring of $\ta$'s followed by an equal-length substring of $\tb$'s. 
With a slight adaptation of Fagin et al.~inexpressibility proof \cite[Theorem 4.21]{DBLP:journals/jacm/FaginKRV15}, it can be shown that this spanner is inexpressible by core spanners.

Indeed, we show that context-free spanners are strictly more expressive than regular
spanners, and that the restricted class of regular extraction grammars captures the regular
spanners. We compare the expressiveness of context-free spanners against core and generalized core
spanners and show that context-free spanners are incomparable to any of these classes. 
In addition to extraction grammars, we present a pushdown automata
model that captures the context-free spanners.

In terms of evaluation of context-free spanners, we can evaluate extraction grammars in polynomial time in \emph{data complexity}, where the spanner is regarded as fixed and the document as input. However, as the 
degree of this polynomial depends on the query (in particular, on the number of variables in the relation it extracts),
we propose an enumeration algorithm for unambiguous extraction grammars. Our algorithm outputs the results consecutively, after quintic preprocessing, with constant delay between every two answers. 
In the first step of the preprocessing stage, we manipulate the extraction grammar so that it will be adjusted to the input document. Then, in the second step of the preprocessing, we change it in a way that its non-terminals include extra information on the variable operations. This extra information enables us to skip sequences of productions that do not affect the output, hence obtaining a delay that is independent of the input document, and linear in the number of variables associated with the spanner.   

\paragraph*{Related Work}
Grammar-based parsers are widely used in IE systems~\cite{yakushiji2000event, seoud2007extraction}. There are, as well, several theoretical frameworks that use grammars for IE, one of which
is Knuth's framework of attribute grammars~\cite{DBLP:journals/mst/Knuth68, DBLP:journals/mst/Knuth71}.
In this framework, the non-terminals of a grammar are attached with attributes\footnote{The term ``attributes'' was previously used in the relational context; 
	Here the meaning is different.} that pass semantic information up and down a parse-tree. 
While both extraction grammars and attribute grammars extract information via grammars, it seems as if their expressiveness is incomparable.

The problem of enumerating words of context-free grammars arises in different contexts~\cite{wen2004enumerating,nagashima1987formal}. 
Providing complexity guarantees on the enumeration is usually tricky and requires assumptions either on the grammar or on the output. M\"{a}kinen~\cite{makinen1997lexicographic} has presented an enumeration algorithm for regular grammars and for unambiguous context-free grammars with additional restrictions (strongly prefix-free and length complete). Later, D\"{o}m\"{o}si~\cite{domosi2000unusual} has presented an enumeration algorithm for unambiguous context-free grammars that outputs, with quadratic delay, only  words of a fixed length. 

Very recently, while this paper was under review, an extension of it was presented~\cite{amarilli2022efficient}. We discuss it in Section~\ref{sec:conc}. 

\paragraph*{Organization}
In Section~\ref{sec:cfs}, we present  extraction grammars and extraction pushdown automata. In Section~\ref{sec:exp}, we discuss the expressive power of context-free spanners and their evaluation. In Sections~\ref{sec:enum} and~\ref{sec:en}, we present our enumeration algorithm, and in Section~\ref{sec:conc}
we conclude.

\section{Context-Free Spanners}\label{sec:cfs}

In this section we present the class of context-free spanners by presenting two formalisms for expressing them: extraction grammars and extraction pushdown automata.

\subsection{Preliminaries}

We start by presenting the formal setup based on notations and definitions used in previous works on document spanners (e.g.,~\cite{DBLP:journals/jacm/FaginKRV15, DBLP:conf/pods/FreydenbergerKP18}).

\paragraph*{Strings and Spans}
We set an infinite set $\vars$ of variables, and fix a finite alphabet $\alphabet$ that is disjoint of $\vars$. 
In what follows we assume that our alphabet $\Sigma$ consists of at least two letters. 
A \emph{document} $\doc$ is a finite sequence over $\alphabet$ whose length is denoted by $|\doc|$.
A \emph{span} identifies a substring of $\doc$ by specifying its bounding indices. 
Formally, if $\doc = \sigma_1 \cdots \sigma_n$ where $\sigma_i \in \alphabet$ then a span of $\doc$ has the form $\mspan{i}{j}$ where $1 \le i\le j \le n+1$ and $\doc_{\mspan{i}{j}}$ denotes the substring $\sigma_i \cdots \sigma_{j-1}$. 
When $i=j$ it holds that $\doc_{\mspan{i}{j}}$ equals the empty string, which we denote by $\epsilon$.

\paragraph*{Document Spanners}
Let $X\subseteq \vars$ be a finite set of variables and let $\doc$ be  a document.
An $(X,\doc)$-mapping assigns spans of $\doc$ to variables in $X$.
An $(X,\doc)$-relation is a finite set of $(X,\doc)$-mappings. 
A \emph{document spanner} (or \emph{spanner}, for short)  is a function associated with a finite set $X$ of variables that maps documents $\doc$ into $(X,\doc)$-relations.
A spanner is said to be \emph{Boolean} if it is associated with the empty set.
\footnote{Notice that in this work we define document spanners similarly to~\cite{DBLP:journals/jacm/FaginKRV15} based on $(X,\doc)$-mappings rather than on partial mappings as defined in~\cite{DBLP:conf/pods/MaturanaRV18}.}

\subsection{Extraction Grammars }

The \emph{variable operations} of a variable $x\in \vars$ are $\vop{x}$ and $\vcl{x}$ where, intuitively, $\vop{x}$ denotes the opening of $x$, and $\vcl{x}$ its closing.
For a finite subset $X \subseteq \vars$, we define the set $\Gamma_X \df \{ \vop{x}, \vcl{x} \, \, \vline\, \,  x\in X \}$. That is, $\Gamma_X$ is the set that consists of all the {variable operations} of all variables in $X$. 
We assume that $\Sigma$ and $\Gamma_X$ are
disjoint. 
We extend the classical definition of context-free grammars~\cite{DBLP:books/daglib/0016921} by treating the  variable operations as special terminal symbols. 
Formally, a \emph{context-free extraction grammar}, or \emph{extraction grammar} for short, is a tuple 
$G \df ( X, \nter, \ter ,\prodrules, \initvar)$ where
\begin{itemize}
	\item 
	$X \subseteq \vars$ is a finite set of variables,
	\item $\nter$ is a finite set of \emph{non-terminal} symbols\footnote{Note that these are often referred to as variables, however, here we use the term `non-terminals' to distinguish between these symbols and the elements of $\vars$.}, 
	\item $\ter $ is a finite set of \emph{terminal} symbols;
	\item $\prodrules $ is a finite set of \emph{production rules} of the form 
	$A \rightarrow \alpha$ where $A$ is a non-terminal and  $\alpha \in(\nter \cup \Sigma \cup  \Gamma_X)^*$, and
	\item
	$S$ is a designated non-terminal symbol referred to as the \emph{start symbol}.
\end{itemize}
We say that the extraction grammar $G$ is \emph{associated} with $X$.

\begin{example}~\label{ex:glen}
	In this and in the following examples we often denote the elements in $\nter$ 
	by upper 
	case alphabet letters from the beginning of the Latin alphabet ($A,B,C,\ldots$).
	Let $\Sigma = \{ \ta, \tb \}$, and let us consider the grammar
	$\grmr{disjEqLen}$ associated with the variables $\{x,y\}$ that is given by the following production rules:
	\begin{itemize}
		\item 
		$S \rightarrow B  \vop{x}   A  \vcl{y} B \, \vl \, B  \vop{y}   A  \vcl{x} B $
		\item
		$A \rightarrow \ta  A  \ta \, \vl \,  \ta  A  \tb \, \vl \,  \tb  A  \tb \, \vl \,  \tb  A  \ta$
		\item
		$A \rightarrow\,  \vcl{x} B \vop{y} \, \vl \,  \vcl{y} B \vop{x}$
		\item
		$B \rightarrow \epsilon \, \vl \, \ta B \, \vl \, \tb B $
	\end{itemize}
	Here and in what follows, we use the compact notation for production rules by writing $A\rightarrow \alpha_1 \, \vl \, \cdots \, \vl \, \alpha_n$ instead of the productions $A\rightarrow \alpha_1, \cdots, A\rightarrow \alpha_n$.
	As we shall later see, this grammar extracts pairs of disjoint spans with the same length.
\end{example}

Context-free grammars generate words over a given alphabet $\Sigma$, whereas extraction grammars generate words over the extended alphabet $\Sigma \cup \Gamma_X$. These words are referred to as \emph{ref-words}~\cite{sch:cha}.
Similarly to (classical) context-free grammars, the process of deriving ref-words is defined via the notations  $\prodrel,\prodrel^n,\prodrel^*$ that stand for one, $n$, and several (possibly zero) derivation steps, respectively. To emphasize the grammar being discussed, we sometimes use the grammar as a subscript (e.g., $\prodrel^*_G$). 
For the full  definitions we refer the reader to Hopcroft et al.~\cite{DBLP:books/daglib/0016921}. 
A non-terminal $A$ is called \emph{useful} if there is some derivation of the form $S\Rightarrow^* \alpha A \beta \Rightarrow^* w$ where $w\in (\ter\cup \Gamma_X )^*$. 
If $A$ is not useful then it is called \emph{useless}.
For complexity analysis, we define the \emph{size} $|G|$ of an extraction grammar $G$ as the sum of the number of symbols at the right-hand sides (i.e., to the right of $\rightarrow$) of its rules. 
We say that a derivation $A\Rightarrow^* \gamma$ is \emph{terminal} if $\gamma$ is a sequence of terminals. 

\subsection{Semantics of Extraction Grammars }\label{sec:refwords}

Following Freydenberger~\cite{DBLP:journals/mst/Freydenberger19} we define the semantics of extraction grammars using {ref-words}.
A {ref-word} 
$\br\in(\Sigma\cup \Gamma_X)^*$ is 
\emph{valid (for $X$)} if each variable of 
$X$ is opened and then closed
exactly once, or more formally, for each $x\in X$ the string
$\br$ has precisely one occurrence of $\vop{x}$,  precisely one
occurrence of $\vcl{x}$, and the former is before
(i.e., to the left of) the latter.

\begin{example}\label{ex:refval}
	The ref-word 
	$\br_1 \df \, \vop{x} \ta \ta \vcl{y} \vop {x} \ta \tb \vcl{y}$ is not valid for $\{x,y\}$ whereas the ref-words
	$\br_2 \df \, \vop{x} \ta \ta \vcl{x} \vop {y} \ta \tb \vcl{y}$, 	$\br_3 \df \, \vop{y} \ta \vcl{y} \vop{x} \ta \vcl{x} \ta \tb $ 
 and $\br_4 \df \, \vop{y} \ta \vop{x}  \vcl{y} \ta \vcl{x} \ta \tb $ are valid for $\{x,y\}$.  
\end{example}

To connect ref-words to terminal strings and later to
spanners, we define a morphism 
$\clean\colon (\Sigma\cup \Gamma_X)\wild\to \Sigma\wild$ by 
$\clean(\sigma)\df\sigma$ for 
$\sigma\in\Sigma$, and 
$\clean(\tau)\df \epsilon$
for $\tau\in\Gamma_X$. For $\doc\in\Sigma\wild$, let
$\refl(\doc)$ be the set of all valid ref-words
$\br
\in(\alphabet\cup\Gamma_X)\wild$ 
with 
$\clean(\br)=\doc$.  
By
definition, every $\br\in \refl(\doc)$ 
has a unique factorization
$\br = \br_x'\cdoto 
\vop{x}\cdoto \br_{x}\cdoto \vcl{x}\cdoto
\br_x''$ 
for each $x\in X$. With these factorizations, we interpret
$\br$ as a $(X,\doc)$-mapping $\mu^{\br}$ by defining
$\mu^\br(x) \df \mspan{i}{j}$, 
where $i\df |\clean(\br'_x)|+1$ and
$j\df i+|\clean(\br_{x})|$.  
An alternative way of understanding $\mu^{\br}(x)=\mspan{i}{j}$ is that
$i$ is chosen such that $\vop{x}$ occurs between the positions in
$\br$ that are mapped to $\sigma_{i-1}$ and $\sigma_{i}$, and
$\vcl{x}$ occurs between the positions that are mapped to
$\sigma_{j-1}$ and $\sigma_{j}$ (assuming that $\doc = \sigma_1
\cdots \sigma_{|\doc|}$, and slightly abusing the notation to avoid a
special distinction for the non-existing positions $\sigma_{0}$ and
$\sigma_{|\doc|+1}$).
\begin{example}
	Let $\doc = \ta \ta \ta \tb$.
	The ref-word $\br_2$ from Example~\ref{ex:refval} is interpreted as the $(\{x,y \}, \doc)$-mapping $\mu^{\br_2}$ defined by 
	$\mu^{\br_2}(x)\df\mspan{1}{3}$ and $\mu^{\br_2}(y)\df\mspan{3}{5}$.
 The ref-word $\br_3$ is interpreted as $ \mu^{ \br_3} $ defined by  $ \mu^{\br_3}(x)\df\mspan{2}{3} $ and $\mu^{\br_3}(y)\df\mspan{1}{2} $, and $\br_4$ as $ \mu^{ \br_3} $ as well.
\end{example}

Extraction grammars define ref-languages which are sets of ref-words. 
The ref-language $\rlang(G)$ of an extraction grammar 
$G \df (X, \nter, \ter, \prodrules, \initvar) $ is defined by
$
\rlang(G) \df
\{ \br \in (\ter \cup \Gamma_{X} )^*
\, \vline \,\,
\initvar  \prodrel^* \br \}\,.
$
Note that we use $\rlang {(G)}$ instead of  $\mathcal L{(G)}$ usually used to denote the language of standard grammars, to emphasize that the produced language is  a ref-language. (We also use $\mathcal L{(G)}$ when $G$ is a standard grammar.)
To illustrate the definition let us consider the following example.
\begin{example}\label{ex:produce}
	\sloppy{Following  Examples~\ref{ex:glen} and \ref{ex:refval}, notice that the ref-words $\br_1,\br_2$ and $\br_3$  are in $\rlang(\grmr{disjEqLen})$ whereas $\br_4$ is not.
	Producing both $\br_1$ and $\br_2$ starts similarly with the sequence:
	$
	\initvar \prodrel\, B\vop{x}A\vcl{y}B \, \prodrel^2\, \vop{x}   A  \vcl{y}  \,\prodrel \, \vop{x} \ta  A \tb \vcl{y} \, \prodrel \, \vop{x} \ta  \ta A \ta \tb \vcl{y}$.
	The derivation of $\br_1$ continues with
	$
	\prodrel \,
	\vop{x} \ta  \ta  \vcl{y} B \vop{x} \ta \tb \vcl{y}\,
	\prodrel \, \vop{x} \ta  \ta  \vcl{y} \vop{x} \ta \tb \vcl{y}
	$
	whereas that of $\br_2$  continues with
	$
	\prodrel\,
	\vop{x} \ta  \ta  \vcl{x} B \vop{y} \ta \tb \vcl{y}\,
	\prodrel \,\vop{x} \ta  \ta  \vcl{x} \vop{y} \ta \tb \vcl{y}
	$.	 }
\end{example}
We denote by $\refl(G)$ the set of all
ref-words in 
$\rlang(G)$ that are valid for $X$.
Finally, we define  
the set $\refl(G,\doc)$ of ref-words in $\refl(G)$ that $\clean$ maps to $\doc$. That is, 
$\refl(G,\doc) \df  \refl(G) \cap \refl(\doc)\,.$  
The result of evaluating the spanner $\repspnr{G}$ on a document $\doc$ is then defined as 
\[
\repspnr{G}(\doc) \df \{ \mu^{\br} \, \, \vline \, \, \br \in \refl(G,\doc) \}\,.
\]
\begin{example}
Following the previous examples in this section, 
	let us consider the document $\doc \df \ta \ta \tb \ta$.
	The grammar $\grmr{disjEqLen}$ maps $\doc$
	into a set of $(\{x,y \}, \doc)$-mappings, amongst are 
	$ \mu^{\br_2} $ that is defined by  $ \mu^{\br_2}(x) \df\mspan{1}{3} $ and $\mu^{\br_2}(y)\df\mspan{3}{5} $, and 
	$ \mu^{ \br_3} $ that is defined by  $ \mu^{\br_3}(x)\df\mspan{2}{3} $ and $\mu^{\br_3}(y)\df\mspan{1}{2} $. 
 It can be shown that 
	the grammar $\grmr{disjEqLen}$ maps every document $\doc $ into all possible $(\{x,y \},\doc)$-mappings $\mu$ such that
	$\mu(x)$ and $\mu(y)$ are disjoint (i.e., do not overlap) and have the same length (i.e.,
	$|\doc_{\mu(x)}| =  |\doc_{\mu (y)}|$).
\end{example}
A spanner $S$ is said to be \emph{definable} by an extraction grammar $G$ if $S(\doc) = \repspnr{G}(\doc)$ for every document $\doc$. 
\begin{definition}
	A \emph{context-free spanner} is a spanner definable by an extraction grammar. 
\end{definition}

\subsection{Extraction Pushdown Automata}\label{sec:epa}
An \emph{extraction pushdown automaton}, or \emph{extraction $\PDA$}, is associated with a finite set $X\subseteq \vars$ of variables and can be viewed as a standard pushdown automata over the extended alphabet $\Sigma \cup \Gamma_X$.
Formally,
an  \emph{extraction $\PDA$} is a tuple 
$A \df (X, Q,\Sigma, \Delta, \delta, q_0, Z,F)$ where
$X$ is a finite set of variables;
$Q$ is a finite set of states;
$\Sigma$ is the input alphabet;
$\Delta$ is a finite set which is called \emph{the stack alphabet};
$\delta$ is a mapping 
$Q \times \big( \Sigma \cup \{ \epsilon\} \cup \Gamma_X \big) \times \Delta \rightarrow 2^{Q \times \Delta^*}$ which is called the \emph{transition function};
$q_0 \in Q$ is the \emph{initial state};
$Z \in \Delta$ is the \emph{initial stack symbol}; and
$F \subseteq Q$ is the set of \emph{accepting states}.
Indeed, 
extraction $\PDA$s run on ref-words (i.e., finite sequences over $\Sigma \cup \Gamma_X$), as opposed to classical $\PDA$s whose input are words (i.e., finite sequences over $\Sigma$). 
Similarly to classical $\PDA$s, the computation of extraction $\PDA$s can be described using sequences of configurations: a \emph{configuration} of $A$ is a triple $(q,w,\gamma)$ where $q$ is the state,
$w$ is the remaining input, and
$\gamma$ is the stack content such that
the top of the stack is the left end of $\gamma$ and its bottom is the right end.
We use the notation $
{\moverel}^*$ similarly to how it is used in the context of $\PDA$s~\cite{hop:int}
and define the ref-language $\mathcal{R}{(A)}$:
\[
\rlang{(A)} \df
\{ \br \in (\ter \cup \Gamma_{X} )^*
\, \, \vline \,\, \exists \alpha \in \Delta^*,q_f \in F:
(q_0, \br, Z) \moverel^* (q_f, \epsilon, \alpha) \}\,. 
\]
We denote the language of $A$ by $\mathcal R(A)$ to emphasize that it is a ref-language, and denote by $\refl(A)$ the set of all
ref-words in 
$\rlang(A)$ that are valid for $X$. 
The result of evaluating the spanner $\repspnr{A}$ on a document $\doc$ is then defined as 
\[
\repspnr{A}(\doc) \df \{ \mu^{\br} \, \, \vline \, \, \br \in  \refl(A) \cap \refl(\doc) \}\,.
\]

\begin{example}\label{ex:pda}
	We define the extraction $\PDA$ that maps a document $\doc$ into the set of $(\{x,y\}, \doc)$-mappings $\mu$ where $\mu(x)$ ends before $\mu(y)$ starts and their lengths are the same. The stack alphabet consists of the bottom symbol $\bot$ and $C$, and
	the transition function
	$\delta$ is described in Figure~\ref{fig:pda} where
	a transition from state $q$ to state $q'$ that is labeled with  $\tau, A/\ \gamma $ denotes that the automaton moves from state $q$ to state $q'$ upon reading $\tau$ with $A$ at the top of the stack, while replacing $A$ with $\gamma$.
	We can extend the automaton in a symmetric way such that it will represent the same spanner as that represented by the grammar
	$\grmr{disjEqLen}$ from Example~\ref{ex:glen}.
\end{example}

\begin{figure}
	\begin{center}
\begin{tikzpicture}[initial text={},->,>=stealth',
	,
	auto,
	node distance=3cm,
transform shape]	
	\node[state,initial] (q_0) {$q_0$};
	\node[state] (q_x) [right of=q_0] {$q_x$};
	\node[state] (q_{xy}) [right of=q_x] {$q_{xy}$};
	\node[state] (q_y) [right of=q_{xy}] {$q_y$};
	\node[state,accepting] (q_f) [right of=q_y] {$q_f$};
	
	\path (q_0) edge              node {$\scriptsize{\vop{x}, C/\ C}$} (q_x)
	(q_x) edge              node {$\scriptsize{\vcl{x}, C/\ C}$} (q_{xy})
	(q_{xy}) edge              node {$\scriptsize{\vop{y}, C/\ C}$} (q_y)
	(q_y) edge              node {$\scriptsize{\vcl{y}, \bot /\ \bot}$} (q_f)
	(q_0) edge      [loop above]        node {$\scriptsize{\Sigma, \bot/\  \bot}$} (q_0)
	(q_f) edge      [loop above]        node {$\scriptsize{\Sigma, \bot/\  \bot}$} (q_f)
	(q_x) edge [loop above]              node {$\scriptsize{\Sigma, C /\ C C}$} (q_x)
	(q_{xy}) edge   [loop above]           node {$\scriptsize{\Sigma, C /\ C}$} (q_{xy})
	(q_y) edge      [loop above]        node {$\scriptsize{\Sigma, C /\ \epsilon}$} (q_y);
\end{tikzpicture}
\caption{\label{fig:pda} Transition function of Example~\ref{ex:pda}}
\end{center}
\end{figure}

We say that a spanner $S$ is \emph{definable}  
by an extraction $\PDA$ $A$ if for every document $\doc$ it holds that $\repspnr{A}(\doc) = S(\doc)$. 
Treating the variable operations as terminal symbols enables us to use the equivalence of $\PDA$s and context-free grammars and conclude the following straightforward observation.
\def\propepda{The class of spanners definable by extraction grammars is equal to the class of spanners definable by extraction PDAs.}
\begin{proposition}\label{prop:epda}
	\propepda
\end{proposition}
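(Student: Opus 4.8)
The plan is to reduce the statement to the classical equivalence between context-free grammars and pushdown automata (as in Hopcroft et al.), applied verbatim over the extended alphabet $\Sigma \cup \Gamma_X$, together with the observation that the spanner semantics $\repspnr{\cdot}$ depends on the underlying object only through its ref-language.

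First I would record the following elementary fact: if $O_1$ and $O_2$ are two objects, each either an extraction grammar or an extraction $\PDA$, both associated with the same variable set $X$, and $\rlang(O_1) = \rlang(O_2)$, then $\repspnr{O_1} = \repspnr{O_2}$. Indeed, whether a ref-word $\br \in (\Sigma \cup \Gamma_X)^*$ is valid for $X$ is a property of $\br$ alone, so $\rlang(O_1) = \rlang(O_2)$ gives $\refl(O_1) = \refl(O_2)$; hence $\refl(O_i) \cap \refl(\doc)$ agrees for $i=1,2$ and every document $\doc$, and therefore $\repspnr{O_1}(\doc) = \{\mu^{\br} \mid \br \in \refl(O_1)\cap\refl(\doc)\} = \repspnr{O_2}(\doc)$.

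For one inclusion, given an extraction grammar $G$ associated with $X$, I would regard it as an ordinary context-free grammar $G'$ over the terminal alphabet $\Sigma \cup \Gamma_X$ (the variable operations being just extra terminals). By the standard grammar-to-$\PDA$ construction there is a $\PDA$ $A'$ over $\Sigma \cup \Gamma_X$ with $\mathcal{L}(A') = \mathcal{L}(G') = \rlang(G)$; reading $A'$ as an extraction $\PDA$ $A$ associated with $X$ gives $\rlang(A) = \rlang(G)$, and the fact above yields $\repspnr{A} = \repspnr{G}$. For the converse, given an extraction $\PDA$ $A$ associated with $X$, I would regard it as an ordinary $\PDA$ over $\Sigma \cup \Gamma_X$ (after the routine normalization of the acceptance mode), apply the standard $\PDA$-to-grammar construction to obtain a context-free grammar $G'$ over $\Sigma \cup \Gamma_X$ with $\mathcal{L}(G') = \mathcal{L}(A) = \rlang(A)$, and read $G'$ as an extraction grammar $G$ associated with $X$; again $\rlang(G) = \rlang(A)$ and the fact above gives $\repspnr{G} = \repspnr{A}$.

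There is no real obstacle here: the only points requiring a line of care are (i) that ``valid for $X$'' is intrinsic to a ref-word, so that the two translations — which a priori preserve only the full ref-language over $\Sigma \cup \Gamma_X$ — automatically preserve the set of valid ref-words and hence the induced spanner; and (ii) matching the acceptance convention of extraction $\PDA$s (acceptance by final state with arbitrary remaining stack content) to whichever variant of the classical constructions one invokes, which is standard textbook bookkeeping.
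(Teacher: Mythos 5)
Your proposal is correct and follows essentially the same route as the paper, which justifies the proposition precisely by treating the variable operations as ordinary terminal symbols and invoking the classical equivalence of context-free grammars and $\PDA$s over the extended alphabet $\Sigma \cup \Gamma_X$, the spanner semantics depending only on the generated ref-language. Your explicit remarks on validity being intrinsic to ref-words and on the acceptance convention are exactly the routine bookkeeping the paper leaves implicit in calling this a straightforward observation.
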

Thus, there is also an automata formalism for defining context-free spanners. 
We now present important classes of extraction grammars for which we later present our evaluation algorithm. 

\subsection{Functional Extraction Grammars}

Freydenberger and Holldack~\cite{DBLP:conf/icdt/FreydenbergerH16}
have presented the notion of \emph{ functionality} in the context of regular spanners. We now extend it to extraction grammars. 
The intuition is that interpreting an extraction grammar as a spanner disregards  ref-words that are not valid.
We call an extraction grammar $G$ \emph{functional} if every ref-word in $\mathcal{R}(G)$ is valid.
\begin{example} \label{ex:lentofunc}
	The grammar $\grmr{disjEqLen}$ in our running example is not functional. Indeed, we saw in Example~\ref{ex:produce} that the ref-word $\br_1$, although it is not valid, is in $\mathcal{R}(\grmr{disjEqLen})$.
	We can, however, simply modify the grammar to obtain an equivalent functional one. 
	Notice that the problem arises due to the production rules 
	$
	S \rightarrow B  \vop{x}   A  \vcl{y} B $ and
	$S \rightarrow B  \vop{y}   A  \vcl{x} B$.
	For the non-terminal $A$ we have $A\Rightarrow^* \br_1$ where $\br_1$ 
	contains both $\vcl{x}$ and $\vop{y}$, and we also have $A\Rightarrow^* \br_2$ where $\br_2$ 
	contains both  $\vcl{y}$ and $\vop{x}$.
	To fix that, 
	we can replace the non-terminal $A$ with two non-terminals, namely $A_1$ and $A_2$, and change the production rules so that for every ref-word $\br$, if $A_1\Rightarrow^* \br$ then  $\br$ 
	contains both $\vcl{x}$ and $\vop{y}$, and if 
	$A_2\Rightarrow^* \br$ then  $\br$ 
	contains both $\vcl{y}$ and $\vop{x}$.
	It can be shown that  the grammar $G$ whose production rules appear in Figure~\ref{fig:rules} is   functional and that $\repspnr{G} = \repspnr{\grmr{disjEqLen}}$.
\end{example}

\begin{figure}
	\centering
	\begin{tabular}[scale=1]{l}
		$\scriptsize{S \rightarrow B  \vop{x}   A_1  \vcl{y} B\, \vl\,  B  \vop{y}   A_2  \vcl{x} B} $ \\
		$\scriptsize{A_i \rightarrow \ta  A_i  \ta \,\vl  \ta A_i  \tb\, \vl \, \tb  A_i  \tb \,\vl \, \tb  A_i  \ta,\,\, i=1,2}$
		\\
		$\scriptsize{A_1 \rightarrow\, \vcl{x} B \vop{y}}$ , \, 
		$\scriptsize{A_2 \rightarrow \, \vcl{y} B \vop{x}}$
		\\
		$\scriptsize{B \rightarrow \epsilon\, \vl\, \ta B\, \vl\, \tb B} $
	\end{tabular}
	\hfill
	\caption{\label{fig:rules}Productions  of Example~\ref{ex:lentofunc}}
\end{figure}

Recall that a context-free grammar is said to be in \emph{Chomsky Normal Form (CNF)} if all of its production rules are of the form $A\rightarrow BC$ or $A\rightarrow \ta$ where $A,B,C$ are non-terminals and $\ta$ is a terminal.
We extend this notion to extraction grammars. We say that an extraction grammar is in CNF if it is in CNF when viewed as a grammar over the extended alphabet $\Sigma \cup \Gamma_X$. 

\def\propfunc{Every extraction grammar $G$  can be converted into an equivalent functional extraction grammar $G'$
	in  $O(|G|^2+3^{2k}|G|^2)$ time where $k$ is the number of variables $G$ is associated with. In addition, $G'$ is in CNF.
}
\begin{proposition}~\label{prop:func}
	\propfunc
\end{proposition}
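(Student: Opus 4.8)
The plan is a product construction that attaches to every non-terminal of $G$ a small \emph{signature} recording which variable operations occur in the ref-words it can derive, so that the only ref-words surviving in the new grammar are the valid ones. Fix $X=\{x_1,\dots,x_k\}$ and call a \emph{signature} a map $\sigma\colon X\to\{0,1\}\times\{0,1\}$; the pair $\sigma(x)=(p,q)$ is meant to say that the yield contains exactly $p$ occurrences of $\vop{x}$ and exactly $q$ of $\vcl{x}$, where $(p,q)=(1,1)$ additionally records that $\vop{x}$ precedes $\vcl{x}$ (derivations that would create a second occurrence of some operation, or place a close strictly before its open, will simply have no annotated counterpart). I would set $G'\df(X,\nter\times\mathrm{Sig},\Sigma,\prodrules',(\initvar,\sigma^{\star}))$, where $\mathrm{Sig}$ is the set of signatures, $\sigma^{\star}(x)=(1,1)$ for all $x$, and $\prodrules'$ contains, for each rule $A\to Y_1\cdots Y_m$ of $G$ and each tuple $(\sigma_1,\dots,\sigma_m)$ whose left-to-right ``concatenation'' is \emph{defined} — i.e.\ for every $x$ the opens and closes contributed by $Y_1,\dots,Y_m$ total at most one open, at most one close, and no close strictly before an open — the production $(A,\sigma)\to(Y_1,\sigma_1)\cdots(Y_m,\sigma_m)$, with $\sigma$ that concatenation and each symbol $Y_i\in\Sigma\cup\Gamma_X$ assigned its obvious fixed signature (all-zero for a letter, one open for $\vop{x}$, one close for $\vcl{x}$).

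For correctness I would prove, by induction on derivation length, that $(A,\sigma)\prodrel^*_{G'}\br$ holds for $\br\in(\Sigma\cup\Gamma_X)^*$ exactly when $A\prodrel^*_{G}\br$ and $\br$ matches $\sigma$ (each $x$ with $\sigma(x)=(p,q)$ has exactly $p$ opens and $q$ closes, and when $(p,q)=(1,1)$ the open comes first). The only genuinely nontrivial point is the relative order of $\vop{x}$ and $\vcl{x}$: if both lie in the yield of the same $Y_i$ this is covered by the inductive hypothesis for $(Y_i,\sigma_i)$, and if they lie in the yields of distinct $Y_i,Y_j$ then a defined concatenation forces $i<j$, so the order is fixed by position — hence one bit per operation per variable suffices and no global ordering data need be stored. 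Instantiating $A=\initvar$, $\sigma=\sigma^{\star}$ yields $\mathcal{R}(G')=\{\br\in\mathcal{R}(G)\mid \br\text{ valid for }X\}=\refl(G)$. So $G'$ is functional, $\refl(G')=\mathcal{R}(G')=\refl(G)$, hence $\refl(G',\doc)=\refl(G,\doc)$ and $\repspnr{G'}(\doc)=\repspnr{G}(\doc)$ for every document $\doc$, which is the asserted equivalence.

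For the running time, note that $|\mathrm{Sig}|=4^k$ (or $9^k=3^{2k}$ if one also keeps a ``too many'' state per operation), so $G'$ has $O(3^{2k}\,|\nter|)$ non-terminals. To bound the number of productions I would first normalize $G$ so that every right-hand side has length at most two — a standard transformation computable in $O(|G|)$ time that blows up the size by only a constant factor and introduces $O(|G|)$ fresh non-terminals; the $O(|G|^2)$ summand in the claimed bound absorbs this step together with a final pass removing useless non-terminals from $G'$. After normalization each rule $A\to Y_1Y_2$ contributes at most $3^{2k}$ annotated rules, since for every variable one independently picks one of at most nine admissible pairs $(\sigma_1(x),\sigma_2(x))$; summing over the $O(|G|)$ normalized rules gives $|\prodrules'|=O(3^{2k}|G|)$ productions, all writable in that much time. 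Altogether the construction runs in $O(|G|^2+3^{2k}|G|)$.

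The main obstacle I anticipate is precisely the ordering subtlety above: one wants each non-terminal tagged with only $O(1)$ data per variable, yet must still exclude every ref-word in which some $\vcl{x}$ precedes its $\vop{x}$; the resolution is that in a parse tree the order of two leaves is decided by the topmost production that separates them, so order can be checked rule-by-rule rather than remembered in the signature. A secondary point needing care is the production count: without first flattening long right-hand sides, a single rule of length $m$ could on its own spawn $\Theta(m^{2k})$ annotated rules and break the bound, so the length-two normalization is not merely cosmetic.
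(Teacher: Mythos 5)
Your construction is correct and essentially the paper's own proof: the paper likewise annotates each non-terminal with the set of variable operations occurring in its yield (your per-variable signatures are an equivalent encoding), requires disjointness plus the ``no close on the left with its open on the right'' condition at each binary rule, takes $(\initvar,\Gamma_X)$ as start symbol, and counts $3^{2k}$ annotated copies per rule to get $O(|G|^2+3^{2k}|G|)$. The only cosmetic difference is that the paper first converts $G$ to full CNF (so that $G'$ is again in CNF, which it uses later) rather than your length-two normalization; this does not affect the stated bound.
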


\begin{proof}
	Let $G \df ( X, \nter, \ter ,\prodrules, \initvar)$ be an extraction grammar. 
	We start by converting $G$ to CNF with the standard algorithm presented, e.g., in~\cite[Section 7.1.5]{hop:int}:
		The algorithm consists of three steps:
	\begin{enumerate}
		\item  We omit $\epsilon$-productions, unit productions and useless symbols. 
		\item
		We arrange that all right-hand sides of length 2 or more consists only of variables by adding productions of the form $A\rightarrow \sigma$;
		\item
		We break right-hand sides of length 3 or more into a cascade of productions, each with a right-hand side consisting of two variables. This is done by replacing rules of the form $A \rightarrow A_1 \cdots A_n$ into the cascade $A\rightarrow A_1 B_1,\, B_1\rightarrow A_2 B_2,\,\cdots\,B_{n-2} \rightarrow A_{n-1} A_n$.
	\end{enumerate}
	
	By a slight abuse of notation, we refer to the resulting grammar by the same notation.

	We next define an extraction grammar $G'$ whose non-terminals indicate which variable operations they generate. 
	Formally, $G'$ is associated with $X$, its non-terminals are $X \times 2^{\Gamma_X}$, its terminals are $\ter$, its start symbol is $(S,\Gamma_X)$, and its production rules are the following:
	\begin{itemize}
		\item 
		$(A,X_1)\rightarrow (B,X_2)(C,X_3)$ whenever \begin{itemize}
			\item $A\rightarrow BC \in \prodrules$,
			\item  $X_1  = X_2 \cup X_3$, and
			\item for every $x\in X$ if $\vcl{x} \in X_2$ then $\vop{x}\not\in X_3$.
		\end{itemize} 
		\item 	$(A,\emptyset)\rightarrow \sigma$ whenever  $A\rightarrow \sigma \in \prodrules$ with $\sigma \in \Sigma$; and
		\item
		$(A,\{\tau\})\rightarrow \tau$ whenever  $A\rightarrow \tau \in \prodrules$ with $\tau \in \Gamma_X$.
	\end{itemize}  
We note that $G'$ is in CNF, and show now that 
 $G'$ is both functional and equivalent to $G$.
	
	\paragraph{$G'$ is functional} A straightforward induction shows that whenever 
	$(S,\Gamma_X) \Rightarrow^* \alpha_1 \cdots \alpha_n$ with $\alpha_i$ either a pair $(A_i,X_i)$ or a terminal $\beta_i$, it holds that whenever $\vcl{x} \in X_i$, for every $j>i$ it holds that $\vop{x} \not \in X_j$ or $\beta_j \ne \vop{x}$. This allows us to conclude that in a terminal derivation $(S,\Gamma_X) \Rightarrow^* \alpha_1 \cdots \alpha_n$ of $G'$,
	it holds that $\alpha_1 \cdots \alpha_n$ is a valid ref-word.
	
	\paragraph{$G'$ is equivalent to $G$}
 Again, by induction on the length of production, we show that if $(S,\Gamma_X) \Rightarrow^*_{G'} \alpha_1 \cdots \alpha_n$ with $\alpha_i$  either a pair $(A_i,X_i)$ or  $\beta_i\in \Sigma$ then $S \Rightarrow^*_{G} \alpha'_1 \cdots \alpha'_n$ with $\alpha'_i = A_i$ whenever $\alpha_i\df(A_i,X_i)$, and $\alpha'_i = \beta_i$ whenever  $\alpha_i\df\beta_i$.
	 Thus, if $(S,\Gamma_X) \Rightarrow^*_{G'} \beta_1 \cdots \beta_n$ is a derivation for which $\beta_1 \cdots \beta_n \in (\Sigma \cup \Gamma_X)^*$ then so is $S \Rightarrow^*_{G} \beta_1 \cdots \beta_n$.
	 For the other direction, a straightforward induction shows that if  $A \Rightarrow^*_{G} \alpha'_1 \cdots \alpha'_n$ then $(A,Y) \Rightarrow^*_{G'} \beta_1 \cdots \beta_n$ 
	 where $\beta_i = \alpha'_i$ whenever $\alpha'_i\in \ter$, and $\beta_i = (\alpha'_i,Y_i)$ whenever $\alpha'_i \in \nter$ for some $Y_i \subseteq \Gamma_X$.
Thus, if  $S \Rightarrow^*_{G} \alpha'_1 \cdots \alpha'_n$ is a  derivation for which $\alpha'_1 \cdots \alpha'_n \in (\Sigma \cup \Gamma_X)^*$ then so is $(S,\Gamma_{X}) \Rightarrow^*_{G'}  \alpha'_1 \cdots \alpha'_n$.

	 \paragraph{Complexity}	
 For the complexity analysis we note that converting $G$ into CNF requires $O(|G|^2)$ and that the resulting grammar is of size $O(|G|^2)$. Since the
 number of ways to choose two disjoint subsets of $\Gamma_X$ is $3^{2k}$,
 the construction requires $O(|G|^2+ 3^{2k}|G|^2)$ where $G$ is associated with $k$ variables.
\end{proof}
We remark that if $G$ is in CNF, the complexity of converting it to $G'$ reduces to $O(3^{2k}|G|)$.

\subsection{Unambiguous Extraction Grammars}

A grammar $G$ is said to be unambiguous if every word it produces has a unique parse-tree.
We extend this definition to extraction grammars as follows. 
An extraction grammar $G$ is said to be \emph{unambiguous} if for every document $\doc$ and every $(X,\doc)$-mapping $\mu \in \repspnr{G}(\doc)$ it holds that
\begin{itemize}
	\item  there is a unique ref-word $\br$ for which $\mu^{\br} = \mu$, and 
	\item $\br$ has a unique parse-tree.
\end{itemize}
Unambiguous extraction grammars are less expressive than their ambiguous counterparts. Indeed, context-free grammars (which are extraction grammars associated with the empty set of variables) are less expressive than ambiguous context-free grammars~\cite{DBLP:books/daglib/0016921}.
\begin{example}
	The extraction grammar given in Example~\ref{ex:lentofunc} is not unambiguous since 
	it produces the ref-words $ \vop{x}\vcl{x} \vop{y} \vcl{y}$ and $\vop{y} \vcl{y} \vop{x} \vcl{x}$ that correspond to the same mapping.
	It can be shown that replacing the derivation $B\rightarrow \epsilon$ with $B\rightarrow \ta \vl \tb$ results in an unambiguous extraction grammar which is equivalent to $\grmr{disjEqLen}$ on any document different than $\epsilon$. 
	(Note however that this does not imply that the ref-languages both grammars produce are equal.)
\end{example}
Our enumeration algorithm for extraction grammars presented in Section~\ref{sec:en}
relies on unambiguity and the following observation.
\def\prepunambig{In Proposition~\ref{prop:func}, if $G$ is unambiguous then so is $G'$.}
\begin{proposition}\label{prop:unambig}
	\prepunambig
\end{proposition}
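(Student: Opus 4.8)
The plan is to trace through the construction underlying Proposition~\ref{prop:func} and check that each of its steps preserves unambiguity, in the sense defined for extraction grammars. Recall that unambiguity of an extraction grammar $G$ means: for every document $\doc$ and every mapping $\mu \in \repspnr{G}(\doc)$, there is exactly one ref-word $\br$ with $\clean(\br) = \doc$, $\br \in \refl(G)$, and $\mu^{\br} = \mu$, and moreover $\br$ has a unique parse tree in $G$. So there are two things to maintain: (i) no mapping is realized by two distinct valid ref-words, and (ii) no valid ref-word realizing a retained mapping has two distinct parse trees.

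First I would observe that the functionalization construction (as illustrated in Example~\ref{ex:lentofunc}, and in the same spirit as Freydenberger's product-with-subsets technique) refines each non-terminal $A$ of $G$ into annotated copies $(A, U)$, where $U \subseteq \Gamma_X$ records exactly which variable operations occur along any subderivation from $A$; productions are restricted so that this annotation is consistent, and the start symbol is restricted to the copy whose annotation is the full set $\Gamma_X$ arranged so that each $\vop{x}$ precedes the matching $\vcl{x}$. The key structural fact is that there is a \emph{derivation-preserving bijection} between parse trees of $G'$ and those parse trees of $G$ that happen to produce valid ref-words: stripping the annotations off the nodes of a $G'$-parse tree yields a $G$-parse tree of the same ref-word, and conversely a $G$-parse tree of a valid ref-word has a \emph{unique} consistent annotation (since the set of variable operations appearing below each node is determined by the tree). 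Hence $\refl(G') = \refl(G)$ and, crucially, for every valid ref-word $\br$ the map ``annotate'' is a bijection between $G$-parse trees of $\br$ and $G'$-parse trees of $\br$.

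From this bijection both conditions follow quickly. For (i): since $\refl(G') = \refl(G)$ and these grammars define the same spanner, the set of valid ref-words realizing a given $\mu$ on a given $\doc$ is identical for $G$ and $G'$; if $G$ has at most one such ref-word, so does $G'$. For (ii): if $\br \in \refl(G')$ realizes a retained mapping, then $\br$ is valid, so $G'$-parse trees of $\br$ biject with $G$-parse trees of $\br$, of which there is at most one by unambiguity of $G$; hence $\br$ has a unique parse tree in $G'$. The remaining step is to check that the subsequent CNF conversion (also part of Proposition~\ref{prop:func}) preserves unambiguity — this is the classical fact that Chomsky-normal-form transformation, done carefully (elimination of $\epsilon$-productions restricted to nullable non-terminals, unit-production elimination, and binarization with fresh non-terminals), preserves uniqueness of parse trees; treating $\Gamma_X$-symbols as ordinary terminals, the same argument applies verbatim, with the one caveat that $\epsilon$-elimination must not delete a ref-word that equals $\epsilon$ if such a ref-word is needed — but a valid ref-word is never $\epsilon$ when $X \neq \emptyset$, and the boundary case $X = \emptyset$ is the classical one.

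The main obstacle I anticipate is purely bookkeeping rather than conceptual: making the annotation argument airtight requires being precise that the annotation of each node is \emph{functionally determined} by the subtree rooted there (so that a $G$-parse tree of a valid ref-word lifts to exactly one $G'$-parse tree, not merely at least one), and that the productions of $G'$ were defined so as to forbid any inconsistent annotation — in particular that two different $\Gamma_X$-multisets below sibling subtrees cannot be ``glued'' in two ways. Once the bijection is pinned down, conditions (i) and (ii) are immediate, and the CNF step is standard; so I would spend the bulk of the write-up on stating the bijection cleanly and citing the classical CNF-preserves-unambiguity results (e.g.\ from Hopcroft et al.~\cite{DBLP:books/daglib/0016921}) for the last leg.
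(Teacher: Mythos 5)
Your proposal is correct and follows essentially the same route as the paper: split the construction of Proposition~\ref{prop:func} into the annotation (functionalization) step and the CNF conversion, argue that stripping/uniquely lifting annotations gives a parse-tree correspondence for the former, and invoke the classical argument (per~\cite{DBLP:books/daglib/0016921}) that each phase of the CNF transformation preserves unambiguity. Your write-up is in fact slightly more careful than the paper's, which only uses the stripping direction and leaves implicit the fact that the annotation at each node is determined by its subtree (and the order of the two steps --- the paper converts to CNF before annotating --- is immaterial here).
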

\begin{proof}
	Assume  that $G$ is unambiguous. 
	
	We first show that converting $G$ to CNF results in an 
	unambiguous extraction grammar.
	For that, let us recall the details of the algorithm presented in~\cite[Section 7.1.5]{hop:int} for converting a grammar to CNF.
	The algorithm consists of three steps:
	\begin{enumerate}
		\item \label{i1} We omit $\epsilon$-productions, unit productions and useless symbols. 
		\item \label{i2}
		We arrange that all bodies of length 2 or more consists only of variables;
		\item \label{i3}
		We break bodies of length 3 or more into a cascade of productions, each with a body consisting of two variables.
	\end{enumerate}
	In step \ref{i1}, we omit productions and thus if $G$ is unambiguous then so is the resulting grammar. 
	In step \ref{i2}, we add productions of the form $A\rightarrow \sigma$ but these do not affect unambiguity as these $A$s are fresh non-terminals and for each $\sigma$ we have a unique such $A$.
	In step \ref{i3}, we replace rules of the form $A\rightarrow A_1 \cdots A_n$ into the cascade $A\rightarrow A_1 B_1, B_1\rightarrow A_2 B_2, \cdots B_{n-2}\rightarrow A_{n-1} A_n$ where all $B_{i}$s are fresh non-terminals. Thus, this step as well does not introduce ambiguity.
	
It is straightforward that the conversion to $G'$ preserves unambiguity, which completes the proof. 
\end{proof}

\section{Expressive Power and Closure Properties}\label{sec:exp}
In this section we compare the expressiveness of context-free spanners compared to other studied classes of spanners and discuss  its evaluation.

\subsection{Regular Spanners}\label{sec:reg}
The most studied language for specifying spanners is that of the regular spanners which are those definable by a nondeterministic
finite-state automaton that can open and close variables while running.
Formally, a \emph{variable-set automaton} $A$ (or \emph{vset-automaton}, for short)  
is 
a tuple
$A\df (X, Q,q_0,q_f,\delta)$ where  $X\subseteq \vars$ is a finite set of variables also referred to as $\vars(A)$, 
$Q$ is the set of \emph{states}, $q_0,q_f\in Q$ are the \emph{initial} and
the \emph{final} states, respectively, and 
$\delta\colon
Q\times(\alphabet\cup\{\epsilon\}\cup\Gamma_X)\to 2^{Q}$ is the
\emph{transition function}. 
The semantics of $A$ is defined by interpreting $A$ as a non-deterministic finite state automaton
over the extended alphabet 
$\alphabet\cup\Gamma_X$,
and defining $\rlang(A)$ as the set of all
ref-words $\br\in (\alphabet\cup\Gamma_X)^*$ such that some path
from $q_0$ to $q_f$ is labeled with $\br$. Formally, 
\[
\rlang{(A)} \df
\{ \br \in (\ter \cup \Gamma_{X} )^*
\, \, \vline \,\, q_f \in \delta(q_0,\br) \}\,. 
\]
Similarly for regex formulas, we define $\refl(A,\doc)=\rlang(A)\cap \refl(\doc)$. Finally, the result of applying the spanner $\repspnr{A}$ on a document $\doc \in \Sigma^*$ is defined as 
\[
\repspnr{A}(\doc) \df \{\mu^{\br} \,\, \vline \,\, \br \in \refl(A,\doc) \}.
\]
The class of \emph{regular spanners} equals the class of spanners that are expressible as a vset-automaton~\cite{DBLP:journals/jacm/FaginKRV15}.

Inspired by Chomsky's hierarchy, we say that an extraction grammar $G$ is \emph{regular} if its productions are of the form $A \rightarrow \sigma B$ and $A \rightarrow \sigma$ where $A,B$ are non-terminals and $\sigma \in (\Sigma \cup \Gamma_X)$. 
We then have the following equivalence that is strongly based on the  equivalence of regular grammars and finite state automata.
\def\thmregspnrs{The class of spanners definable by regular extraction grammars is equal to the class of regular spanners.}
\begin{proposition}\label{thm:regspnrs}
	\thmregspnrs
\end{proposition}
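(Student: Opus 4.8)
The plan is to establish both inclusions by transferring the classical equivalence between regular grammars and nondeterministic finite-state automata to the extended alphabet $\Sigma \cup \Gamma_X$, and then checking that the spanner semantics is preserved under this transfer. The key observation is that both regular extraction grammars and vset-automata are defined by first producing a ref-language $\rlang(\cdot) \subseteq (\Sigma \cup \Gamma_X)^*$ and then applying the \emph{same} operation: intersecting with $\refl(\doc)$ and reading off the induced mappings $\mu^{\br}$. Hence, to prove $\repspnr{G} = \repspnr{A}$ it suffices to show $\rlang(G) = \rlang(A)$ as ref-languages. So the proof reduces entirely to a statement about formal languages over the alphabet $\Sigma \cup \Gamma_X$, with no spanner-specific content beyond this reduction.

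For the first direction, given a regular extraction grammar $G = (X, \nter, \ter, \prodrules, \initvar)$ whose productions have the form $A \to \sigma B$ or $A \to \sigma$ with $\sigma \in \Sigma \cup \Gamma_X$, I would build a vset-automaton $A = (X, Q, q_0, q_f, \delta)$ in the textbook way: take $Q \df \nter \cup \{q_f\}$ with $q_0 \df \initvar$ and a fresh final state $q_f$; for each production $A \to \sigma B$ put $B \in \delta(A, \sigma)$, and for each production $A \to \sigma$ put $q_f \in \delta(A, \sigma)$. One then checks by a straightforward induction on derivation length (equivalently, on path length) that $\initvar \prodrel^* \br$ with $\br \in (\Sigma\cup\Gamma_X)^*$ holds iff there is a path from $q_0 = \initvar$ to $q_f$ labeled $\br$; this is exactly the classical argument, and treating the elements of $\Gamma_X$ as ordinary letters does not affect it at all. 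Thus $\rlang(G) = \rlang(A)$, and therefore $\repspnr{G} = \repspnr{A}$, so $G$ defines a regular spanner. (One minor wrinkle: the definition of regular extraction grammar as stated does not allow an $\epsilon$-production for the start symbol, so the empty ref-word may need separate handling, but $\epsilon$ is never a valid ref-word for a nonempty $X$ — and if $X = \emptyset$ the statement is just the classical one — so this does not affect $\repspnr{\cdot}$.)

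For the converse, given a vset-automaton $A = (X, Q, q_0, q_f, \delta)$, I would first eliminate $\epsilon$-transitions using the standard construction for NFAs over the alphabet $\Sigma \cup \Gamma_X$, preserving $\rlang(A)$, and then apply the classical NFA-to-right-linear-grammar translation: introduce a non-terminal $N_q$ for each state $q$, set $\initvar \df N_{q_0}$, add $N_q \to \sigma N_{q'}$ for every $q' \in \delta(q,\sigma)$, and add $N_q \to \sigma$ whenever $\sigma$ leads from $q$ into an accepting state; again an induction on path length gives $\rlang(G) = \rlang(A)$, hence $\repspnr{G} = \repspnr{A}$. I would remark that since $A$ has a single final state $q_f$ one can also add $N_{q_f} \to \epsilon$ and allow this one $\epsilon$-production, or absorb it as above.

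There is no serious obstacle here; the statement is essentially a bookkeeping exercise. The only point that needs a sentence of care is the \emph{reduction} step at the top — namely spelling out that the map $\rlang(\cdot) \mapsto \repspnr{\cdot}(\doc) = \{\mu^{\br} : \br \in \rlang(\cdot) \cap \refl(\doc)\}$ is identical for regular extraction grammars and for vset-automata, so that equality of ref-languages yields equality of spanners. Once that is observed, everything else is the standard Kleene/Chomsky correspondence with $\Gamma_X$ playing the role of additional terminal symbols, together with the footnote-level remarks about $\epsilon$-productions and the empty ref-word.
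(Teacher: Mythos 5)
Your proposal is correct, and for one direction it coincides with the paper's proof: your construction taking a regular extraction grammar to a vset-automaton (states are the non-terminals plus a fresh final state, transitions read off the productions, variable operations treated as ordinary letters) is exactly the construction the paper uses. For the converse direction your route is actually more on-target than the paper's written argument: the paper merely observes that the vset-automaton $A$ can be turned into an extraction \PDA{} $A'$ with $\rlang(A')=\rlang(A)$, which directly establishes only that regular spanners are context-free, and leaves the step back to a \emph{regular} extraction grammar implicit; you instead eliminate $\epsilon$-transitions and apply the standard NFA-to-right-linear-grammar translation, which literally produces a regular extraction grammar and hence proves the stated inclusion as written. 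Your preliminary reduction (equality of ref-languages implies equality of spanners, since both semantics intersect with $\refl(\doc)$ and apply $\br\mapsto\mu^{\br}$) and your remark about the empty ref-word are bookkeeping points the paper leaves implicit; the $\epsilon$ edge case only matters when $X=\emptyset$ and on the document $\epsilon$, and the paper silently ignores it as well, so flagging it is fine and does not indicate a gap in your argument.
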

\begin{proof}	
	Let us consider a regular spanner definable by a vset-automaton $A$. Since every regular language is also context-free, we can construct a pushdown automaton $A\p$ such that $\mathcal{R}(A\p) = \mathcal{R}(A)$, which allows us to conclude that  $\repspnr{A} = \repspnr{A\p}$ 
	
	For the other direction we have to show that every spanner definable by a regular extraction grammar is regular. 
	Let $G \df (X,\nter, \ter, \prodrules, \initvar)$ be a regular extraction grammar. 
	We adapt the standard conversion of regular grammars into non-deterministic automata (cf~\cite{DBLP:books/daglib/0016921}) to our settings by treating variable operations as terminal symbols.
	Formally, the automaton is defined as follows. Its set of states consists of $q_A$ for every non-terminal $A \in \nter$, and a fresh final state $q_f$. We set $q_{\initvar}$ as the initial state and define the transition function $\delta$ as follows:
	for every $A \rightarrow \sigma B$ we set $\delta(q_A,\sigma) = q_B$, and for $A\rightarrow \sigma$ we set $\delta(q_A,\sigma) = q_f$.
	It holds that $\mathcal{R}(G) = \mathcal{L}(A)$, which allows us to conclude that $\repspnr{G}(\doc)  = \repspnr{A}(\doc)$ for every document $\doc$.
\end{proof}

\subsection{(Generalized) Core Spanners} 
In their efforts to capture the core of AQL which is IBM's SystemT query language, Fagin et al.~\cite{DBLP:journals/jacm/FaginKRV15} defined the class of core spanners. 
Core spanners
extend regular spanners with the string-equality selection. To define them properly, we present an 
 alternative way of defining regular spanners that is based on the notion of regex formulas:
A \emph{regex formula}
is defined recursively by
\[ \alpha\df
\hspace{3.5pt} \emptyset\mid \epsilon\mid \sigma \mid \alpha \vee
\alpha\mid \alpha \cdot \alpha \mid \alpha^* \mid \, \vop{x}  \alpha \vcl{x}
\]
where $\sigma \in \Sigma$ and $x\in \vars$.
We 
denote the set of variables whose variable operations occur in $\alpha$ by $\vars(\alpha)$,  
and interpret each regex formula $\alpha$ as a generator of a 
ref-word
language $\rlang(\alpha)$ over the extended alphabet
$\alphabet\cup\Gamma_{\vars(\alpha)}$. 
For every
document $\doc \in \alphabet\wild$, we define
$\refl(\alpha,\doc)=\rlang(\alpha)\cap \refl(\doc)$, and
the spanner $\repspnr{\alpha}$ 
by
\[\repspnr{\alpha}(\doc)\df
\{\mu^\br \, \, \vline \, \, \br \in \refl(\alpha,\doc)\}
.\]

The class of regular spanners is then defined as the closure of (spanners defined by) regex formulas under the algebraic operators: union, projection and natural join~\cite{DBLP:journals/jacm/FaginKRV15}.
Let $P$, $P_1$ and $P_2$ be spanners; the above operators are defined as follows:

\paragraph{Union} If $\vars(P_1) = \vars(P_2)$, their \emph{union}
$(P_1 \cup P_2)$ is defined by $\vars(P_1 \cup P_2) \df \vars(P_1)$
and $(P_1 \cup P_2)(\doc) \df P_1(\doc) \cup P_2(\doc)$ for all
$\doc\in\alphabet\wild$.

\paragraph{Projection} Let $Y \subseteq \vars(P)$. The
\emph{projection} $\pi_Y P$ is defined by $\vars(\pi_Y P) \df Y$ and
$\pi_Y P(\doc) \df {P|}_{Y}(\doc)$ for all $\doc \in
\alphabet\wild$, where ${P|}_{Y}(\doc)$ is the restriction of all
$\mu\in P(\doc)$ to $Y$.

\paragraph{Natural join} Let $V_i \df \vars(P_i)$ for $i \in
\{1,2\}$. The \emph{(natural) join} $(P_1 \join P_2)$ of $P_1$ and
$P_2$ is defined by $\vars(P_1 \join P_2) \df \vars(P_1) \cup
\vars(P_2)$ and, for all $\doc \in \alphabet\wild$, $(P_1\join
P_2)(\doc)$ is the set of all $(V_1 \cup V_2, \doc)$-records~$\mu$
for which there exist $\mu_1\in P_1(\doc)$ and $\mu_2\in P_2(\doc)$
with ${\mu|}_{V_1}(\doc) = \mu_1(\doc)$ and ${\mu|}_{V_2}(\doc) =
\mu_2(\doc)$.

The class of core spanners  
is the closure of regex formulas under the positive operators, (i.e., union, natural join and projection) along 
with the string equality selection that is defined as follows:

\paragraph*{String-equality selection} 
Let $P$ be a spanner and let $x,y \in \vars(S)$, the \emph{string equality selection}
$\sel^=_{x,y} S$ is defined by $\vars(\sel^=_{x,y} S) = \vars(S)$ and, for all $\doc \in \Sigma^*$, $\sel^=_{x,y} S(\doc)$ is the set of all $\mu\in
S(\doc)$ where $\doc_{\mu(x)}=\doc_{\mu(y)}$.

Note that unlike the join operator that joins mappings that have identical spans in
their shared variables, the selection operator compares
the substrings of $\doc$ that are described by the spans, and does
not distinguish between different spans that span the same substrings.

The class of \emph{generalized core spanners} is obtained by adding the difference operator. 
\paragraph{Difference} Let $P_1,P_2$ be spanners. If $\vars(P_1)=\vars(P_2)$, their \emph{difference} $P_1 \diff P_2$ is defined by $\vars(P_1 \diff P_2) = \vars(P_1)$ and $(P_1 \diff P_2)(\doc) = P_1(\doc) \diff P_2(\doc)$.

Generalized core spanners are the closure of regex formulas under union, natural join, projection, string equality, and difference.  

We say that two classes $\mathcal{S},\mathcal{S}\p$ of spanners are \emph{incomparable} if both $\mathcal{S} \setminus \mathcal{S}\p$ and $\mathcal{S}\p \setminus \mathcal{S}$ are not empty.
\def\propincomp{The classes of core spanners and generalized core spanners are each incomparable with the class of context-free spanners.}
\begin{proposition}\label{prop:core}
	\propincomp
\end{proposition}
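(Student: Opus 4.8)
The plan is to prove two statements: \textbf{(a)}~there is a context-free spanner that is not a generalized core spanner, and \textbf{(b)}~there is a core spanner that is not a context-free spanner. Since core spanners form a subclass of generalized core spanners, and since all three classes in question contain every Boolean spanner, statement~(a) yields $\mathrm{CF}\diff\mathrm{core}\neq\emptyset$ and $\mathrm{CF}\diff\mathrm{gCore}\neq\emptyset$, while statement~(b) yields $\mathrm{core}\diff\mathrm{CF}\neq\emptyset$ and $\mathrm{gCore}\diff\mathrm{CF}\neq\emptyset$; together these are exactly the four non-containments that incomparability requires.

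For~(b), first I would take the Boolean spanner $S_{\mathrm{eq}}\df\pi_{\emptyset}\bigl(\sel^=_{x,y}\repspnr{\alpha}\bigr)$, where $\alpha\df\vop{x}(\ta\vee\tb)^*\vcl{x}\,\vop{y}(\ta\vee\tb)^*\vcl{y}$, which is a core spanner by construction. Unwinding the definitions, $\repspnr{\alpha}(\doc)$ consists of all mappings that send $x$ to a prefix of $\doc$ and $y$ to the complementary suffix, so after the string-equality selection and the projection onto the empty variable set, $S_{\mathrm{eq}}(\doc)\neq\emptyset$ if and only if $\doc=ww$ for some $w\in\Sigma^*$. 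On the other hand, a context-free spanner associated with the empty variable set is defined by an ordinary context-free grammar over $\Sigma$, so the set of documents it maps to a nonempty relation is a context-free language; since $|\Sigma|\ge 2$, the language $\{ww\mid w\in\Sigma^*\}$ is not context-free, and hence $S_{\mathrm{eq}}$ is not a context-free spanner.

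For~(a), I would take the Boolean context-free spanner $S_{ab}$ defined by the variable-free extraction grammar with productions $S\rightarrow\ta S\tb\mid\epsilon$, whose set of documents mapped to a nonempty relation is $\{\ta^n\tb^n\mid n\ge 0\}$. To show that $S_{ab}$ is not a generalized core spanner I would adapt the pumping-style inexpressibility argument of Fagin et al.~\cite{DBLP:journals/jacm/FaginKRV15} (in the strengthened form for generalized core spanners developed by Freydenberger and Holldack~\cite{DBLP:conf/icdt/FreydenbergerH16}): a generalized core spanner is obtained from a fixed finite vset-automaton by finitely many string-equality selections and a single difference, which imposes an ``affine pumping'' behaviour on sufficiently long documents. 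Applying this to $\doc=\ta^N\tb^N$ for large $N$, one can choose a pumpable factor lying entirely inside the block of $\ta$'s (or entirely inside the block of $\tb$'s); pumping it produces a word $\ta^{N'}\tb^{N}$ with $N'\neq N$ that is still mapped to a nonempty relation, contradicting that this set is $\{\ta^n\tb^n\mid n\ge 0\}$. (One could instead use the running grammar $\grmr{disjEqLen}$, or the equal-length ``$\ta^+$ then $\tb^+$'' spanner from the introduction, reducing to this Boolean case by joining with a suitable regular spanner and projecting out all variables, but $S\rightarrow\ta S\tb\mid\epsilon$ is the most economical witness.)

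The step I expect to be the main obstacle is the inexpressibility in~(a): unlike the situation for regular spanners and for plain core spanners, the difference operator has to be accounted for, and the delicate point is to guarantee that the pumpable factor can be chosen so that it neither straddles the boundary between the $\ta$-block and the $\tb$-block nor is ``neutralised'' by one of the string-equality selections, so that pumping genuinely leaves $\{\ta^n\tb^n\mid n\ge 0\}$. All the remaining ingredients --- identifying Boolean context-free spanners with context-free languages, the elementary closure properties of Boolean spanners used in~(b), and the ordinary pumping lemma for context-free languages --- are routine.
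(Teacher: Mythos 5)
Your skeleton coincides with the paper's: the same two witnesses are used in both arguments, namely the Boolean core spanner $\pi_{\emptyset}\,\sel^=_{x,y}\bigl(\vop{x}\Sigma^*\vcl{x}\vop{y}\Sigma^*\vcl{y}\bigr)$ defining $\{ww \mid w\in\Sigma^*\}$ (not context-free, hence not a context-free spanner, since Boolean extraction grammars define exactly the context-free languages), and the Boolean context-free spanner for $\{\ta^n\tb^n \mid n\ge 1\}$. Your direction (b) is exactly the paper's argument and is fine.

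The gap is in direction (a), and it is exactly the step you yourself flag as the main obstacle. Your argument rests on the claim that every generalized core spanner ``is obtained from a fixed finite vset-automaton by finitely many string-equality selections and a single difference,'' which then ``imposes an affine pumping behaviour.'' No such normal form is available: the core simplification lemma of Fagin et al.~\cite{DBLP:journals/jacm/FaginKRV15} applies only to core spanners (no difference), and in the generalized class the difference operator may be nested arbitrarily under projection, join and selection, so it cannot in general be pulled out as a single top-level difference. Nor do Freydenberger and Holldack~\cite{DBLP:conf/icdt/FreydenbergerH16} supply a pumping lemma for generalized core spanners; their analysis concerns core spanners and their relation to word equations. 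So the ``affine pumping'' you invoke is precisely the missing ingredient rather than a known tool, and your sketch does not establish that $\{\ta^n\tb^n\}$ fails to be a generalized core spanner language. The paper sidesteps this by citing known inexpressibility results: non-expressibility by core spanners from~\cite{DBLP:journals/jacm/FaginKRV15} and by generalized core spanners from~\cite{DBLP:journals/corr/abs-1912-06110}. Your pumping route, combined with the core simplification lemma, could plausibly be made to work for the core-spanner half (and indeed that is essentially how such results are proved in~\cite{DBLP:journals/jacm/FaginKRV15}), but for the generalized half you need either the cited result or a genuinely different argument; as written, that step would fail.
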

\begin{proof}
	We observe that extraction grammars
	associated with the empty set of variables
	are equivalent to context-free grammars and, hence, capture the context-free languages.

The class of core spanners 
	is not contained in the class of context-free spanners since the language $\{ww \,| \,w\in \Sigma^* \}$ is not context-free yet there is a Boolean core spanner $P$ that defines it~\cite{DBLP:journals/jacm/FaginKRV15} (that is, $P(w') \ne \emptyset$ if and only if $w'\in \{ww \,| \,w\in \Sigma^* \}$).
	This also shows that the class of generalized core spanners 
	is not contained in that of context-free spanners. 
	
	The class of context-free spanners is not contained in that of core spanners since
	the language $\{ \ta^n \tb^n | n \ge 1 \}$ is context free and yet it is not accepted by Boolean core spanners~\cite{DBLP:journals/jacm/FaginKRV15}, and not by generalized core spanners~\cite{DBLP:journals/corr/abs-1912-06110}.
\end{proof}

We conclude by an immediate result on closure properties.
\def\propclsrprp{	The class of context-free spanners is closed under union and projection,
	but is 
	not closed under natural join and difference.
}
\begin{proposition}\label{prop:clsrprp}
	\propclsrprp
\end{proposition}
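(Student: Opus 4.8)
The plan is to establish the two positive statements by direct grammar constructions and the two negative statements by reducing to classical non-closure properties of context-free languages through \emph{Boolean} (variable-free) spanners. The key observation underlying the negative part is that an extraction grammar $G$ associated with $X=\emptyset$ has $\Gamma_X=\emptyset$, hence is an ordinary context-free grammar, $\rlang(G)=\lang(G)\subseteq\Sigma^*$, every ref-word is vacuously valid, $\clean$ is the identity, and $\repspnr{G}(\doc)$ equals the singleton of the empty mapping when $\doc\in\lang(G)$ and $\emptyset$ otherwise. Thus Boolean context-free spanners are, under this identification, exactly the context-free languages, and each relational operator applied to Boolean spanners induces the corresponding set operation on the underlying languages; in particular, if the Boolean spanner whose language is $L$ is a context-free spanner, then $L$ is context-free.

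\textbf{Union and projection.} For union, given $\repspnr{G_1},\repspnr{G_2}$ associated with the same set $X$, I would rename non-terminals so that $G_1$ and $G_2$ have disjoint non-terminal sets, add a fresh start symbol $S$ with the rules $S\to S_1\mid S_2$, and call the result $G$. Then $\rlang(G)=\rlang(G_1)\cup\rlang(G_2)$, and since the validity of a ref-word and the value of $\clean$ on it depend only on that ref-word, $\refl(G,\doc)=\refl(G_1,\doc)\cup\refl(G_2,\doc)$, whence $\repspnr{G}=\repspnr{G_1}\cup\repspnr{G_2}$. For projection onto $Y\subseteq X$, I would first use Proposition~\ref{prop:func} to assume $G$ functional, then form $G'$ by deleting from every right-hand side all occurrences of $\vop{x}$ and $\vcl{x}$ with $x\in X\setminus Y$, so that $\rlang(G')=h(\rlang(G))$ where $h$ erases $\Gamma_{X\setminus Y}$ and fixes all other symbols. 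Functionality is what makes this correct: each $\br\in\rlang(G)$ is valid for $X$, hence $h(\br)$ is valid for $Y$ and no spurious mappings are introduced; moreover $\clean(h(\br))=\clean(\br)$, and $\mu^{h(\br)}$ is the restriction of $\mu^{\br}$ to $Y$ because deleting variable operations does not move any terminal. Therefore $\repspnr{G'}(\doc)=\{\mu\restriction Y:\mu\in\repspnr{G}(\doc)\}$.

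\textbf{Non-closure under natural join and difference.} For natural join, take variable-free grammars $G_1,G_2$ with $\lang(G_1)=\{\ta^n\tb^n\tc^m:n,m\ge0\}$ and $\lang(G_2)=\{\ta^n\tb^m\tc^m:n,m\ge0\}$ (over a three-letter alphabet; a two-letter alphabet works after a routine encoding). Since $\vars(\repspnr{G_1})=\vars(\repspnr{G_2})=\emptyset$, the join $\repspnr{G_1}\join\repspnr{G_2}$ is the Boolean spanner whose language is $\lang(G_1)\cap\lang(G_2)=\{\ta^n\tb^n\tc^n:n\ge0\}$, which is not context-free; by the identification above this spanner is not definable by any extraction grammar. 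For difference, let $U$ be the universal Boolean spanner (grammar $S\to\ta S\mid\tb S\mid\tc S\mid\epsilon$) and let $G$ be a variable-free grammar for the complement of $\{ww:w\in\Sigma^*\}$, which is context-free while $\{ww:w\in\Sigma^*\}$ itself is not~\cite{DBLP:books/daglib/0016921}; then $U\setminus\repspnr{G}$ is the Boolean spanner whose language is $\{ww:w\in\Sigma^*\}$, which is not context-free. (Alternatively, since $A\cap B=A\setminus(A\setminus B)$, closure under difference would already force closure under intersection, contradicting the join argument.)

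\textbf{Main obstacle.} Essentially no step here is hard: union and projection are routine context-free manipulations, and the negative results are immediate once Boolean context-free spanners are identified with context-free languages. The single point that needs care is that the projection construction must be preceded by a functionalization (Proposition~\ref{prop:func}): erasing the variable operations of $X\setminus Y$ directly from a non-functional grammar could turn previously invalid ref-words into valid ones and thereby add mappings that do not belong to the projected spanner.
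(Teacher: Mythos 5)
Your proof is correct and follows essentially the same route as the paper: union via a fresh start symbol over disjoint copies of the two grammars, projection by erasing the variable operations of the discarded variables, and the negative results by identifying Boolean (variable-free) extraction grammars with ordinary context-free grammars and invoking non-closure of context-free languages under intersection and complement (the paper phrases this abstractly via Boolean pushdown automata, you via explicit witnesses such as $\{\ta^n\tb^n\tc^n \mid n\ge 0\}$ and $\{ww \mid w\in\Sigma^*\}$). The one genuine difference is your insistence on first functionalizing the grammar (Proposition~\ref{prop:func}) before erasing variable operations in the projection step: the paper's own proof applies the erasure directly without this precaution, and your point that a ref-word invalid only because of the projected-away variables could become valid after erasure and introduce spurious mappings is a real subtlety the published argument glosses over, so your version is the more careful one.
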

\begin{proof}
	
	To show closure under projection, 
	let $G \df ( X,\nter, \ter ,\prodrules, \initvar) $ be an extraction grammar, and let $X\p \subseteq X$.
	We define a morphism $\clean_{X\setminus X'}: (\Sigma \cup \nter \cup \Gamma_X) \rightarrow (\Sigma \cup \nter \cup \Gamma_{ X'})^*$ by 
	\[\clean_{X\setminus X'}(\alpha) \df \left\{\begin{matrix}
	 \alpha	& \alpha \in \Sigma \cup \Gamma_{X'}\\ 
	\epsilon	& \alpha \in \Gamma_{X \setminus X'} 
	\end{matrix}\right.\]	
	We then define the grammar $G\p = ( X',\nter, \ter ,\prodrules\p, \initvar)$ with $\prodrules\p $ defined as follows: for every $A\rightarrow \alpha \in \prodrules$ we have $A \rightarrow \clean_{X\setminus X'}(\alpha) \in \prodrules\p$.
	It is straightforward  that $\repspnr{G\p} = \repspnr{\pi_{X\p} G_X}$.
	
	Closure under union can be obtained straightforwardly. 
	Let $G_1,G_2$ be two extraction grammars with $\vars(G_1)=\vars(G_2)$.
	We can construct a grammar $G$ with $\mathcal{R}({G}) = \mathcal{R}(G_1) \cup \mathcal{R}(G_2)$ by defining its production rules as the union of those of $G_1$ and $G_2$ with the addition of productions $S \rightarrow S_1, S\rightarrow S_2$ where $S$ is a fresh non-terminal who is the start symbol of $G$, and $S_1,S_2$ are the start symbols of $G_1,G_2$, respectively.

	Non-closure properties are an immediate consequence of non-closure properties of context-free languages.
\end{proof}

\section{Evaluation and Enumeration}
The \emph{evaluation} problem of extraction grammars is that of computing $\repspnr{G}(\doc)$ where $\doc$ is a document and $G$ is an extraction grammar.
Our first observation is the following.

\def\proppoly{For every extraction grammar $G$ and every document $\doc$ it holds that $\repspnr{G}(\doc)$ can be computed in $O(|G|^2+|\doc|^{2k+3}\,k^3\,k!\,|G|)$ time where $k$ is the number of variables $G$ is associated with.}
\begin{proposition}\label{prop:poly}
	\proppoly
\end{proposition}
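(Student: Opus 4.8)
The plan is to reduce the evaluation problem to a membership/parsing question over the extended alphabet $\Sigma \cup \Gamma_X$, and then apply a CYK-style dynamic-programming argument. First I would bring $G$ into a convenient normal form: by Proposition~\ref{prop:func} we may assume $G$ is functional and in CNF, paying $O(|G|^2 + 3^{2k}|G|)$ time. Note that $3^{2k}$ is dominated by the $|\doc|^{2k+3}$ term in the claimed bound once $|\doc| \geq 1$ (and the $\epsilon$-document is a trivial special case), so this preprocessing fits inside the stated running time. After this step every ref-word produced by $G$ is valid for $X$, so $\refl(G,\doc) = \rlang(G) \cap \refl(\doc)$ simplifies to: all ref-words $\br$ with $S \Rightarrow^*_G \br$ and $\clean(\br) = \doc$.

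The key step is the observation that the set of candidate ref-words $\br$ with $\clean(\br) = \doc$ is finite and ``small'' in a controlled sense: writing $\doc = \sigma_1 \cdots \sigma_n$, any such $\br$ is obtained by choosing, for each of the $n+1$ gaps between consecutive symbols (including the two ends), which variable operations from $\Gamma_X$ appear there and in which order. Since $G$ is functional/valid, each of the $2k$ operations $\vop{x}, \vcl{x}$ occurs exactly once, so a ref-word is determined by a function from $\Gamma_X$ to gap-positions together with an ordering inside each gap; the number of such ref-words — equivalently, the number of distinct $(X,\doc)$-mappings times the number of orderings — is $O(n^{2k} \cdot k^{O(1)})$ (there are at most $(n+1)^{2k}$ ways to place $2k$ markers in $n+1$ gaps, and within a gap the markers can be ordered in at most $(2k)!$ ways, but one only needs the mapping, not the ordering, for the output). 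The cleanest way to make this precise without enumerating ref-words explicitly is to run CYK directly on a ``product'' structure. I would build, implicitly, the grammar-times-document table: subproblems are indexed by a non-terminal $A$ of $G$ together with a sub-interval $\mspan{i}{j}$ of $\doc$ AND a record of which subset of variable operations has been consumed and with what bracketing, and the DP fills in, bottom-up over interval length, whether $A \Rightarrow^* \br$ for some $\br$ with $\clean(\br) = \doc_{\mspan{i}{j}}$ realizing that partial assignment. Because $G$ is in CNF, each table entry is computed from a binary split, i.e.\ in time proportional to $|G|$ times the number of split points ($O(n)$) times the number of ways to distribute variable operations across the two halves.

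Counting the table size: there are $O(|G|)$ non-terminals, $O(n^2)$ intervals, and $O(n^{2k})$ placements of the $2k$ variable markers across the gaps, for $O(|G| \cdot n^{2} \cdot n^{2k}) = O(|G|\,n^{2k+2})$ entries; each is filled in $O(n \cdot k^{O(1)})$ time (split point and redistribution bookkeeping), giving $O(|G|\, n^{2k+3} k^{O(1)})$, and collecting the $O(n^{2k})$ accepting entries at the start symbol to emit $\repspnr{G}(\doc)$ adds no more. Together with the $O(|G|^2)$ functional-conversion cost this yields the bound $O(|G|^2 + |\doc|^{2k+3} k^3 |G|)$ as stated (the exponent $3$ on $k$ just absorbs the polynomial bookkeeping per entry; one should double-check it is not larger, but $k^3$ is a safe target since the per-entry work is a constant number of set operations on subsets of the $2k$-element set $\Gamma_X$, each trivially $O(k)$, and there are $O(k)$ of them).

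The main obstacle is getting the dynamic-programming state right so that the exponent is exactly $2k+3$ and not worse: naively tracking an ordered sequence of consumed variable operations per interval blows up the state to $k!$ or worse, so one must argue that tracking only \emph{which gap} each marker lands in — equivalently, the partial $(X,\doc)$-mapping — suffices, and that the ordering within a gap is irrelevant to both the output mapping and (given functionality) to well-formedness. A secondary subtlety is handling $\epsilon$-productions and the placement of variable operations at interval boundaries consistently (the CNF conversion inside Proposition~\ref{prop:func} removes $\epsilon$-rules except possibly for $S$, which must be treated separately, exactly as in the classical CYK exposition). Once these bookkeeping points are settled the analysis is the standard CYK cubic-in-$n$ argument with an extra $n^{2k}$ factor for the variable assignments.
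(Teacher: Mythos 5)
Your proposal is correct in essence and rests on the same core reduction as the paper---treat the variable operations as terminal symbols and combine CYK-style parsing with an enumeration of the $O(|\doc|^{2k})$ placements of the $2k$ operations into the gaps of $\doc$---but it executes this differently. The paper simply iterates over all valid ref-words $\br$ with $\clean(\br)=\doc$ and, for each one, runs textbook CYK on the string $\br$ of length $|\doc|+2k$ against $G$ viewed as an ordinary CFG over $\Sigma\cup\Gamma_X$ (after an $O(|G|^2)$ CNF conversion); this keeps the analysis minimal, since nothing beyond CYK itself has to be verified. You instead fold the placement enumeration into a single product dynamic program indexed by (non-terminal, document interval, partial placement), which avoids re-parsing the document skeleton for every placement and in fact prefigures the $G_{\doc}$/decorated-grammar construction the paper introduces later for the enumeration algorithm; the price is exactly the bookkeeping you flag yourself (operations split across the gap at the boundary between the two children of a CNF rule, non-terminals deriving strings consisting only of variable operations, and the argument that tracking a set per gap rather than an ordering suffices because one only needs existence of \emph{some} ref-word realizing the placement). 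These points do work out, so your route is sound. Two small corrections: the detour through Proposition~\ref{prop:func} is unnecessary---the paper never makes $G$ functional, and validity can be enforced by the placement bookkeeping itself---and your claim that its $3^{2k}|G|$ cost is dominated by $|\doc|^{2k+3}k^3|G|$ whenever $|\doc|\ge 1$ fails for $|\doc|\le 2$ and large $k$; likewise the number of valid ref-words is not $O(n^{2k}\cdot k^{O(1)})$ in general (orderings inside a gap can contribute a factor up to $(2k)!$), though this does not affect your final bound since your table only records placements, not orderings. Dropping the functionality step removes both issues.
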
 
\begin{proof}
	Our evaluation algorithm relies on the celebrated 
	Cocke-Younger-Kasami (CYK) parsing algorithm for context-free grammars~\cite{DBLP:books/daglib/0016921} in CNF, and  
	operates as follows: it iterates over all of the ref-words $\br$ that are $(1)$ valid for $\vars(G)$ and $(2)$ mapped by $\clean$ into $\doc$. For each such ref-word, it uses the CYK algorithm to determine whether it is produced by $G$ by treating $G$ as a standard CFG over the extended alphabet $\Sigma \cup \Gamma_X$, after converting it to CNF.
	
	\paragraph{Complexity}	
	We convert $G$ to CNF in $O(|G|^2)$. There are $O(|d|^{2k}k!)$ valid ref-words, and each is represented by a ref-word of length $O(|\doc|+2k)$.
	For each such ref-word, we use the CYK to check whether it belongs to the language of $G$ in $O((|\doc|+2k)^3|G| )$. 
	Since we repeat the process for every ref-word, we get a total complexity of $O(|G|^2+|\doc|^{2k}k! (|\doc|+2k)^3|G| )$.
\end{proof}

By replacing the CYK algorithm with Valiant's parser~\cite{DBLP:journals/jcss/Valiant75}, we can decrease the complexity to  $O(|G|^2+|\doc|^{2k+\omega}\,k^{\omega}\,k!\,|G|)$ with $\omega <2.373$ the matrix multiplication exponent~\cite{DBLP:conf/stoc/Williams12}.

While the evaluation of context-free spanners can be done in polynomial time in data complexity (where $G$ is regarded as fixed and $\doc$ as input), the output size might be quite big. To be more precise, for an extraction grammar $G$ associated with $k$ variables, the output may consist of up to $O({(2k)}!|\doc|^{2k})$ mappings.
Instead of outputting these mappings altogether, we can output them sequentially (without repetitions) after some preprocessing. 
This approach leads us to the main result of this paper:
\def \thmenum{For every unambiguous extraction grammar $G$ and every document $\doc$ there is an algorithm that outputs the mappings in $\repspnr{G}(\doc)$ with delay $O(k)$
	after $O(|\doc|^5|G|^23^{4k})$ preprocessing where $k$ is the number of variables $G$ is associated with.}
\begin{theorem}~\label{thm:enum}
	\thmenum
\end{theorem}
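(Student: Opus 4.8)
The plan is to follow the two–stage preprocessing sketched in the introduction: reduce the enumeration of $\repspnr{G}(\doc)$ to the enumeration of parse trees of a grammar that has first been tailored to $\doc$ and then ``decorated'' so that the $\Omega(|\doc|)$ irrelevant parts of each parse tree can be skipped. First I would apply Proposition~\ref{prop:func} and then Proposition~\ref{prop:unambig} to replace $G$ by an equivalent grammar that is functional, in CNF, and — since $G$ is unambiguous — still unambiguous. Functionality is what makes everything downstream clean: every ref-word in $\rlang(G)$ is now valid, so $\refl(G,\doc)=\{\br\in\rlang(G):\clean(\br)=\doc\}$. Then comes the first preprocessing step, the standard grammar–string intersection: I build a CNF grammar $G_\doc$ whose non-terminals are triples $\langle i,A,j\rangle$ with $1\le i\le j\le|\doc|+1$, with productions $\langle i,A,j\rangle\to\langle i,B,\ell\rangle\langle\ell,C,j\rangle$ for each $A\to BC$ of $G$ and each $i\le\ell\le j$, productions $\langle i,A,i+1\rangle\to\sigma$ for each $A\to\sigma$ with $\sigma\in\Sigma$ and $\doc_{\mspan{i}{i+1}}=\sigma$, and $\langle i,A,i\rangle\to\tau$ for each $A\to\tau$ with $\tau\in\Gamma_X$; then I trim useless non-terminals. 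By construction $\langle i,A,j\rangle\Rightarrow^*\br$ in $G_\doc$ iff $A\Rightarrow^*\br$ in $G$ and $\clean(\br)=\doc_{\mspan{i}{j}}$, so the parse trees of $G_\doc$ rooted at $\langle 1,S,|\doc|+1\rangle$ are exactly the parse trees of the ref-words in $\refl(G,\doc)$, and reading off the positions of the $\vop{x}$- and $\vcl{x}$-leaves yields the mapping $\mu^{\br}$. Functionality makes every such yield valid; unambiguity makes distinct parse trees give distinct mappings. So enumerating $\repspnr{G}(\doc)$ without repetitions becomes enumerating the parse trees of $G_\doc$.

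A parse tree of $G_\doc$ may have $\Omega(|\doc|)$ nodes, but only $O(k)$ of them matter: the $2k$ variable-operation leaves and their pairwise lowest common ancestors, which form a tree — the \emph{skeleton} — with $O(k)$ nodes; everything in between is a chain of productions whose ``off-path'' child derives a variable-operation-free subword. The second preprocessing step is therefore to refine $G_\doc$ into a grammar $\decg$ in which each $\langle i,A,j\rangle$ is annotated with a bounded \emph{decoration} recording which variable operations occur in the subtree (and, to keep track of nesting, in what role), giving at most $3^{O(k)}$ decorated copies of each non-terminal. From the decoration one reads off, at any node, whether it is a branching skeleton node (both children contain variable operations) or a chain node (exactly one does) and, in the latter case, which child continues the relevant path; I then precompute a \emph{jump} table mapping each decorated non-terminal to the next skeleton node reached by following the relevant child repeatedly, together with a Boolean saying whether a given decorated non-terminal derives a variable-operation-free word at all. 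The crucial point — and this is where unambiguity and functionality are both essential — is that the skipped subtrees are \emph{forced}: once the skeleton and $\doc$ are fixed the whole ref-word $\br$ is determined, and $\br$ has a unique parse tree, so the skipped subtrees need never be materialized and ignoring them cannot create duplicates. A careful count over the positions $i\le\ell\le j$ and the decorations of the (at most two) children in each CNF production, combined with the cost of the normal-form conversion and the intersection, gives the claimed $O(|\doc|^5|G|^2 3^{4k})$ total preprocessing time, dominated by this step.

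For the enumeration itself, fix an order on the ``relevant'' productions of each decorated non-terminal. I enumerate the parse trees of $\decg$ from the decorated start non-terminal by a depth-first traversal that keeps only the skeleton on a stack: at a branching node recurse on both children; at a chain node use the jump pointer to descend in one step to the next skeleton node; at a variable-operation leaf record its position. When the skeleton is complete, output the mapping (its $2k$ span endpoints) in $O(k)$. To advance, pop the stack to the topmost node that still has an unexplored relevant production, switch to the next alternative there, and re-descend greedily using jumps. Since the skeleton has $O(k)$ nodes and each branching step, chain jump and leaf costs $O(1)$, the delay — including before the first and after the last answer — is $O(k)$, independent of $|\doc|$; completeness, soundness, and absence of repetitions all follow from the bijection between skeletons-extended-to-parse-trees of $\decg$ and the mappings in $\repspnr{G}(\doc)$ established in the first stage.

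I expect the main obstacle to be the second preprocessing step: choosing the decoration so that $\decg$ has parse trees in exact bijection with $\repspnr{G}(\doc)$ — no spurious ones, none missing — while staying within $3^{O(k)}$ decorations; making the jump pointers well-defined and $O(1)$-accessible, which forces the decoration to carry exactly enough information to decide ``branch versus continue, and which child''; and verifying that ignoring the variable-operation-free subtrees is sound, which rests entirely on functionality (validity for free) and unambiguity (a unique parse tree per mapping). Pushing the preprocessing down to the power $|\doc|^5$ rather than a larger one, and keeping the delay a genuinely document-independent $O(k)$, is where the bookkeeping must be done with care.
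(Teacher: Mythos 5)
Your proposal is correct and follows essentially the same route as the paper: convert to a functional, unambiguous CNF grammar (Propositions~\ref{prop:func} and~\ref{prop:unambig}), intersect with the document to get $G_{\doc}$, decorate non-terminals with variable-operation information, and enumerate only the $O(k)$-size relevant part of each parse tree using precomputed jumps, with the quintic cost coming from that jump precomputation. Your ``skeleton of variable-operation leaves and LCAs'' with chain-node jumps is just a different phrasing of the paper's stable non-terminals, skippable productions, and $\jmp$ function, so the two arguments coincide in substance.
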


In the rest of this paper, we describe the algorithm, proof its correctness, and analyze its complexity.

Our algorithm consists of two main stages: preprocessing and outputting. 
In the preprocessing stage, we manipulate the extraction grammar and do some precomputations which are later exploited in the outputting stage in which we output the results sequentially. 
We remark that unambiguity is crucial for outputting the mappings without repetition.

Through the lens of data complexity, our enumeration algorithm outputs the results with constant delay after quintic preprocessing. That should be contrasted with regular spanners for which there exists a constant delay enumeration algorithm whose preprocessing is linear~\cite{DBLP:conf/icdt/AmarilliBMN19,DBLP:conf/pods/FlorenzanoRUVV18}.

In the following sections, we present the enumeration algorithm and discuss its correctness but, before, 
we deal with the special case $\doc \df
\epsilon$. In this case, $\repspnr{G}(\doc)$ is either empty or contains exactly one mapping (since, by definition, the document $\epsilon$ has exactly one span, namely $\mspan{1}{1}$).
Notice that  $\repspnr{G}(\doc)$ is empty if and only if $G$ does not produce a ref-word that consists only of variable operations. To check if that is the case, it suffices to change the production rules of $G$ by replacing every occurrence of $\tau\in \Gamma_{X}$ with $\epsilon$, and to check whether the new grammar produces $\epsilon$. This can be done in linear time~\cite{hop:int}, which completes the proof of this case.
So from now on it is assumed that $\doc \ne \epsilon$.

\section{Preprocessing Stage of the Enumeration Algorithm}\label{sec:enum}

The preprocessing stage consists of two steps: in the first we adjust the extraction grammar to a given document and add subscripts to non-terminals to track this connection, and in the second we use superscripts to capture extra information regarding the variable operations.
In addition, we compute a function, namely $\jmp$, that allows us to obtain the complexity guarantees on the delay between two consecutive outputs in the output stage.
\subsection{Adjusting the Extraction Grammar to $\doc$}\label{sec:adj}
Let  ${G}\df( X, \nter, \ter ,\prodrules, \initvar)$ be an extraction grammar in CNF, and
let $\doc \df \sigma_1, \cdots, \sigma_n, n\ge 1$ be a document.
The goal of this step is to restrict $G$ so that it will produce only the ref-words which $\clean$ maps to $\doc$. 
To this end, we define the grammar $G_{\doc}$ that is associated with the same set $X$ of variables as $G$, and is defined as follows:
\begin{itemize}
	\item The non-terminals are $\{A_{i,j} \vl A \in \nter,\, 1\le i \le j \le n\} \cup \{ A_{\epsilon} \vl A \in \nter \} $,
	\item the terminals are $\Sigma$,
	\item the initial symbol is $S_{1,n}$, and
	\item the production rules are defined as follows:
	\begin{itemize}
		\item
		$A_{i,i}\rightarrow \sigma_{i}$ for any $A\rightarrow \sigma_i \in \prodrules$,
		\item
		$A_{\epsilon} \rightarrow \gamma$ for any  $A\rightarrow \gamma \in \prodrules$ with $\gamma \in \Gamma_X$,
		\item 
		$A_{\epsilon} \rightarrow  B_{\epsilon} C_{\epsilon}$
		for any
		$A \rightarrow B C  \in \prodrules$,
		\item 
		$A_{i,j} \rightarrow  B_{i,j} C_{\epsilon}$
		for any $1\le i \le  j \le n$ and any
		$A \rightarrow B C  \in \prodrules$,
		\item
		$A_{i,j} \rightarrow  B_{\epsilon} C_{i,j}$
		for any $1\le i \le j \le n$ and any 
		$A \rightarrow B C  \in \prodrules$,
		\item 
		$A_{i,j} \rightarrow  B_{i,i'} C_{i'+1,j}$
		for any $1\le i \le i' < j \le n$ and 
		$A \rightarrow B C  \in \prodrules$.
	\end{itemize} 
\end{itemize} 
We  eliminate useless non-terminals from $G_{\doc}$, and by a slight abuse of notation refer to the resulting grammar as $G_{\doc}$. The intuition behind this construction is that if a  subscript of a non-terminal is $i,j$ then this non-terminal produces a ref-word that $\clean$ maps to $\sigma_i \cdots \sigma_j$, and if it is $\epsilon$ then it produces a ref-word that consists only of variable operations.
\begin{example}
	Figure~\ref{fig:g1} presents a possible parse-tree of a grammar $G_{\doc}$.
\end{example}

We establish the following connection between $G$ and $G_{\doc}$. 
\def \lemgdnonter
{	For every extraction grammar $G$ in CNF, every document $\doc \df \sigma_1\cdots \sigma_n$, every non-terminal $A$ of $G$, and every ref-word $\br\in(\Sigma\cup \Gamma_X)^*$ with $\clean(\br) = \sigma_i \cdots \sigma_j$ the following holds:
	$A\Rightarrow_G^*\br$ 
	if and only if 		 $A_{i,j} \Rightarrow_{G_{\doc}}^* \br$
}
\begin{lemma}\label{lem:gdnonter}
	\lemgdnonter
\end{lemma}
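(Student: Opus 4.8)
The plan is to prove the biconditional by induction on the structure of derivations, essentially showing that the construction of $G_{\doc}$ faithfully simulates derivations of $G$ while tracking which portion of $\doc$ each non-terminal is responsible for. Since $G$ is in CNF, every derivation from a non-terminal either uses a single production $A \rightarrow \sigma$ (with $\sigma \in \Sigma \cup \Gamma_X$) at a leaf, or a production $A \rightarrow BC$ followed by derivations from $B$ and $C$. So I would set up the induction on the length of the derivation $A \Rightarrow^*_G \br$ (respectively $A_{i,j} \Rightarrow^*_{G_\doc} \br$), and split on the first production used. It is convenient to prove simultaneously the companion statement: $A \Rightarrow^*_G \br$ with $\clean(\br) = \epsilon$ if and only if $A_\epsilon \Rightarrow^*_{G_\doc} \br$; the two statements feed into each other in the inductive step for the rule $A \rightarrow BC$.

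For the forward direction, suppose $A \Rightarrow^*_G \br$ with $\clean(\br) = \sigma_i \cdots \sigma_j$. If the derivation has length one, then $\br = \sigma$ for some $\sigma \in \Sigma \cup \Gamma_X$; if $\sigma \in \Sigma$ then $\clean(\br) = \sigma$ is a single letter, forcing $i = j$ and $\sigma = \sigma_i$, and the rule $A_{i,i} \rightarrow \sigma_i$ is in $G_\doc$; if $\sigma \in \Gamma_X$ then $\clean(\br) = \epsilon$, which is the companion case and uses $A_\epsilon \rightarrow \sigma$. If the derivation begins with $A \rightarrow BC$, then $\br = \br_B \br_C$ with $B \Rightarrow^*_G \br_B$ and $C \Rightarrow^*_G \br_C$ by shorter derivations. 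Now $\clean(\br_B)\clean(\br_C) = \sigma_i \cdots \sigma_j$, so $\clean(\br_B)$ and $\clean(\br_C)$ are complementary factors of this string; there are three shapes — both nonempty (so $\clean(\br_B) = \sigma_i \cdots \sigma_{i'}$, $\clean(\br_C) = \sigma_{i'+1}\cdots\sigma_j$ for some $i \le i' < j$), $\clean(\br_B) = \epsilon$ and $\clean(\br_C) = \sigma_i \cdots \sigma_j$, or symmetrically $\clean(\br_C) = \epsilon$. In each case the induction hypothesis (using the companion statement where a factor is empty) gives the appropriate derivations in $G_\doc$, and $G_\doc$ contains exactly the matching production ($A_{i,j} \rightarrow B_{i,i'} C_{i'+1,j}$, or $A_{i,j} \rightarrow B_\epsilon C_{i,j}$, or $A_{i,j} \rightarrow B_{i,j} C_\epsilon$), so $A_{i,j} \Rightarrow^*_{G_\doc} \br$. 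The companion forward direction (when $\clean(\br)=\epsilon$) is handled the same way using the rule $A_\epsilon \rightarrow B_\epsilon C_\epsilon$.

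The backward direction is symmetric: any derivation in $G_\doc$ from $A_{i,j}$ (or $A_\epsilon$) starts with one of the listed productions, each of which has a "parent" production $A \rightarrow \sigma$ or $A \rightarrow BC$ in $G$; one strips the subscripts, applies the induction hypothesis to the children, and reassembles a derivation in $G$, checking that the subscripts are consistent with $\clean(\br)$ being the claimed factor (which they are, by construction, since e.g. $B_{i,i'}$ produces a ref-word cleaning to $\sigma_i \cdots \sigma_{i'}$ and $C_{i'+1,j}$ one cleaning to $\sigma_{i'+1}\cdots\sigma_j$, concatenating to $\sigma_i \cdots \sigma_j$). One mild care point: the lemma is stated for the original $G_\doc$ before useless non-terminals are removed, but since removing useless non-terminals does not change the set of ref-words derivable from any non-terminal that remains useful — and $S_{1,n}$, together with every non-terminal appearing in some complete derivation, is useful — the statement is unaffected; alternatively one simply proves it for the pre-trimming grammar and notes trimming preserves all relevant derivations. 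I do not anticipate a genuinely hard step here; the only thing requiring attention is bookkeeping the three-way case split on factorizations of $\sigma_i \cdots \sigma_j$ consistently and making sure the empty-factor cases are routed through the companion statement rather than attempted directly with subscripted non-terminals.
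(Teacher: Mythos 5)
Your proof is correct and follows essentially the same route as the paper's: an induction that matches each CNF production $A \rightarrow BC$ (resp.\ $A \rightarrow \sigma$) of $G$ with the corresponding subscripted production of $G_{\doc}$, applies the induction hypothesis to the two factors of the ref-word, and reassembles derivations in both directions. If anything, your version is more careful than the paper's, which inducts on $j-i$ and glosses over the cases where one child derives a ref-word consisting only of variable operations: your companion statement for the $A_{\epsilon}$ non-terminals, the three-way factorization split routing empty factors through $A_{i,j} \rightarrow B_{\epsilon} C_{i,j}$ and $A_{i,j} \rightarrow B_{i,j} C_{\epsilon}$, and the remark that the statement should be read for the grammar before useless non-terminals are trimmed, fill exactly the details the paper leaves implicit.
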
  
\begin{proof}
	The proof is by induction on $j-i$. In the base case we have $j=i$, and $\br = \sigma_i$. By definition, $A\Rightarrow^*_G \sigma_i$ if and only if $A_{i,i}\Rightarrow^*_{G_{\doc}} \sigma_i$.
	For the induction step: 
	If $A\Rightarrow^*_{G} \br$ with $\br = \sigma_i \cdots \sigma_j$ 
	then since $G$ is in CNF we have $A\Rightarrow^*_G BC \Rightarrow^*_{G} \br_1 \br_2$ where $\clean(\br_1) = \sigma_i \cdots \sigma_{i+\ell}$ and 
	$\clean(\br_2) = \sigma_{i+\ell+1} \cdots \sigma_{j}$ for some $0\le \ell \le j-i $. 
	By induction hypothesis, $B_{i,i+\ell}\Rightarrow^*_{G_{\doc}} \br_1$ and  $C_{i+\ell+1,j}\Rightarrow^*_{G_{\doc}} \br_2$. By definition of $G_{\doc}$ it holds that $A_{i,j}\Rightarrow^*_{G_{\doc}}B_{i,i+\ell}C_{i+\ell+1,j}$, and thus we conclude 
	$A_{i,j}\Rightarrow^*_{G_{\doc}} \br$.
	
	For the other direction, let us assume that $A_{i,j} \Rightarrow^*_{G_{\doc}} \br$. By $G_{\doc}$'s definition and since it is in CNF, it holds that $A_{i,j} \Rightarrow^*_{G_{\doc}} B_{i, i+\ell}C_{i+\ell+1,j}$ where $A \Rightarrow^*_{G} BC$. Thus, we obtain  $A_{i,j} \Rightarrow^*_{G_{\doc}} B_{i, i+\ell}C_{i+\ell+1,j}\Rightarrow^*_{G_{\doc}}\br_1 \br_2 = \br$. By induction hypothesis, $ B\Rightarrow^*_{G}\br_1$ and $ C\Rightarrow^*_{G}\br_2$. Combining the above we conclude $A\Rightarrow^*_{G} BC \Rightarrow^*_{G} \br_1 \br_2 = \br$.
\end{proof}
We can conclude the following.
\def \lemdecwd {
	For every extraction grammar $G$ in CNF and for every document $\doc$, it holds that $\refl(G,\doc) = \mathcal{L}(G_{\doc})$.
}
\begin{corollary}\label{lem:decwd} 
	\lemdecwd
\end{corollary}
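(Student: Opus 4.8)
The plan is to derive Corollary~\ref{lem:decwd} directly from Lemma~\ref{lem:gdnonter}, instantiated at the start symbol $S$ and at the ``full'' subscript $1,n$ where $n = |\doc|$. First I would recall what both sides of the claimed equality mean. By definition, $\refl(G,\doc) = \refl(G) \cap \refl(\doc)$, i.e.\ it is the set of ref-words $\br \in (\Sigma \cup \Gamma_X)^*$ such that $S \Rightarrow_G^* \br$, the ref-word $\br$ is valid for $X$, and $\clean(\br) = \doc$. On the other side, $\mathcal{L}(G_\doc)$ is the terminal language of $G_\doc$ viewed as a grammar over the extended alphabet $\Sigma \cup \Gamma_X$ with start symbol $S_{1,n}$; that is, the set of $\br$ with $S_{1,n} \Rightarrow_{G_\doc}^* \br$.

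The core step is the following chain of equivalences, for a ref-word $\br$ with $n = |\doc| \ge 1$:
\begin{align*}
\br \in \mathcal{L}(G_\doc)
&\iff S_{1,n} \Rightarrow_{G_\doc}^* \br\\
&\iff S \Rightarrow_G^* \br \ \text{ and }\ \clean(\br) = \sigma_1 \cdots \sigma_n = \doc.
\end{align*}
The forward direction of the second equivalence uses the observation that every non-terminal of $G_\doc$ carrying subscript $i,j$ produces only ref-words that $\clean$ maps to $\sigma_i \cdots \sigma_j$ (so in particular $S_{1,n}$ forces $\clean(\br) = \doc$), together with Lemma~\ref{lem:gdnonter} applied with $A = S$, $(i,j) = (1,n)$. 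The backward direction is exactly Lemma~\ref{lem:gdnonter} in the other direction, again at $A = S$ and $(i,j) = (1,n)$, noting that $\clean(\br) = \doc$ means $\clean(\br) = \sigma_i \cdots \sigma_j$ with $i = 1$, $j = n$, so the hypothesis of the lemma is met. It then remains to observe that the extra condition distinguishing $\refl(G,\doc)$ from $\{\br \mid S \Rightarrow_G^* \br,\ \clean(\br) = \doc\}$ — namely validity of $\br$ for $X$ — is automatic: since $G$ is in CNF it is in particular functional is \emph{not} assumed here, so instead I would note that $\clean(\br) = \doc$ together with $\br$ being producible does not by itself give validity; however, $\refl(G,\doc)$ is defined as $\refl(G) \cap \refl(\doc)$ and $\refl(\doc)$ already restricts to \emph{valid} ref-words mapping to $\doc$, so $\refl(G,\doc) = \{\br \in \rlang(G) \mid \br \text{ valid},\ \clean(\br) = \doc\}$. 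Thus I must also check that $\mathcal L(G_\doc)$ contains only valid ref-words; this follows because the $A_\epsilon$-productions and the subscript bookkeeping never duplicate or drop a variable operation relative to a $G$-derivation, so validity of $\br$ for $X$ is inherited from the structure of $\refl(\doc)$ once we know $\clean(\br) = \doc$ — more carefully, $\mathcal L(G_\doc) \subseteq \rlang(G)$ by Lemma~\ref{lem:gdnonter}, and combined with $\clean(\br) = \doc$ this lands us in $\rlang(G) \cap \{\clean^{-1}(\doc)\}$; intersecting with validity is then subsumed because the left-hand side $\refl(G,\doc)$ already does so while the right-hand side $\mathcal L(G_\doc)$ only contains ref-words that came from a genuine $G$-derivation producing a string $\clean$-equal to $\doc$, which need not be valid in general — so in fact the clean statement requires the standing assumption (made just before the corollary, via Propositions~\ref{prop:func} and~\ref{prop:unambig}) that $G$ is functional, under which every ref-word in $\rlang(G)$, hence every ref-word in $\mathcal L(G_\doc) \subseteq \rlang(G)$, is valid.

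I expect the only real subtlety — the ``obstacle'' — to be this validity bookkeeping: making sure that the set equality is stated against the correct ambient grammar (functional, as fixed at the start of Section~\ref{sec:enum}) so that ``producible and $\clean$-equal to $\doc$'' coincides with ``valid, producible, and $\clean$-equal to $\doc$''. Everything else is a one-line specialization of Lemma~\ref{lem:gdnonter}: the lemma is the statement for an arbitrary non-terminal and arbitrary factor $\sigma_i \cdots \sigma_j$, and the corollary is simply the case where the non-terminal is the start symbol and the factor is all of $\doc$, after which the definitions of $\refl(G,\doc)$ and $\mathcal L(G_\doc)$ unwind to literally the same set. I would present this as: ``Immediate from Lemma~\ref{lem:gdnonter} with $A \df S$ and $(i,j) \df (1,n)$, together with the definitions of $\refl(G,\doc)$ and $\mathcal L(G_\doc)$ and the fact that $G$ (hence $G_\doc$) is functional,'' and leave the routine unwinding to the reader.
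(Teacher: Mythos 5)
Your proof is correct and takes essentially the same route the paper intends: the corollary is the specialization of Lemma~\ref{lem:gdnonter} to $A = S$ and $(i,j)=(1,n)$, combined with the easy observation that subscripted non-terminals of $G_{\doc}$ only derive ref-words that $\clean$ maps to the corresponding factor, after which the definitions of $\refl(G,\doc)$ and $\mathcal{L}(G_{\doc})$ unwind to the same set. Your added remark that the validity of every ref-word in $\mathcal{L}(G_{\doc})$ rests on the standing functionality assumption fixed at the start of Section~\ref{sec:enum} (and does not follow from CNF alone) is an accurate and worthwhile clarification of the statement as written.
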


We note that adjusting our extraction grammar to $\doc$ is somewhat  similar to the CYK algorithm~\cite{DBLP:books/daglib/0016921} and therefore it is valid on extraction grammars $G$ in CNF. 
For  a similar reason, we obtain the following complexity which is cubic in $|\doc|$.  
\def\propcompo{
	For every extraction grammar $G$ in CNF and for every document $\doc$,	it holds that $G_{\doc}$ can be constructed in $O({|\doc|}^3|G|)$.} 
\begin{proposition} \label{prop:compo}
	\propcompo
\end{proposition}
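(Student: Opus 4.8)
The plan is to bound the size of $G_{\doc}$ by counting, schema by schema, the production rules that the construction generates, observe that writing them all down takes time linear in that total size, and then add the cost of removing useless non-terminals via the standard linear-time procedure. Correctness of the construction is already guaranteed by Lemma~\ref{lem:gdnonter} and Corollary~\ref{lem:decwd}, so here only the running time is at stake.

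Set $n \df |\doc|$. First I would note that the non-terminal set of $G_{\doc}$ has $O(|\nter|\cdot n^2)$ symbols $A_{i,j}$ together with $O(|\nter|)$ symbols $A_{\epsilon}$, hence $O(|G|\cdot n^2)$ non-terminals in all. Then I would walk through the six rule schemas: the rules $A_{i,i}\rightarrow\sigma_i$ contribute $O(|G|\cdot n)$; the rules $A_{\epsilon}\rightarrow\sigma$ and $A_{\epsilon}\rightarrow B_{\epsilon}C_{\epsilon}$ each contribute $O(|G|)$; the rules $A_{i,j}\rightarrow B_{i,j}C_{\epsilon}$ and $A_{i,j}\rightarrow B_{\epsilon}C_{i,j}$ each contribute $O(|G|\cdot n^2)$, one per binary rule $A\rightarrow BC$ of $G$ and per pair $1\le i\le j\le n$; and the last schema $A_{i,j}\rightarrow B_{i,i'}C_{i'+1,j}$ contributes the dominant $O(|G|\cdot n^3)$ rules, one per binary rule of $G$ and per triple $1\le i\le i'<j\le n$. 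Since every right-hand side of $G_{\doc}$ has length at most two, the size of $G_{\doc}$ — equivalently, the time to enumerate and emit all of its rules — is $O(|G|\cdot n^3)$.

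It remains to account for the elimination of useless non-terminals. This is the textbook two-pass procedure over the extended alphabet $\Sigma\cup\Gamma_X$: first mark the generating non-terminals (those deriving some string in $(\Sigma\cup\Gamma_X)^*$), then keep only those reachable from $S_{1,n}$; both passes run in time linear in the size of the grammar, i.e. $O(|G|\cdot n^3)$, and do not change the generated language~\cite{DBLP:books/daglib/0016921,hop:int}. Combining the two phases yields the claimed $O(|\doc|^3|G|)$ bound, which — as the analogy with CYK suggests — is exactly the CYK running time applied to $G$ and $\doc$. I do not expect a genuine obstacle: the only point requiring care is the bookkeeping that iterating over the binary rules of $G$ together with index triples $1\le i\le i'<j\le n$ produces precisely $O(|G|\cdot n^3)$ rules, and that useless-symbol removal does not raise this order.
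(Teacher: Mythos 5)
Your proposal is correct and follows essentially the same argument as the paper, whose proof is just the one-line observation that each production rule of $G$ has at most $|\doc|^3$ variations in $G_{\doc}$, giving $O(|\doc|^3|G|)$ total. Your schema-by-schema count and the explicit remark that useless-symbol elimination runs in time linear in $|G_{\doc}|$ merely spell out details the paper leaves implicit, without changing the approach.
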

\begin{proof}
	For each production rule in $G$, there are at most $|\doc|^3$ production rules in $G_{\doc}$. Thus, the total time required to produce $G_{\doc}$ is $O(|\doc|^3|G|)$.		
\end{proof}

Similar ideas used by Earley's algorithm~\cite{earley1970efficient} might be used to decrease the complexity of constructing $G_{\doc}$. 

\begin{figure}
\begin{center}
\begin{minipage}[b]{0.45\textwidth}
		\centering
		\begin{tikzpicture}[
			scale=0.9, 
			level/.style={sibling distance=1.8cm,
				level distance = 1.2cm}
			]
			\tikzset{level 1/.style={sibling distance=3.4cm}}
			\tikzset{level 2/.style={sibling distance=1.9cm}}
			\tikzset{level 3/.style={sibling distance=1.2cm}}
			\tikzset{frontier/.style={align}}
			\node {${ A_{1,2}}$}
			child{
				node (A) {{${B_{1,1}}$}}
				child {
					node {${D_{ 1,1}}$}
					child{ node{${H_{\epsilon}}$}
						child{ node{$\vop{x}$}}
					}
					child{ node{${I_{1,1}}$}
						child{ node{$\sigma_1$}
						} 
					}
				}
				child {
					node {${E_\epsilon}$}
					child{ node{${J_\epsilon}$}
						child{ node{$\vcl{x}$}}
					}
					child{ node{${K_\epsilon}$}
						child{ node{$\vop{y}$}
						} 
					}
				}
			}
			child{ 
				node {${C_{2,2}}$}
				child{ node{${F_\epsilon}$}
					child{ node{$\vcl{y}$}}
				}
				child{ node{${ G_{2,2}}$}
					child{ node{$\sigma_2$}
					} 
				}
			}
			;
		\end{tikzpicture}
		\caption{
			\label{fig:g1} 
   A parse-tree of the grammar obtained after adjustment to $\doc \df \sigma_1 \sigma_2$
}
\end{minipage}
\hfill
\begin{minipage}[b]{0.45\textwidth}
		\centering
		\begin{tikzpicture}[
			scale=0.9, 
			level/.style={sibling distance=1.8cm,
				level distance = 1.2cm},
                    box/.style = {draw,blue,inner sep=2pt,rounded corners=2pt}] 
			]
			\tikzset{level 1/.style={sibling distance=3.4cm}}
			\tikzset{level 2/.style={sibling distance=1.9cm}}
			\tikzset{level 3/.style={sibling distance=1.2cm}}
			\tikzset{frontier/.style={align}}
			\node (A) {${ A_{1,2}^{\vop{x},\emptyset}}$}
			child{
				node (B) {{${B^{\vop{x},\vcl{x}\vop{y}}_{1,1}}$}}
				child {
					node {${D^{\vop{x},\emptyset}_{ 1,1}}$}
					child{ node{${H_{\epsilon}}$}
						child{ node{$\vop{x}$}}
					}
					child{ node{${I^{\emptyset,\emptyset}_{1,1}}$}
					child{node{$\sigma_1$}}
					}
				}
				child {
					node {${E_\epsilon}$}
     					child{ node{${J_\epsilon}$}
						child{ node{$\vcl{x}$}}
					}
					child{ node{${K_\epsilon}$}
						child{ node{$\vop{y}$}
						} 
					}
				}
			}
			child{ 
				node (C) {${C_{2,2}^{\vcl{y},\emptyset}}$}
				child{ node{${F_\epsilon}$}
					child{ node{$\vcl{y}$}}
				}
				child{ node{${ G_{2,2}^{\emptyset,\emptyset}}$}
				child{node{$\sigma_2$}}
				}
			}
			;
                   \node[box,fit=(A)(B)(C)] {};
		\end{tikzpicture}
		\caption{
			\label{fig:g} A parse-tree obtained after Step 1 of constructing $\decgrmr{G_{\doc}}$. Within the frame, after Step 2.
		}
\end{minipage}
\end{center}
\end{figure}

\subsection{Constructing the Decorated Grammar}

The goal of this step of preprocessing is to encode the information on the produced variable operations within the  non-terminals. 
We obtain from $G_{\doc}$, constructed in the previous step, a new grammar, namely $\decgrmr{G_{\doc}}$, that produces \emph{decorated words} which are sequences of elements of the form
$A_{i,j}^{\bx,\by}$ where $A_{i,j}$ is a non-terminal in $G_{\doc}$ and $\bx,\by$ are sets of variable operations. 
Intuitively, $A_{i,j}^{\bx,\by}$ indicates that all variable operations in $\bx$ occur right before $\sigma_i$, and all those in $\by$ right after $\sigma_j$.

To define $\decgrmr{G_{\doc}}$, we need $G$ to be functional. The following key observation is used in the formal definition of $\decgrmr{G_{\doc}}$ and is based on the functionality of $G$.
\def\propfuncset
{ 	For every functional extraction grammar $G$ and every non-terminal $A$ of $G$, there is a set $\bx_A\subseteq \Gamma_X$ of variable operations such that for every ref-word $\br$ where $A \Rightarrow^* \br$ the variable operations that appear in $\br$ are exactly those in $\bx_A$. Computing all sets $\bx_A$ can be done in $O(|G|)$. }
\begin{proposition}\label{prop:funcset}
	\propfuncset
\end{proposition}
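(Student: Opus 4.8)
The plan is to split the statement into two parts: (i) the existence and, crucially, the well-definedness of the set $\bx_A$, and (ii) the linear-time computation; part (i) is where essentially all the content lies. Throughout I would assume without loss of generality that every non-terminal of $G$ is useful — useless non-terminals and the productions mentioning them can be removed in $O(|G|)$ time, this leaves $\mathcal{R}(G)$ (hence functionality) unchanged, and it does not affect which ref-words are derivable from a surviving non-terminal (every non-terminal occurring in a derivation of a ref-word is reachable and productive, hence useful). Fix a useful $A$. By usefulness there is a derivation $S \Rightarrow^* \alpha A \beta \Rightarrow^* w$ with $w\in(\Sigma\cup\Gamma_X)^*$, and since derivations factor through non-terminal occurrences we may write it as $S \Rightarrow^* \alpha A \beta \Rightarrow^* w_\alpha\,\br_A\,w_\beta = w$ with $\alpha\Rightarrow^* w_\alpha$, $\beta\Rightarrow^* w_\beta$, and $A\Rightarrow^*\br_A$. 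As $G$ is functional, $w\in\mathcal{R}(G)$ is valid, so each of the $2k$ elements of $\Gamma_X$ occurs exactly once in $w$; in particular the factor $\br_A$ contains every variable operation at most once, so the variable operations occurring in $\br_A$ form a \emph{set}, which I define to be $\bx_A$.

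Next I would show that this $\bx_A$ works for an arbitrary ref-word derivable from $A$, which is the only step that needs genuine care. Let $\br$ be any ref-word with $A\Rightarrow^*\br$. Then
\[
S \Rightarrow^* \alpha A \beta \Rightarrow^* w_\alpha\,\br\,w_\beta \df w'
\]
is a derivation, so $w'\in\mathcal{R}(G)$ and, again by functionality, $w'$ is valid. For every $\tau\in\Gamma_X$, counting occurrences in the two words gives $1 = |w|_{\tau} = |w_\alpha|_{\tau} + |\br_A|_{\tau} + |w_\beta|_{\tau}$ and $1 = |w'|_{\tau} = |w_\alpha|_{\tau} + |\br|_{\tau} + |w_\beta|_{\tau}$, hence $|\br|_{\tau} = |\br_A|_{\tau}$ for all $\tau$. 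In particular $\tau$ occurs in $\br$ iff $\tau\in\bx_A$, so the variable operations in $\br$ are exactly those in $\bx_A$, independently of $\br$. The subtle point here — and the main obstacle — is simply recognizing that plugging an arbitrary $\br$ into the fixed valid context $(w_\alpha,\,\cdot\,,w_\beta)$ again produces a member of $\mathcal{R}(G)$ (immediate from the derivation-based definition of $\mathcal{R}(G)$), and that functionality must then be invoked for \emph{both} $w$ and $w'$.

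For the computation, note that by following the path from the root of a parse tree to a leaf labelled by a variable operation, $\tau\in\bx_A$ iff there is a chain $A=C_0,C_1,\dots,C_\ell$ in which each $C_{m+1}$ occurs on the right-hand side of some production of $C_m$ and $C_\ell$ has a production whose right-hand side contains $\tau$ (all other branches can be completed to ref-words since every non-terminal is productive). Equivalently, the sets $\bx_A$ are the least solution of $\bx_A = \bigcup\{\{\tau\} : \tau\in\Gamma_X \text{ on the r.h.s.\ of a production of }A\} \cup \bigcup\{\bx_B : B \text{ a non-terminal on the r.h.s.\ of a production of }A\}$, and by the well-definedness just established this least solution already takes its correct value along \emph{any single} production of $A$, so no real fixpoint iteration is needed. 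Hence I would compute it by the standard productivity-style worklist: repeatedly pick a production $A\to X_1\cdots X_m$ all of whose non-terminal symbols already have $\bx$ computed, set $\bx_A \df \bigcup_i c(X_i)$ where $c(\sigma)=\emptyset$ for $\sigma\in\Sigma$, $c(\tau)=\{\tau\}$ for $\tau\in\Gamma_X$, $c(B)=\bx_B$ for a non-terminal $B$, and mark $A$ done. Every useful non-terminal is eventually marked, each is touched a bounded number of times, and the running time is proportional to the total size of the productions scanned, giving $O(|G|)$ (treating subsets of the fixed set $\Gamma_X$ as bit-vectors).
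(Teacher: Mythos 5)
Your proof is correct and follows essentially the same route as the paper's: the key step in both is to substitute an arbitrary ref-word derivable from $A$ into a fixed sentential context coming from a derivation that witnesses usefulness of $A$, and then invoke functionality (validity of the resulting full ref-words) — the paper phrases this as a contradiction, while you phrase it directly via occurrence counting, which also makes explicit the no-useless-non-terminals assumption the paper uses implicitly. The $O(|G|)$ computation is likewise the same propagate-operation-sets-through-the-production-graph idea (the paper via a reverse BFS/topological pass, you via a productivity-style worklist), so no substantive difference there either.
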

\begin{proof}
	Assume by contradiction that there are $\br, \br'$ that contain different variable operations such that $A\Rightarrow^* \br$ and $A\Rightarrow^* \br'$. 
	Since $G$ does not contain any unreachable non-terminal, and since every non-terminal of $G$ is reachable from the initial symbol there exists a leftmost derivation (that is obtained by applying production to the leftmost non-terminal in each step. ) of the form:
	\[ S\Rightarrow^* \br_1 A \gamma \Rightarrow^* \br_1 \br \br_2 
	\]
		where $\br_1,\br_2 \in (\Sigma\cup \Gamma_X)^*$ and $\gamma \in (\nter \cup \Sigma \cup \Gamma_X )^*$.
	Thus, there is also the following derivation
	\[ S\Rightarrow^* \br_1 A \gamma \Rightarrow^* \br_1 \br' \br_2. 
	\]
	Recall that due to functionality both $\br_1 \br \br_2$ and $\br_1 \br' \br_2$ are valid, and thus  the variable operations that occur in $\br$ and $\br'$ must be the same, which leads to the desired contradiction.
	
	To compute the sets $\bx_A$, we view the grammar as a directed graph whose nodes are the non-terminals and edges are the pairs $(A,B), (A,C)$ whenever there is a production $A\rightarrow BC$, and  $(A,\sigma)$ whenever there is a production $A\rightarrow \sigma$ with $\sigma \in \Gamma_X$.
	We run a topological order on the graph, and iterate over the output of this order in an inverse order. 
	We then run a BFS on this graph and view the output sequence of non-terminals as a topological order. We then iterate over the non-terminals in an inverse order and accumulate for each the set of variable operations that consists of its descendants. 
	This whole process can be done in $O(|G|)$. 
\end{proof}
In other words, for functional extraction grammars, the information on the variable operations is stored implicitly in the non-terminals. The set $\bx_A$ will serve us in the construction of $\decgrmr{G_{\doc}}$.

\subsubsection{Steps of the construction}
The grammar $\decgrmr{G_{\doc}}$ is obtained from $G_{\doc}$ in two steps:
\paragraph*{Step 1} For all pairwise disjoint subsets $\bx,\by,\bz,\bw \subseteq \Gamma_X$, we add the following derivations to $\decgrmr{G_{\doc}}$
\begin{enumerate}
	\item	$A^{\emptyset,\emptyset}_{i,i}\rightarrow \sigma_i$ for every rule $A_{i,i}\rightarrow \sigma_{i}$ in $G_{\doc}$,
	\item	$A_{\epsilon} \rightarrow \tau$	for every rule $A_{\epsilon} \rightarrow \tau$ in $G_{\doc}$ with $\tau \in \Gamma_X$,
	\item $A_{\epsilon} \rightarrow  B_{\epsilon} C_{\epsilon}$ for every  rule	$A_{\epsilon} \rightarrow  B_{\epsilon} C_{\epsilon}$  in $G_{\doc}$,
	\item 
	$A^{\bx,\by\cup \bx_C}_{i,j} \rightarrow  B^{\bx,\by}_{i,j} C_{\epsilon} $ for every rule $A_{i,j} \rightarrow  B_{i,j} C_{\epsilon}$
	in $G_{\doc}$ with $\bx\cap \bx_C = \by \cap \bx_C = \emptyset$ and $\bx_C$ defined as in Proposition~\ref{prop:funcset},
	\item 
	$A^{\bx\cup \bx_B,\by}_{i,j} \rightarrow B_{\epsilon}   C^{\bx,\by}_{i,j}$ for every rule	$A_{i,j} \rightarrow  B_{\epsilon} C_{i,j}$  in  $G_{\doc}$  with  $\bx\cap \bx_B = \by \cap \bx_B = \emptyset$ and $\bx_B$ defined as in Proposition~\ref{prop:funcset},
	\item 
	$A^{\bx,\bw}_{i,j} \rightarrow  B^{\bx,\by}_{i,i'} C^{\bz,\bw}_{i'+1,j}$
	for every rule $A_{i,j} \rightarrow  B_{i,i'} C_{i'+1,j}$ in $G_{\doc}$, and
 \item $S \rightarrow S_{1,n}^{\bx,\by}$ where $S$ is a fresh initial symbol.
\end{enumerate} 
We complete this step by eliminating useless non-terminals (i.e., those that do not produce a terminal string or are not reachable from the initial symbol). We denote the resulting grammar by $G'_{\doc}$.
\begin{example}\sloppy{
	Figure~\ref{fig:g} illustrate the parse-tree obtained after applying Step 1 in the construction of the decorated grammar $\decgrmr{G_{\doc}}$ of $G_{\doc}$ from Figure~\ref{fig:g1}.
	For convenience, in this and in following examples, we present the superscript as pairs of comma-separated sequences, each consists
 of elements in the corresponding set.
Figure~\ref{fig:g} is obtained by propagating information on the variable operations in superscripts. In particular, which occur before, and which after the corresponding subtree.  }
\end{example}
\paragraph*{Step 2}
This step utilizes the following key notion. 
\begin{definition} \sloppy{
A non-terminal $A^{\bx,\by}_{i,j}$ 
is said to be \emph{stable} if $\bx_A =  \bx \cup \by $ (where $\bx_A$ is as defined in Proposition~\ref{prop:funcset}). }
\end{definition}
Intuitively, if a non-terminal $A_{i,j}^{\bx,\by}$ is stable then every ref-word produced by its corresponding non-terminal $A_{i,j}$ in $G_{\doc}$ consists of the sequence of elements in $\bx$, followed by the sequence $\sigma_1 \cdots \sigma_j$, followed by the sequence of elements in $\by$. That is, the only variable operations produced by $A_{i,j}^{\bx,\by}$ are those in $\bx \cup \by$. This intuitively implies that we do not need to complete the derivation of stable non-terminals because their superscripts and subscripts contain all the relevant information on the corresponding mapping.

Following this intuition, we change further the grammar obtained in the previous step by omitting all derivations whose left-hand sides are stable non-terminals. 
We then again eliminate useless non-terminals.

\begin{example}
	The stable non-terminals of the grammar depicted in Figure~\ref{fig:g} are $B_{1,1}^{\vop{x}, \vcl{x},\vop{y}}$ and $C_{2,2}^{\vcl{y}, \emptyset}$, $D_{1,1}^{\vop{x},\emptyset}, G_{2,2}^{\emptyset,\emptyset}$ and $I_{1,1}^{\emptyset,\emptyset}$.
 Indeed, $\bx_B = \{\vop{x}, \vcl{x}, \vop{y} \}$, $\bx_C=\{ \vcl{y} \}$, $\bx_D = \{\vop{x} \}$, and $ \bx_G = \bx_I= \emptyset$.
 The non-terminal $A_{1,2}^{\vop{x},\emptyset}$ is non-stable since $\bx_A = \Gamma_{\{x,y\}} \ne \{\vop{x}\}$.
 The result of applying Step 2 is the subtree that contains $A_{1,2}^{\vop{x},\emptyset}, B_{1,1}^{\vop{x}, \vcl{x},\vop{y}}$ and $C_{2,2}^{\vcl{y}, \emptyset}$ that appears within the frame.
\end{example}

\begin{figure}
		\centering
		\begin{tikzpicture}[
			level/.style={sibling distance=1.4cm,
				level distance = 1.3cm}
			]
			\tikzset{frontier/.style={distance from root=8cm}}
			\node (S){${S^{\emptyset,\emptyset}_{1,n}}$}
			child{
				node (A) {{${A_{1,1}^{\emptyset,\emptyset}}$}}
			}
			child{ 
				node {${B_{2,n}^{\emptyset,\emptyset }}$}
				child{ node{${C_{2,2}^{\emptyset,\emptyset}}$}
				}
				child{ node{${\cdots}$}	
					child{node{${E_{n-2,n-2}^{\emptyset,\emptyset}}$}}
					child{node{$ {F^{\emptyset,\emptyset}_{n-1,n}}$}
						child{ node{${G^{\emptyset,\vop{x}\vcl{x}}_{n-1,n-1}}$}}
						child{		node{${H_{n,n}^{\emptyset,\emptyset}}$ }
						}
					}
				}
			}
			;
		\end{tikzpicture}
		\caption{
			A partial parse-tree of $\decgrmr{G_{\doc}}$ whose depth is linear in the length of the document and whose right branch, except $H_{n,n}^{\emptyset,\emptyset}$, consists of non-stable non-terminals.
		}
		\label{fig:depth}
\end{figure}

The complexity is discussed in the following lemma.
\newcommand{\lemstb}{For every functional extraction grammar $G$ in CNF and for every document $\doc$, the set of stable non-terminals of ${G'_{\doc}}$ is computable in $O(|G_{\doc}|5^{2k})$  where $k$ is the number of variables $G$ is associated with.}
\begin{lemma}\label{lem:stb}
	\lemstb
\end{lemma}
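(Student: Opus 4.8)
The plan is to reduce the property ``stable'' to a test that can be performed independently at each non-terminal, so that the whole computation is a single linear scan of $\decgrmr{G_{\doc}}$. First I would use Proposition~\ref{prop:funcset} to compute, in $O(|G|)$ time, the set $\bx_A\subseteq\Gamma_X$ for every non-terminal $A$ of $G$, storing each $\bx_A$ --- together with its cardinality $|\bx_A|$ --- in a table indexed by $A$.

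The key structural observation is that \emph{every non-terminal $A^{\bx,\by}_{i,j}$ of $\decgrmr{G_{\doc}}$ satisfies $\bx\cap\by=\emptyset$ and $\bx\cup\by\subseteq\bx_A$}. Disjointness is immediate, since every superscript created in Step~1 of the construction of $\decgrmr{G_{\doc}}$ is drawn from a family of pairwise disjoint subsets of $\Gamma_X$. For the inclusion I would induct on the productions built in Step~1, using that a rule $A\rightarrow BC$ of $G$ forces $\bx_A=\bx_B\cup\bx_C$ (this is the point where functionality enters): for a production $A^{\bx,\by\cup\bx_C}_{i,j}\rightarrow B^{\bx,\by}_{i,j}\,C_\epsilon$ the inductive hypothesis gives $\bx\cup\by\subseteq\bx_B$, hence $\bx\cup(\by\cup\bx_C)\subseteq\bx_B\cup\bx_C=\bx_A$; the symmetric rule and the splitting rule $A^{\bx,\bw}_{i,j}\rightarrow B^{\bx,\by}_{i,i'}\,C^{\bz,\bw}_{i'+1,j}$ are entirely analogous, and the leaf rules $A^{\emptyset,\emptyset}_{i,i}\rightarrow\sigma_i$ are trivial since there $\bx_A=\emptyset$. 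Steps~2 and~3 only \emph{delete} non-terminals (removal of useless non-terminals, of $\epsilon$-productions, and of unit productions) and add a fresh start symbol that is not of the form $A^{\bx,\by}_{i,j}$, so the invariant survives for every remaining non-terminal of that form.

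Given the observation, $\bx\cap\by=\emptyset$ yields $|\bx\cup\by|=|\bx|+|\by|$, so $A^{\bx,\by}_{i,j}$ is stable, i.e.,~$\bx_A=\bx\cup\by$, \emph{if and only if} $|\bx|+|\by|=|\bx_A|$ --- a comparison of two integers that are at most $2k$. The algorithm therefore builds $\decgrmr{G_{\doc}}$ as in Proposition~\ref{prop:comp}, maintaining for each created non-terminal the pair $\big(|\bx|,|\by|\big)$ of superscript cardinalities (each Step-1 production recomputes this pair from its children's pairs and the precomputed cardinalities $|\bx_B|,|\bx_C|$ in $O(1)$ time, and Steps~2 and~3 create no further superscripted non-terminals), and afterwards scans the non-terminals, flagging $A^{\bx,\by}_{i,j}$ as stable exactly when $|\bx|+|\by|=|\bx_A|$, the latter read from the table. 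Each test is $O(1)$, and $\decgrmr{G_{\doc}}$ has at most $|\decgrmr{G_{\doc}}|=O(|G_{\doc}|\,5^{2k})$ non-terminals by Proposition~\ref{prop:comp}; adding the $O(|G|)$ precomputation of the $\bx_A$'s, the total time is $O(|G_{\doc}|\,5^{2k})$, i.e.,~$O(|\doc|^3|G|\,5^{2k})$ by Proposition~\ref{prop:compo}.

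The one non-routine step is the structural observation: proving it forces one to unwind all three steps of the construction of $\decgrmr{G_{\doc}}$ and to invoke functionality (Proposition~\ref{prop:funcset}) in order to see that a non-terminal's superscript never mentions a variable operation outside $\bx_A$; everything else is bookkeeping. If one prefers to bypass it, the naive test --- compute $\bx\cup\by$ as a characteristic vector and compare it with $\bx_A$ in $O(k)$ time --- yields the same bound, because $\decgrmr{G_{\doc}}$ has only $O\big(m\cdot|\doc|^2\cdot 3^{2k}\big)$ non-terminals, where $m$ is the number of non-terminals of $G$ (a non-terminal is $A^{\bx,\by}_{i,j}$ with $1\le i\le j\le|\doc|$ and $(\bx,\by)$ an ordered pair of disjoint subsets of $\Gamma_X$), and $k\cdot 3^{2k}=O(5^{2k})$ while $|\doc|^2\le|\doc|^3$.
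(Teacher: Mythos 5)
Your proof follows essentially the same route as the paper: precompute the sets $\bx_A$ via Proposition~\ref{prop:funcset} in $O(|G|)$ time, then do a single scan of the non-terminals $A^{\bx,\by}_{i,j}$ of $\decgrmr{G_{\doc}}$ testing whether $\bx_A = \bx\cup\by$, giving $O(|G|)+O(|\decgrmr{G_{\doc}}|)=O(|G_{\doc}|\,5^{2k})$ in total. Your only addition is a careful justification that each test costs $O(1)$ (the disjointness and containment invariant $\bx\cap\by=\emptyset$, $\bx\cup\by\subseteq\bx_A$ plus a cardinality comparison), a point the paper simply asserts; as you note, even the naive $O(k)$-per-test variant stays within the stated bound.
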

\begin{proof}
	Recall that in Proposition~\ref{prop:funcset} we showed that we can compute the sets $\bx_A$ for every non-terminal $A$ in  $O(|G|)$.
	We can then scan the non-terminals of ${G'_{\doc}}$, and for each such non-terminal $A_{i,j}^{\bx , \by}$ check in $O(1)$ whether $\bx_A = \bx \cup \by$.
	Therefore, the complexity of this step is 
	$O(|G|)+O(|{G'_{\doc}}|) = 
	O(|\doc|^3|G|\,5^{2k})$.
\end{proof}

\subsection{Characterizing the Decorated Grammar}

The decorated grammar $\decgrmr{G_{\doc}}$ encodes information on the original grammar $G$, and is used in the output stage of the algorithm. 
We point at its characteristics which allow us to establish its connection to $G$.

\sloppy{
Notice that the terminals of $\decgrmr{G_{\doc}}$ are stable non-terminals of $G'_{\doc}$. We now define the mapping that is defined by the words produced by $\decgrmr{G_{\doc}}$.
We define the {$(X,d)$-mapping} $\mu^{w}$ that corresponds with a decorated word }
\[
w\df {(A^1)}^{\bx_1,\by_1}_{i_1,j_1}\cdots{(A^m)}^{\bx_m,\by_m}_{i_m,j_m}
\] 
as follows:
$\mu^{w}(x) \df \mspan{i}{j}$ whenever the two following statements hold: 
\begin{itemize}
    \item there is $1\le \ell\le m$ such that either (a) $\vop{x} \in \bx_{\ell}$ and $i=i_{\ell}$ or (b)
$\vop{x} \in \by_{\ell}$ and $i=j_{\ell}+1$;  
\item 
there is $1\le \ell \le m$ such that either (a) $\vcl{x}\in\bx_{\ell}$ and $j= i_{\ell}$ or (b) $\vcl{x}\in\by_{\ell}$ and $j=j_{\ell}+1$.
\end{itemize}
We say that a decorated word $w$ is \emph{valid} if $\mu^{w}(x)$ is well-defined for every $x\in X$.

The connection between $G_{\doc}$ and $\decgrmr{G_{\doc}}$ is as follows.
\def \lemdecwd {
	For every functional unambiguous extraction grammar $G$ in CNF and for every document $\doc$, it holds that every decorated word produced by $\decgrmr{G_{\doc}}$ is valid and that 
	\[\repspnr{G}(\doc) = \{\mu^w \vl S \Rightarrow^*_{\decgrmr{G_{\doc}}} w \}.\]
}
\begin{lemma}\label{cor:decwd} 
	\lemdecwd
\end{lemma}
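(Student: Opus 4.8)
The plan is to reduce the claim, via Corollary~\ref{lem:decwd}, to a tree-level correspondence between $G_{\doc}$ and $\decgrmr{G_{\doc}}$. By Corollary~\ref{lem:decwd} we have $\refl(G,\doc)=\mathcal{L}(G_{\doc})$, and since $G$ is functional $\refl(G)=\rlang(G)$, so every $\br\in\mathcal{L}(G_{\doc})$ is a valid ref-word with $\clean(\br)=\doc$; hence $\repspnr{G}(\doc)=\{\mu^{\br}\mid \br\in\mathcal{L}(G_{\doc})\}$. Write each such $\br$ uniquely as $\br=u_0\sigma_1u_1\sigma_2\cdots\sigma_nu_n$ with $u_i\in\Gamma_X^*$. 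Call a decorated word $w=(\bx_1,1,\by_1)\cdots(\bx_n,n,\by_n)$ a \emph{decoration} of $\br$ if, for every $0\le i\le n$, the variable operations occurring in $u_i$ are exactly those in $\by_i\cup\bx_{i+1}$ and $\by_i\cap\bx_{i+1}=\emptyset$ (with $\by_0=\bx_{n+1}=\emptyset$). Unwinding the definitions of $\mu^{\br}$ and of $\mu^{w}$ shows that whenever $w$ is a decoration of a valid ref-word $\br$ one has $\mu^{w}=\mu^{\br}$; in particular $\mu^{w}$ is then well-defined. Consequently it suffices to prove two inclusions: (i) every decorated word produced by $\decgrmr{G_{\doc}}$ is a decoration of some $\br\in\mathcal{L}(G_{\doc})$, and (ii) every $\br\in\mathcal{L}(G_{\doc})$ admits some decoration produced by $\decgrmr{G_{\doc}}$. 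Together with the identity $\mu^{w}=\mu^{\br}$ these give both assertions of the lemma, the validity claim being immediate from (i).

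The core of the argument is an induction on parse-tree height proving, for every non-terminal $A$ of $G$, all $1\le i\le j\le n$, and every ref-word $\br$ with $\clean(\br)=\sigma_i\cdots\sigma_j$: a parse tree witnessing $A_{i,j}\Rightarrow^*_{G_{\doc}}\br$ transforms into one witnessing $A^{\bx,\by}_{i,j}\Rightarrow^*_{\decgrmr{G_{\doc}}}w$ for a decoration $w$ of $\br$, where $\bx$ and $\by$ are the sets of variable operations before the first and after the last terminal of $\br$, and conversely. For the forward direction one follows the three construction steps: the Step~1 rules mirror $G_{\doc}$ while pushing the variable operations of an $\epsilon$-deriving sibling into a superscript — these are exactly the determined sets $\bx_B,\bx_C$ supplied by Proposition~\ref{prop:funcset}, which is where functionality is used — and the binary rule $A^{\bx,\bw}_{i,j}\to B^{\bx,\by}_{i,i'}C^{\bz,\bw}_{i'+1,j}$ stores in the internal superscripts $\by,\bz$ precisely the variable operations between $\sigma_{i'}$ and $\sigma_{i'+1}$, with the pairwise-disjointness side-condition allowing any legal split of that block. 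Step~2's standard eliminations of useless, $\epsilon$- and unit productions preserve the generated language and collapse every maximal chain of $\epsilon$-absorptions into a single terminal production, which Step~3 relabels as $A^{\bx,\by}_{i,i}\to(\bx,i,\by)$; tracing a $G_{\doc}$-tree through these operations yields a $\decgrmr{G_{\doc}}$-tree whose leaf superscripts, one checks, reassemble exactly the blocks $u_0,\dots,u_n$ of $\br$, so its yield is a decoration of $\br$. The converse direction reverses this — erasing superscripts and the Step~3 start wrapper, re-expanding the collapsed unit chains, and reinserting absorbed variable operations as $A_\epsilon$-subtrees — to recover a $G_{\doc}$-derivation of some $\br$ of which $w$ is a decoration. (Unambiguity of $G$, via Proposition~\ref{prop:unamb}, makes this correspondence a bijection on parse trees, which the enumeration stage exploits but which is not needed for the present set equality.) Instantiating the invariant at the root, together with the Step~3 rules $S\to S^{\bx,\by}_{1,n}$, gives (i) and (ii).

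The step I expect to be the main obstacle is precisely this inductive bookkeeping: keeping consistent track of which variable operations are absorbed on the left, absorbed on the right, or internal as a tree is pushed through the several Step~1 schemas and then through the interleaving of $\epsilon$- and unit-elimination in Step~2, and verifying that the disjointness side-conditions are exactly strong enough that the transformation never drops a variable operation and never duplicates one, so that the reassembled blocks $u_i$ at the leaves are complete. The subtlety already flagged in the text — that one ref-word may have several valid decorations — is exactly why the invariant must be stated for parse trees rather than ref-words, and why (i)/(ii) are phrased as inclusions of sets of \emph{mappings}, where the many-to-one map $\br\mapsto w$ does no harm because all decorations of $\br$ induce the same mapping $\mu^{\br}$.
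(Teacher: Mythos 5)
Your proposal is correct and follows essentially the same route as the paper: the paper packages the correspondence as an explicit map $f$ on parse-trees of $G_{\doc}$ that collapses, for each $\sigma_i$, the maximal subtree containing no other $\Sigma$-leaf into a decorated terminal $(\bx,i,\by)$, and then proves three claims (derivation preservation in both directions plus $\mu^{\br}=\mu^{f(\br)}$), with validity obtained from functionality exactly as you argue. Your ``decoration'' relation together with the induction through Steps 1--3 is the same tree-level correspondence in slightly different clothing, and your observation that unambiguity is not needed for the set equality (only for the bijection exploited later) is consistent with the paper's proof.
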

\begin{proof}

By Corollary~\ref{lem:decwd}, it suffices to show that
\[
\{ \mu^{\br} \vl S_{1,n} \Rightarrow^*_{G_{\doc}} \br \} =  \{\mu^w \vl S \Rightarrow^*_{\decgrmr{G_{\doc}}} w\}
.\]
\newcommand{\tr}{\mathcal{T}}
\newcommand{\dectr}{\tr_{\sc{dec}}}
\noindent Recall that we denote by $G'_{\doc}$ the grammar obtained after Step 1. 

\paragraph{Claim 1}
Let $\br$ be a word over $\Sigma \cup \Gamma_X$ with $\clean{(r)} = \sigma_i\cdots \sigma_j$ and let us denote it by 
$x_1 \cdots x_k \ \sigma_i \ \tau_1\cdots \tau_{k'} \ \sigma_j\  y_1  \cdots y_{k''} $
where $k,k',k''\ge 0$, 
$x_1,\ldots,x_k, y_1,\ldots,y_{k''} \in \Gamma_X$ and 
$\tau_1,\ldots,\tau_{k'} \in \Gamma_X \cup \Sigma$.
It holds that $A_{i,j}^{\bx,\by}\Rightarrow^*_{G'_{\doc}} \br $ 
with $\bx = \{x_1,\ldots,x_k \}$, $\by = \{y_1,\ldots, y_{k''}\}$
if and only if $A_{i,j}\Rightarrow^*_{G_{\doc}} \br$ where both derivations are terminal derivations.

The claim is proved by induction on the derivation length using the definition of $G'_{\doc}$ and the fact it does not contain useless non-terminals. In particular, if the derivation is of length $0$ then the derivation is not terminal and the claim is trivial. Assume that $A_{i,j}^{\bx,\by}\Rightarrow^{n+1}_{G'_{\doc}} \br $ with $n\ge 0$. Then by the definition of $G'_{\doc}$ we can distinguish between the following cases:
\begin{itemize}
    \item $A_{i,j}^{\bx,\by}\Rightarrow^{1}_{G'_{\doc}} \br $ which implies that $i=j$, $\bx=\by=\emptyset$ and $\br = \sigma_i$. Thus, by the definition of $G'_{\doc}$, it holds that  $A_{i,j}^{\bx,\by}\Rightarrow^{1}_{G_{\doc}} \br $ as well.
    \item
    $A_{i,j}^{\bx,\by}\Rightarrow^{1}_{G'_{\doc}} B_{i,j}^{\bx,\bz} C_{\epsilon} \Rightarrow^n_{G'_{\doc}} \br$. We denote $B_{i,j}^{\bx,\bz} \Rightarrow^n_{G'_{\doc}} \br_1$ and $C_{\epsilon} \Rightarrow^n_{G'_{\doc}} \br_2$  (and $\br = \br_1\br_2$). 
    By induction hypothesis, $\br_1$ is of the form 
    $x_1\cdots x_k \sigma_i w \sigma_j z_1\cdots z_{k'}$ where $\bx = \{x_1,\ldots,x_k\}$, $\bz = \{z_1,\ldots,z_{k'}\}$,  $\sigma_i w \sigma_j\in(\Sigma\cup{\Gamma_X})^*$ and $\clean(\sigma_i w \sigma_j) = \sigma_i\cdots\sigma_j$.
    Notice that $\br_2 = \tau_1 \cdots \tau_{k''}$ consists only of variable operations,  in particular, those in $\bx_{C}$, that is $\bx_C = \{\tau_1,\ldots,\tau_{k''}\} $. 
    Thus, 
    by item 4 of the definition of $G'_{\doc}$ we can conclude that $\by = \bz \cup \bx_{C}$ which completes this case. \item
    $A_{i,j}^{\bx,\by}\Rightarrow^{1}_{G'_{\doc}}  B_{\epsilon} C_{i,j}^{\bz,\by} \Rightarrow^n_{G'_{\doc}} \br$ is show symmetrically to the previous case. 
    \item
      $A_{i,j}^{\bx,\bw}\Rightarrow^{1}_{G'_{\doc}} B_{i,i'}^{\bx,\by} C_{i'+1,j}^{\bz,\bw} \Rightarrow^n_{G'_{\doc}} \br$. We denote $B_{i,i'}^{\bx,\by} \Rightarrow^n_{G'_{\doc}} \br_1$ and
      $C_{i'+1,j}^{\bz,\bw} \Rightarrow^n_{G'_{\doc}} \br_2$ (and $\br = \br_1\br_2$).
      By induction hypothesis, 
      $\br_1$ is of the form $x_1\cdots x_k \sigma_i \gamma_1 \sigma_{i'} y_1\cdots y_{k'}$ and 
      $\br_2$ is of the form $z_1\cdots z_m \sigma_{i'+1} \gamma_2 \sigma_{j} w_1\cdots w_{m'}$ where $\gamma_1,\gamma_2\in (\Sigma\cup\Gamma_X)^*$ and $\clean(\gamma_1) = \sigma_i\cdots\sigma_{i'}$, $\clean(\gamma_2)= \sigma_{i'+1}\cdots\sigma_j$. Thus, $\br$ is of the form $x_1\cdots x_k \sigma_i \gamma_1 \sigma_{i'} y_1\cdots y_{k'} z_1\cdots z_m \sigma_{i'+1} \gamma_2 \sigma_{j} w_1\cdots w_{m'}$, which completes the proof.
\end{itemize}

To state the next claim we define a function $M$ that maps sequences of stable non-terminals into sets of pairs of the form $(\tau, i)$ where $\tau$ is a variable operation and $i$ an integer. We define $M$ inductively by $M(\epsilon) \df \emptyset$ and $M(A_{i,j}^{\bx,\by} \gamma) \df  \{ (x,i),(y,j+1) \mid x\in \bx, y\in\by \} \cup M(\gamma)$ where $\gamma$ is a (possibly empty) sequence of non-stable terminals.
The intuition is that $M$ encodes parts of the mapping induced by the sub-ref-word (i.e., the sequence of stable non-terminals) it operates on . 

In addition, we define a function $N$ that maps pairs $(i,\br)$ where $1\le i \le n+1$ is an integer and $\br \in (\Sigma\cup\Gamma_X)^*$ inductively as follows $N(i,\epsilon) \df \emptyset$, $N(i,\sigma \gamma) \df N(i+1,\gamma)$ when $\sigma \in \Sigma$, and 
$N(i,\tau \gamma) \df \{(i,\tau) \} \cup N(i,\gamma)$ when $\tau \in \Gamma_X$.
The intuition is that given an offset $i$ and a partial ref-word $\gamma$, $N(i,\gamma)$ encodes parts of the mapping induced by $\gamma$ with an offset $i$. For instance, if $\gamma = \sigma \tau \sigma'$ where $\sigma,\sigma'\in\Sigma$ and $\tau\in\Gamma_X$ then $N(i,\gamma)$ contains a single pair $(i+1,\tau)$ indicating that in the mapping it encodes the variable operation $\tau$ appears in position $i+1$.
\paragraph{Claim 2}
It holds that 
$A_{i,j}^{\bx,\by}\Rightarrow^*_{\decgrmr{G_{\doc}}} w $ if and only if
$A_{i,j}^{\bx,\by}\Rightarrow^*_{G'_{\doc}} \br $ 
and $M(w) = N(i,\br)$ where both derivations are terminal derivations.
We prove the claim by induction on the length of the derivation of $\decgrmr{G_{\doc}}$.
If the derivation is of length zero then since it is a terminal derivation, it holds that $A_{i,j}^{\bx,\by}$ is stable. 
Hence, $w = A_{i,j}^{\bx,\by}$ and $M(w)=\{ (x,i),(y,j+1) \mid  x\in \bx , y\in \by \} $. Assume $A_{i,j}^{\bx,\by}\Rightarrow^*_{G'_{\doc}} \br $. Since $A_{i,j}^{\bx,\by}$ is stable, Claim 1 implies that $A_{i,j}\Rightarrow^*_{G_{\doc}} \br$ with $\br$ of the form 
$
x_1 \cdots x_k \ \sigma_i \cdots \sigma_j\  y_1  \cdots y_{k''} 
$
where $k,k''\ge 0$, and
$\bx = \{x_1,\ldots,x_k \}$, $\by = \{y_1,\ldots, y_{k''}\}$. Hence $M(w) = N(i,\br)$.
If the derivation is of length greater or equal to one then we can denote $A_{i,j}^{\bx,\by}\Rightarrow^1_{\decgrmr{G_{\doc}}}B_{i,i'}^{\bx,\bw}C_{i'+1,j}^{\bz,\by}\Rightarrow^*_{\decgrmr{G_{\doc}}} w_1 w_2$. By the induction hypothesis, $B_{i,i'}^{\bx,\bw}\Rightarrow^*_{G'_{\doc}} \br_1$ and $C_{i'+1,j}^{\bz,\by}\Rightarrow^*_{G'_{\doc}} \br_2$ and $M(w_1)=N(i,\br_1), M(w_2)=N(i'+1,\br_2)$. By the  properties of $M$ and $N$ we have that $M(w_1w_2)=M(w_1)\cup M(w_2) = N(i,\br_1) \cup N(i'+1,\br_2) = N(i,\br_1\br_2)$ which completes the proof.

Combining the claims we get that $S_{1,n}\Rightarrow^*_{G_{\doc}} \br$ if and only if there are $\bx,\by$ such that $S_{1,n}^{\bx,\by}\Rightarrow^*_{\decgrmr{G_{\doc}}} w$ where $M(w)=N(1,\br)$. By definition, $M(w) = \mu^w$ and $N(1,\br)=\mu^{\br}$ which completes the proof.
\end{proof}
\noindent
Another desired property of the decorated grammar that will help us in the enumeration step is the following. 
\def \propunamb {For every functional extraction grammar $G$ in CNF and for every document $\doc$, if $G$ is unambiguous then 
	$\decgrmr{{G}_{\doc}}$ is unambiguous.}
\begin{proposition}\label{prop:unamb}
	\propunamb
\end{proposition}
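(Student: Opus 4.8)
The plan is to factor a parse-tree correspondence between $\decgrmr{G_{\doc}}$ and $G$ through the two construction steps, and then read off unambiguity of $\decgrmr{G_{\doc}}$ from that of $G$. For a parse tree $\tau$ of $G$ whose yield $\br$ satisfies $\clean(\br)=\doc=\sigma_1\cdots\sigma_n$, call its \emph{decoration} the decorated word $w(\tau)=(\bx_1,1,\by_1)\cdots(\bx_n,n,\by_n)$, where $\bx_i$ (resp.\ $\by_i$) collects the variable operations that $\tau$ places at the left end (resp.\ right end) of the minimal sub-ref-word of $\tau$ containing $\sigma_i$ — precisely the data that the superscripts of $\decgrmr{G_{\doc}}$ are designed to carry. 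Observe that $w(\tau)$ already determines $\mu^{\br}$: each operation of $\br$ lies in some gap between two consecutive letters of $\doc$, $w(\tau)$ records that gap (splitting its operations between a $\by_{i-1}$-slot and an $\bx_i$-slot), and $\mu^{w(\tau)}$ reads $\mu^{\br}$ off exactly from those gaps; hence $\mu^{w(\tau)}=\mu^{\br}$.

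First I would establish a bijection between parse trees of $G$ of clean-image $\doc$ and parse trees of the Step~1 grammar $\decg_1$, under which $w(\tau)$ is exactly what one reads from the superscripts of the matching $\decg_1$-tree. The subscript part is the parse-tree refinement of Lemma~\ref{lem:gdnonter}; the superscript part is forced by functionality: Proposition~\ref{prop:funcset} pins down the sets $\bx_B,\bx_C$ appearing in the Step~1 rules, and since a non-terminal $A$ carrying a rule $A\to\tau$ with $\tau\in\Gamma_X$ must have $\bx_A=\{\tau\}$, the re-labelling of each $\epsilon$-leaf of $\decg_1$ back to a variable-operation leaf of $G_{\doc}$ is unique as well (validity of the ref-word, i.e.\ functionality again, guarantees the disjointness conditions on superscripts along the tree). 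For Step~2, the three eliminations (useless, then $\epsilon$-, then unit-productions) are the textbook constructions, which act as well-defined, surjective maps on parse trees; here the nullable non-terminals are exactly the useful ones of the form $A_\epsilon$, no non-terminal $A^{\bx,\by}_{i,j}$ is nullable because $i\le j$, and — since $\doc\neq\epsilon$ — the start symbol is never nullable, so no edge cases arise. Step~3 (relabelling $\sigma_i\mapsto(\bx,i,\by)$ and adding a fresh start symbol) is a parse-tree bijection turning the superscript data into the yield. Composing, the map from parse trees of $G$ of clean-image $\doc$ to parse trees of $\decgrmr{G_{\doc}}$ is a well-defined surjection, hence admits an injective section; this section sends a parse tree of $\decgrmr{G_{\doc}}$ with yield $w$ to a parse tree of $G$ of clean-image $\doc$ and decoration $w$.

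Finally, the conclusion. Let $w$ be a decorated word produced by $\decgrmr{G_{\doc}}$; it has at least one preimage under the above section, a parse tree $\tau_1$ of $G$ with yield $\br_1$, $\clean(\br_1)=\doc$ and $w(\tau_1)=w$, and by functionality $\br_1$ is valid, so $\br_1\in\refl(G,\doc)$ and $\mu^{\br_1}=\mu^{w}\in\repspnr{G}(\doc)$. If $w$ had two distinct parse trees in $\decgrmr{G_{\doc}}$, their preimages $\tau_1\neq\tau_2$ would be two parse trees of $G$ of clean-image $\doc$ and decoration $w$, yielding ref-words $\br_1,\br_2\in\refl(G,\doc)$ with $\mu^{\br_1}=\mu^{\br_2}=\mu^{w}$. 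If $\br_1=\br_2$ then $G$ has two parse trees for a single ref-word; if $\br_1\neq\br_2$ then $G$ has two ref-words for the single mapping $\mu^{w}\in\repspnr{G}(\doc)$. Either way $G$ is not unambiguous — a contradiction. Hence $w$ has a unique parse tree in $\decgrmr{G_{\doc}}$, which is what we wanted.

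The main obstacle is the middle paragraph: making the parse-tree bookkeeping through the construction fully precise, in particular that $\epsilon$- and unit-elimination behave as honest parse-tree maps in this decorated, subscripted setting and that functionality indeed forces every ``choice'' the decoration introduces (the superscripts and the $\epsilon$-leaf labels), so that the section is genuinely injective. Once that is in place, the reduction to unambiguity of $G$ is immediate.
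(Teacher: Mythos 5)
Your proposal is correct and takes essentially the same approach as the paper: transfer a putative ambiguity of $\decgrmr{G_{\doc}}$ back through a parse-tree correspondence to $G_{\doc}$ and hence to $G$, contradicting the extraction-grammar unambiguity of $G$ (the paper builds this correspondence via the collapsing map $f$ defined in the proof of Lemma~\ref{cor:decwd} rather than stepping through the $\epsilon$-/unit-elimination phases as you do). Your explicit case split between $\br_1=\br_2$ (two parse-trees of one ref-word) and $\br_1\neq\br_2$ (two ref-words for one mapping, which share the mapping because they share the decorated word) correctly covers both prongs of the definition, a point the paper's terser argument leaves implicit.
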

\begin{proof}
	It is straightforward that if $G$ is unambiguous then so is $G_{\doc}$ for any document $\doc$.
Combining  Claim 1 from the proof of Lemma~\ref{cor:decwd} and 
the fact $\decgrmr{G_{\doc}}$ is obtained from $G'_{\doc}$ by removing derivation rules, allow us to conclude the desired claim. 
\end{proof}

Finally, combining Proposition~\ref{prop:unamb}  and Lemma~\ref{cor:decwd} leads to the following direct conclusion.
\begin{corollary}
	For every functional unambiguous extraction grammar $G$ in CNF and for every document $\doc$,
	enumerating mappings in $\repspnr{G}(\doc)$ can be done by enumerating the decorated words in $\{w \vl S \Rightarrow^*_{\decgrmr{G_{\doc}}} w \}$.
\end{corollary}

To summarize the complexity of constructing $\decgrmr{G_{\doc}}$ we have:
\def\propcomp{For every functional unambiguous extraction grammar $G$ in CNF and for every document $\doc$, $\decgrmr{G_{\doc}}$ can be constructed in $O(|G_{\doc}|\,5^{2k})=O({|\doc|}^3|G|\,5^{2k})$ where $k$ is the number of variables associated with $G$.
} 
\begin{proposition} \label{prop:comp}
	\propcomp
\end{proposition}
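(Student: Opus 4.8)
The goal is to bound the size and construction time of $\decgrmr{G_{\doc}}$, which is built from $G_{\doc}$ in the three steps described above, in terms of $|G_{\doc}|$ and the number $k$ of variables. The plan is to track how many new production rules are created in Step~1, observe that Steps~2 and~3 do not increase the asymptotic size, and then substitute the bound on $|G_{\doc}|$ from Proposition~\ref{prop:compo}.

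\textbf{Step~1 analysis.} I would go through the six families of rules introduced in Step~1 and count, for each rule of $G_{\doc}$, how many decorated rules it spawns. Rules of the form $A_{i,i}\rightarrow\sigma_i$, $A_\epsilon\rightarrow\tau$, and $A_\epsilon\rightarrow B_\epsilon C_\epsilon$ each spawn a single decorated rule (the superscripts are forced to be $\emptyset$ or absent), so they contribute $O(|G_{\doc}|)$. The rule $A_{i,j}\rightarrow B_{i,j}C_\epsilon$ spawns one decorated rule for each choice of the pair $(\bx,\by)$ of disjoint subsets of $\Gamma_X$; since $|\Gamma_X| = 2k$, the number of ordered pairs of disjoint subsets is $3^{2k}$ (each of the $2k$ variable operations is in $\bx$, in $\by$, or in neither). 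The symmetric rule $A_{i,j}\rightarrow B_\epsilon C_{i,j}$ likewise contributes a factor $3^{2k}$. The binary rule $A_{i,j}\rightarrow B_{i,i'}C_{i'+1,j}$ ranges over pairwise disjoint $\bx,\by,\bz,\bw$, and the count of ordered $4$-tuples of pairwise disjoint subsets of a $2k$-element set is $5^{2k}$ (each element lands in one of the four sets or in none). This last family dominates, giving $|G^{(1)}| = O(|G_{\doc}|\,5^{2k})$ after Step~1, and each decorated rule is produced in constant time from its originating $G_{\doc}$ rule once the sets $\bx_A$ are precomputed, which by Proposition~\ref{prop:funcset} costs $O(|G|)\le O(|G_{\doc}|)$.

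\textbf{Steps~2 and~3.} For Step~2, eliminating useless non-terminals and $\epsilon$-productions does not increase grammar size, and I would invoke the standard fact that unit-production elimination, carried out as described (compute unit-reachability, then copy each non-unit rule $B\rightarrow\alpha$ up to every $A$ that unit-reaches $B$), produces a grammar of size $O(|N|\cdot|G^{(1)}|)$ in the worst case; here, however, the transitive closure over unit productions and the copying are both polynomial in $|G^{(1)}|$, and since the paper's stated bound is simply $O(|G_{\doc}|\,5^{2k})$ I would argue (as is standard for these CNF-normalization routines, cf.~\cite{DBLP:books/daglib/0016921}) that each of the three operations is linear in the current grammar size, keeping the bound at $O(|G_{\doc}|\,5^{2k})$. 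Step~3 adds one fresh rule $S\rightarrow S_{1,n}^{\bx,\by}$ per start-superscript pair (at most $3^{2k}\le 5^{2k}$ of them) and rewrites the $A_{i,i}^{\bx,\by}\rightarrow\sigma_i$ rules into $A_{i,i}^{\bx,\by}\rightarrow(\bx,i,\by)$ in place, so it is absorbed into the same bound.

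\textbf{Conclusion and main obstacle.} Combining, $|\decgrmr{G_{\doc}}| = O(|G_{\doc}|\,5^{2k})$ and the total construction time is of the same order; then Proposition~\ref{prop:compo} gives $|G_{\doc}| = O(|\doc|^3|G|)$, hence $O(|\doc|^3|G|\,5^{2k})$, which is the claimed bound. I expect the only genuinely delicate point to be the clean combinatorial identity $5^{2k}$ for pairwise-disjoint $4$-tuples of subsets of $\Gamma_X$ (and $3^{2k}$ for pairs) and making sure the unit-production-elimination step in Step~2 really does stay linear rather than blowing up by a factor of the number of non-terminals — I would handle the latter by noting that in our setting the unit productions arise only from the $\epsilon$-elimination of $B_\epsilon$ subtrees and the copied rules can be charged one-to-one against original Step~1 rules, so no extra polynomial factor is incurred.
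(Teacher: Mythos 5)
Your Step~1 accounting is exactly the paper's: the dominant blowup is the binary rules, counted by the number of ordered $4$-tuples of pairwise disjoint subsets of $\Gamma_X$, i.e.\ $5^{2k}$ (each of the $2k$ variable operations goes to one of the four sets or to none), giving size and construction time $O(|G_{\doc}|\,5^{2k})$, and the final bound follows by substituting $|G_{\doc}| = O(|\doc|^3|G|)$ from Proposition~\ref{prop:compo}. The place where your argument has a genuine gap is precisely the point you flag yourself: the unit-production elimination in Step~2$(iii)$. Your resolution --- ``the copied rules can be charged one-to-one against original Step~1 rules, so no extra polynomial factor is incurred'' --- is not justified. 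In the standard elimination (and as described in the construction), every non-unit rule $B\rightarrow\alpha$ is copied to \emph{every} non-terminal $A$ from which $B$ is reachable by unit productions, so a single rule can be copied once per such $A$; nothing in your argument bounds this multiplicity by a constant, and the fact that the unit productions originate from $\epsilon$-eliminating the $B_\epsilon$ children does not by itself prevent long unit chains with many predecessors of the same head. As stated, the claim that the whole of Step~2 is ``linear in the current grammar size'' is an assertion, not a proof.

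The paper closes this step differently, and you should adopt (or reprove) that argument: unit productions in the decorated grammar only connect non-terminals carrying the \emph{same} subscript $(i,j)$, so the unit-reachability computation decomposes into the $O(|\doc|^2)$ subscript groups, and within each group there are at most $O(|G|\,3^{2k})$ unit productions. This yields a cost of $O(|\doc|^2|G|\,3^{2k})$ for handling the unit productions, which is strictly dominated by the $O(|\doc|^3|G|\,5^{2k})$ term coming from Step~1, so the overall bound survives without needing your one-to-one charging. (Your treatment of the remaining pieces --- useless-symbol and $\epsilon$-elimination as linear scans, and Step~3 adding only $O(3^{2k})$ start rules plus in-place rewrites of the terminal rules --- is fine and matches the paper.)
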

\begin{proof}
	To analyze the complexity of Step 1, we remark that there are $5^{2k}$ four pairwise disjoint subsets of $\Gamma_X$ 
	and thus, the size of the resulting grammar is  $O(|G_{\doc}|5^{2k})$ and this is also the time complexity required for constructing it. 
	Step 2 includes identifying the stable non-terminals which can be done in $O(|G_{\doc}|5^{2k})$ due to Lemma~\ref{lem:stb}, and consists of a constant number of linear scans of $G'_{\doc}$, and hence can be done in $O(|G_{\doc}|5^{2k})$ time. 
	Thus, the total complexity of the construction of $\decgrmr{G_{\doc}}$ is $O(|G_{\doc}|\,5^{2k}) = O({|\doc|}^3|G|\,5^{2k})$.
\end{proof}

Before moving to the output stage of the algorithm, we discuss the ideas that allow us to obtain constant delay between every two consecutive outputs.

\subsection{The Jump Function}
When constructing the decorated grammar, we explained why we can stop the derivation when reaching stable non-terminals; It turns out, that our algorithm can also skip parts of the derivation.

Notice that if $G$ is associated with $k$ variables, there are exactly $2k$ variable operations in each ref-word produced by $G$. While this allows us to obtain an upper bound for the number of non-stable non-terminals in a parse-tree, their depth can be linear in  $|\doc|$. We demonstrate this in the following example. 
\begin{example}
	Consider the non-stable non-terminal $F_{n-1,n}^{\emptyset,\emptyset}$ in the partial parse-tree in Figure~\ref{fig:depth}. Observe that the depth of this  non-terminal is linear in $n \df |\doc|$.
\end{example}

To obtain a delay independent of $\doc$, we may
skip parts of the parse-tree in which no variable operation occurs. 
This idea somewhat resembles the one used by Amarilli et al.~\cite{DBLP:conf/icdt/AmarilliBMN19} in their constant delay enumeration algorithm for regular spanners represented as vset-automata. There, they defined a function that `jumps' from one state to the other if the path from the former to the latter does not contain any variable operation.
We extend this idea to extraction grammars by defining the notion of skippable productions.

Intuitively, for each non-terminal in a parse-tree the output mapping is affected either by its left subtree, or by its right subtree, or by the production applied on this non-terminal itself (or by any combination of the above). 
If the mapping is affected exclusively by the left (right, respectively) subtree then we can skip the production applied on this non-terminal and move to check the left (right, respectively) subtree; We can continue recursively until we reach a production for which this is no longer the case (that is, the mapping is affected by more than one of the above). 

Formally, a \emph{skippable} production rule is of the form 
$A_{i,j}^{\bx, \by} \rightarrow B^{\bx,\emptyset}_{i,i'} C^{\emptyset,\by}_{i'+1,j}$ 
where
\begin{itemize}
	\item[$(a)$] $A_{i,j}^{\bx, \by}$ is non-stable,
and
	\item[$(b)$] exactly one of $B^{\bx,\emptyset}_{i,i'},C^{\emptyset,\by}_{i'+1,j}$ is stable.
\end{itemize}
Intuitively, $(a)$ assures that the parse-tree rooted in $A_{i,j}^{\bx,\by}$ has an effect on the mapping; The empty sets in the superscripts of $B$ and $C$ assure that the production applied on $A_{i,j}^{\bx, \by}$ does not have an effect on the mapping and $(b)$ assures that exactly one subtree of $A_{i,j}^{\bx, \by}$ (either the one rooted at $B^{\bx,\emptyset}_{i,i'}$ if $C^{\emptyset,\by}_{i'+1,j}$ is stable, or the one rooted at $C^{\emptyset,\by}_{i'+1,j}$ if $B^{\bx,\emptyset}_{i,i'}$ is stable) has an effect on the mapping. 
We then say that a skippable production rule $\rho$ \emph{follows} a skippable production rule $\rho'$ if the non-stable non-terminal in the right-hand side  of $\rho'$ (i.e., after the $\rightarrow$) is the non-terminal in the left-hand side  of $\rho$ (i.e., before the $\rightarrow$).
We denote the set of non-terminals of $\decgrmr{G_{\doc}}$ by $\nterdec$, and define  
the  function $\jmp \colon \nterdec \rightarrow 2^{\nterdec}$ as follows:
$B\in \jmp(A_{i,j}^{\bx,\by})$ if there is a sequence of skippable production rules 
$\rho_1,\ldots, \rho_m$ such that:
\begin{itemize}
	\item  $\rho_{\iota}$ follows $\rho_{\iota-1}$ for every $\iota$,
	\item the left-hand side of $\rho_1$ is $A_{i,j}^{\bx,\by}$,
	\item the non-stable non-terminal in the right-hand side of $\rho_m$ is $B$, 
	\item there is a production rule that is non-skippable whose left-hand side is $B$.
\end{itemize}
\begin{example}
Figure~\ref{fig:depth} illustrates one of the parse-trees of a decorated grammar for which  $F_{n-1,n}^{\emptyset,\emptyset}\in \jmp(S_{1,n}^{\emptyset,\emptyset})$. In this parse-tree,  all left children are stable non-terminals. The root $S$ and all the right children except $H$ are non-stable. Intuitively, this allows us to ``jump'' from $S$ directly to $F$, which is the first point we reach a derivation that affects the mapping. This intuition is captured by the definition of the jump function.
\end{example}

The acyclic nature of the decorated grammar (that is, the fact that a non-terminal cannot derive itself) enables us to obtain the following upper bound for the computation of the $\jmp$ function.
\newcommand{\lemjmp}{\sloppy{For every functional unambiguous extraction grammar $G$ in CNF and for every document $\doc$, 
	the $\jmp$ function is computable in $O(|\doc|^5 3^{4k} |G|^2)$  where $k$ is the number of variables $G$ is associated with.}}
\begin{lemma}\label{lem:jmp}
	\lemjmp
\end{lemma}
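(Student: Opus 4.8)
The plan is to recast the computation of $\jmp$ as a reachability problem on an auxiliary directed graph $H$ built over the non-terminals of $\decgrmr{G_{\doc}}$. The vertices of $H$ are the non-terminals of $\decgrmr{G_{\doc}}$, and for every skippable rule $A_{i,j}^{\bx,\by}\rightarrow B^{\bx,\emptyset}_{i,i'}C^{\emptyset,\by}_{i'+1,j}$ I add an edge from $A_{i,j}^{\bx,\by}$ to its unique non-stable child --- which is well defined because condition $(c)$ of the definition of skippable rules forces exactly one of the two children to be stable. Unwinding the definition of $\jmp$, one sees that $B\in\jmp(A)$ holds precisely when there is a nonempty directed path from $A$ to $B$ in $H$ and $B$ is the left-hand side of at least one non-skippable rule. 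Moreover, along every edge of $H$ the subscript interval of the child spans strictly fewer positions of $\doc$ than that of its parent, so $H$ is acyclic; hence $A\notin\jmp(A)$, and the reachability queries can be answered either by dynamic programming over $H$ taken in reverse topological order or, just as well, by running one graph search from each vertex.

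Concretely, I would proceed as follows. First, compute the set of stable non-terminals of $\decgrmr{G_{\doc}}$ using Lemma~\ref{lem:stb} ($O(|G_{\doc}|\,5^{2k})$ time) and the sets $\bx_A$ of Proposition~\ref{prop:funcset} ($O(|G|)$ time). With these precomputed, each of the conditions $(a)$--$(c)$ can be tested on a given production rule in constant time, since they only inspect the superscripts occurring in the rule together with a stability lookup. So a single scan of the grammar --- of cost $O(|\decgrmr{G_{\doc}}|)=O(|G_{\doc}|\,5^{2k})$ by Proposition~\ref{prop:comp} --- classifies every rule as skippable or not, records for every non-terminal whether it is the left-hand side of some non-skippable rule, and builds $H$. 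Finally, for each non-terminal $A$ I run a graph search from $A$ in $H$ and collect those reachable vertices that carry a non-skippable rule; this set is $\jmp(A)$.

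It remains to bound the cost of this last step, which dominates. A non-terminal of $\decgrmr{G_{\doc}}$ has the form $A_{i,j}^{\bx,\by}$ where $A$ ranges over the $O(|G|)$ non-terminals of $G$, the pair $(i,j)$ over the $O(|\doc|^2)$ index pairs with $i\le j$, and $(\bx,\by)$ over the disjoint pairs of subsets of the $2k$-element set $\Gamma_X$ (each of the $2k$ elements lies in $\bx$, in $\by$, or in neither), giving $3^{2k}$ choices; so $H$ has $N=O(|G|\,|\doc|^2\,3^{2k})$ vertices. Each edge of $H$ stems from a skippable rule, and the shape of a skippable rule pins its two inner superscript sets to $\emptyset$: such a rule is therefore determined by a binary rule of $G$, a triple $1\le i\le i'<j\le |\doc|$, and a disjoint pair $(\bx,\by)$, so $H$ has $E=O(|G|\,|\doc|^3\,3^{2k})$ edges. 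Running a search from each of the $N$ vertices costs $O\!\big(N\,(N+E)\big)=O\!\big(|G|^2\,|\doc|^5\,3^{4k}\big)$, which absorbs the $O(|G_{\doc}|\,5^{2k})=O(|G|\,|\doc|^3\,5^{2k})$ spent in the preceding steps since $5^{2k}\le 3^{4k}$.

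The one place where care is really needed --- and the step I expect to be the main obstacle --- is precisely this counting argument. The decorated grammar $\decgrmr{G_{\doc}}$ can itself have $\Theta(|G_{\doc}|\,5^{2k})$ production rules, coming from rules whose four superscript sets range over all pairwise-disjoint quadruples; reading off a bound on $|\jmp|$ directly from the grammar would lose too much. The key observation is that only skippable rules contribute edges of $H$, and a skippable rule has two forced-empty inner sets, which trims the exponential factor from $5^{2k}$ down to $3^{2k}$ and keeps the number of edges at $O(|G_{\doc}|\,3^{2k})$; combined with the fact that one reachability sweep per non-terminal (rather than a full all-pairs transitive closure) already computes all of $\jmp$, this yields the claimed $O(|\doc|^5\,3^{4k}\,|G|^2)$ bound.
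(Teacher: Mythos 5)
Your proposal is correct and takes essentially the same route as the paper's own proof: its \tsc{computeJump} algorithm likewise treats $\jmp$ as reachability along skippable rules (each pointing to its unique non-stable child), exploits acyclicity via a topological order, filters targets by the existence of a non-skippable rule, and obtains the bound from the same counts --- $O(|G|\,|\doc|^3\,3^{2k})$ skippable rules combined with sets/searches of size $O(|G|\,|\doc|^2\,3^{2k})$ --- the only difference being its reverse-topological set accumulation versus your per-source graph search, which is immaterial for the stated complexity. One small caveat: the paper's algorithm seeds $\jmp(A)$ with $A$ itself whenever $A$ heads some non-skippable production (and the enumeration procedure relies on this reflexive case), so whereas you assert $A\notin\jmp(A)$ under a strict ``nonempty chain'' reading of the definition, your computation should also report the source vertex $A$ when it has a non-skippable rule.
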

\begin{proof}
	We claim that Algorithm~\ref{alg:jmp} computes $\jmp$: it gets the decorated grammar $\decgrmr{G_{\doc}}$ and the set $\sf{skippable}(G_{\doc})$ of $\decgrmr{G_{\doc}}$'s skippable productions as input,
	and it outputs the
	array $\sf{jmp}$ for which  $\sf{jmp}[A,i,j,\bx,\by]$ is
	$\jmp(A_{i,j}^{\bx,\by})$. 
	\begin{algorithm}[t]

		\SetKw{Output}{output}
		\SetKw{Initialize}{initialize}
		\SetKw{Denote}{denote}
		\SetKw{Procedure}{procedure}
		\Procedure{$\tsc{computeJump}\big(\decgrmr{G_{\doc}},\sf{skippable}(G_{\doc})\big) $}
		\ForEach{non-terminal $A_{i,j}^{\bx,\by}$ in $\decgrmr{G_{\doc}}$}
		{ $\sf{reachable}[A, i,j,\bx,\by] :=\{ A_{i,j}^{\bx,\by} \}$} 
		
		\ForEach{non-terminal $A$ in $\decgrmr{G_{\doc}}$}
		{ insert $A$ to $\sf{remove}$}  
		
		\ForEach{production rule $\rho$ in $\decgrmr{G_{\doc}}$}
		{
			\If{$\rho$ is non-skipabble with left-hand side $A_{i,j}^{\bx,\by}$}
			{
				$\sf{remove} := \sf{remove} \setminus\{ A_{i,j}^{\bx,\by}\}$  
		}				}
		
		\ForEach{non-terminal $A_{i,j}^{\bx,\by}$ in $\decgrmr{G_{\doc}}$}
		{$\sf{jmp}[A, i,j,\bx,\by] :=\{A_{i,j}^{\bx,\by}\} \setminus \sf{remove}
			$} 
		\Initialize{ \text{perform a topological sort on non-terminals}\;}
		{	\ForEach{ skippable $\rho \df A_{i,j}^{\bx,\by} \rightarrow B_{i,\ell}^{\bx,\emptyset} C_{\ell+1,j}^{\emptyset,\by} $ in $\decgrmr{G_{\doc}}$ in reverse topological order on their left-hand sides}
			{
				\If{ $B_{i,\ell}^{\bx,\emptyset}$ is non-stable}
				{$\sf{reachable}{[A,i,j,\bx,\by]}:= \sf{reachable}[A,i,j,\bx,\by] \cup \sf{reachable}[B,i,\ell,\bx,\emptyset]$\;
					$\sf{jmp}{[A,i,j,\bx,\by]}:= \sf{reachable}[A,i,j,\bx,\by] \setminus \sf{remove}$\;}
				\If{$C_{\ell+1,j}^{\emptyset,\by}$ is non-stable}
				{$\sf{reachable}[A,i,j,\bx,\by] \df \sf{reachable}[A,i,j,\bx,\by] \cup \sf{reachable}[C,\ell+1,j,\emptyset, \by]$
					$\sf{jmp}{[A,i,j,\bx,\by]}:= \sf{reachable}[A,i,j,\bx,\by] \setminus \sf{remove}$\;}
			}
			
		}
		\Output{$\sf{jmp}$}
		\caption{Compute The Jump \label{alg:jmp}}
 \end{algorithm}
	\sloppy{The algorithm uses an auxiliary array $\sf{reachable}$, and an auxiliary set $\sf{remove}$. }
	The correctness of the algorithm is based on the two following claims:
	
	\paragraph*{Claim 1}
	The array $\sf{reachable}$ stores in cell $\sf{reachable}[A,i,j,\bx,\by]$ all of the non-terminals $B$ such that there is a sequence of skippable productions  
	$\rho_1,\ldots, \rho_m$ 
	such that:
	\begin{itemize}
		\item  $\rho_{\iota}$ follows $\rho_{\iota-1}$ for every $\iota$,
		\item the left-hand side of $\rho_1$ is $A_{i,j}^{\bx,\by}$,
		\item the non-stable non-terminal in the right-hand side of $\rho_m$ is $B$ 
	\end{itemize}

	\paragraph*{Claim 2}
	The set $\sf{remove}$ contains 
	all of the non-terminals $B$ 
	for which there is no non-skippable production with $B$ in the left-hand side. (That is, all of the productions that have $B$ in their left-hand side are skippable.)

	These two claims allow us to conclude that if $B \in \sf{reachable}[A,i,j, \bx,\by]$ and $B\not \in \sf{remove}$, 
	there is a sequence of skippable productions  
	$\rho_1,\ldots, \rho_m$ such that:
	\begin{itemize}
		\item  $\rho_{\iota}$ follows $\rho_{\iota-1}$ for every $\iota$,
		\item the left-hand side of $\rho_1$ is $A_{i,j}^{\bx,\by}$,
		\item the non-stable non-terminal in the right-hand side of $\rho_m$ is $B$, 
		\item there is \emph{no} production rule that is non-skippable whose left-hand side is $B$.
	\end{itemize}
	This, in turn, completes the algorithm's proof of correctness.
	
	We shall now explain why both claims hold.
	Claim 2 is derived directly from the way we initialize $\sf{remove}$, and omit elements from it in the third \emph{for-each} loop.
	
 The proof of Claim 1 is more involved.
 Notice that $\sf{reachable}$ is initialized with all the non-terminals of the grammar. 
Let us focus on the last \emph{for-each} loop in which we update $\sf{reachable}$.
Notice that we iterate over the skippable derivations in reverse topological order on their left-hand sides.
We denote by 
$m_{k}$ the last iteration for which the left-hand side of $\rho$ is the element in the $k$th position.  
We show that at the end of iteration $m_{k}$ it holds that if $A_{i,j}^{\bx,\by}$ is the element in the $k$th position then $\sf{reachable[A,i,j,x,y]}$ is updated correctly.  
For the induction base $k = 1$ and the claim holds due to the initialization. 
For the step, assume the claim holds for $m_{k}$. Let us denote the left-hand side of skippable rule of iterations 
$m_{k}+1,\ldots, m_{k+1}$ 
by $A_{i,j}^{\bx,\by}$ which is the element in the $k$th position.
In this case, all skippable rules are of the form 
$\rho \df A_{i,j}^{\bx,\by}\rightarrow B_{i,\ell}^{\bx,\emptyset} C_{\ell+1,j}^{\emptyset, \by}$ with exactly one of $B_{i,\ell}^{\bx,\emptyset} C_{\ell+1,j}^{\emptyset, \by}$ being non-stable.
Notice that in these iterations we go over all these skippable rules with $A_{i,j}^{\bx,\by}$ on their left-hand sides. Indeed, we add  to  $\sf{reachable[A,i,j,x,y]}$
all elements in $\sf{reachable[B,i,\ell,\bx,\emptyset]}$ in case $B_{i,\ell}^{\bx,\emptyset}$ is non-stable, and $\sf{reachable[C,\ell+1,j,\emptyset,\by]}$ if $C_{\ell+1,j}^{\emptyset, \by}$ is non-stable. Since we do so for all skippable rules with $A_{i,j}^{\bx,\by}$ on their left-hand side, and since by induction hypothesis $\sf{reachable[B,i,\ell,\bx,\emptyset]}$ and  $\sf{reachable[C,\ell+1,j,\emptyset,\by]}$ are updated correctly (since they must be in position $<k$ due to $\rho$), the conditions in the claim holds and we can conclude the proof.

	\paragraph*{Complexity}
	We can check in $O(1)$ whether a non-terminal is stable. To do so, we can run the algorithm described in the proof of Lemma~\ref{lem:stb} and store the data in a lookup table with non-terminals as keys and value which is either `stable' or `non-stable'.
	The runtime of this procedure is $O(|G_{\doc}|5^{2k})$.
	In addition, since we can check in $O(1)$ whether a non-terminal is stable or not, we can check in $O(1)$ whether a rule in $\decgrmr{G_{\doc}}$ is skippable just by verifying that the conditions in the definition holds (i.e., its left-hand side is non-stable and exactly one of the non-terminals in its right-hand side is stable). To do so, we can also use the above look-up table, and for each rule check whether all of the conditions of the definition hold in $O(1)$. 
	We can store this information in a lookup table with the rules as keys and the value is either `skippable' or `non-skippable'.
	The runtime of this procedure is also $O(|G_{\doc}|5^{2k})$.
	
	The  first four \emph{for-each} loops are used for initialization and  require $O(|G_{\doc}|5^{2k})= O(|\doc|^3 |G| 5^{2k})$ altogether as they iterate through all production rules of $\decgrmr{G_{\doc}}$.
	%
	The topological sort of the production rules requires $O(|\doc|^3 |G| 5^{2k})$.

	In the last \emph{for-each} loop, we iterate through all skippable rules of the decorated grammar $\decgrmr{G_{\doc}}$ ordered by their head in reverse topological order.
	There are at most $O(|\doc|^3|G| 3^{2k})$ such rules (since their form is restricted to $A_{i,j}^{\bx,\by} \rightarrow B_{i,\ell}^{\bx,\emptyset} C_{\ell+1,j}^{\emptyset, \by} $).
	In each such iteration, we 
	compute set unions and set difference. 
	The size of each of these sets is bounded by $O(|{\doc}|^2 3^{2k}|G|)$.
	Thus, the total time complexity  of the second for loop is $O(|{\doc}|^5 3^{4k}|G|^2)$.
	And finally, the total complexity of the algorithm is 
	$O(|{\doc}|^5 3^{4k}|G|^2)$.
\end{proof}

According to Lemmas~\ref{lem:stb} and~\ref{lem:jmp} the overall computation time required to find the 
non-stable non-terminals as well as to compute the $\jmp$ function is quintic in the document size and can therefore be included in the preprocessing stage. 

It is important to note that if we can reduce the complexity of computing the $\jmp$  function to cubic then we can reduce the whole preprocessing time to cubic.

\OMIT{
\section{Output Stage of the Enumeration Algorithm}\label{sec:en}

In the output stage of our algorithm, we build recursively the parse-trees of the decorated grammar $\decgrmr{G_{\doc}}$ that was constructed in the preprocessing stage.

\subsection{The Output Stage Algorithm}

The enumeration procedure is presented in Algorithm~\ref{alg:enum}. 
The recursive $\enum$ procedure
represents the
$(X,\doc)$-mappings $\mu$ as sets of pairs $(\vop{x},i), (\vcl{x},j)$ whenever $\mu(x) = \mspan{i}{j}$. This representation is useful as $\enum$ gradually constructs the output mappings during the execution.

\begin{figure}[h]
	\centering
	\begin{minipage}[b]{0.8\textwidth}
		\myalg
	\end{minipage}
\end{figure}

The procedure calls
$\applyProd$ with a non-terminal $ A^{\bx,\by}_{i,j}$ as input. Since  $ A^{\bx,\by}_{i,j}$ is returned by $\jmp$, it holds that it is $(a)$ non-stable and $(b)$ appears at the left-hand side of at least one rule that is non-skippable. Thus, there is a production that we can apply on it that has an effect on the output mapping.
The procedure $\applyProd$ outputs with constant delay all those pairs  $(\beta, \map)$ for which there exists a rule that is \emph{not} skippable and is of the form  $A_{i,j}^{\bx, \by} \rightarrow B^{\bx,\bz}_{i,i'} C^{\bz',\by}_{i'+1,j}$ where the following hold:
\begin{itemize}
	\item  
	$\map = 	\{(\tau,i'+1) 
	\vl \tau \in \bz \cup \bz' \} $, and
	\item
	$\beta$ is the concatenation of the non-stable terminals amongst  
	$B^{\bx,\bz}_{i,i'} $ and $C^{\bz',\by}_{i'+1,j}$.
\end{itemize}
Notice that since $A^{\bx,\by}_{i,j}$ is non-stable, and since $A_{i,j}^{\bx, \by} \rightarrow B^{\bx,\bz}_{i,i'} C^{\bz',\by}_{i'+1,j}$
is non-skippable,  either  $\bz  \ne \emptyset$ or $\bz' \ne \emptyset$ (or both). Thus, the returned $\map$ is not empty which implies that every call to $\applyProd$ adds at least one pair to the mapping, and thus the number of calls is bounded.
Notice also that $\beta$ is the concatenation of the non-terminals among $ B^{\bx,\bz}_{i,i'} $ and $C^{\bz',\by}_{i'+1,j}$ that affect the mapping.
\begin{example}
	The procedure $\applyProd$ applied on $S_{1,6}^{\vop{x}\vop{y}, \vcl{z}}$ from Figure~\ref{fig:tree} adds the pair $(\vop{z},5)$ to $\map$; When applied
	on $A_{1,4}^{\vop{x} \vop{y},\vop{z}}$, adds the pair $(\vcl{y},4)$ to $\map$; When applied
	on $B_{5,6}^{\emptyset,\vcl{z}}$, adds the pair $(\vcl{x},6)$ to $\map$.
\end{example}

\begin{algorithm}[h]
	\SetKw{Return}{return}
	\SetKw{Procedure}{procedure}
	\SetKw{Output}{output}
	\Procedure{$\applyProd(A_{i,j}^{\bx,\by}) $}
	\\
	initialize $\map=\emptyset$\;  
	
	\ForEach{non-skippable production of the form $A_{i,j}^{\bx,\by}\rightarrow B^{\bx,\bz}_{i,\ell}C^{\bw,\by}_{\ell+1,j}$  }
	{
		
		$\map = \emptyset$\;
		
		\ForEach{ $x\in \bz \cup \bw $}
		{$\map = \map \cup \{ (x, {\ell+1}) \}$}

		\If{ $B^{\bx,\bz}_{i,\ell}$ and $C^{\bw,\by}_{\ell+1,j}$ are non-stable}
		{$\beta = B^{\bx,\bz}_{i,\ell} C^{\bw,\by}_{\ell+1,j}  $}
		\If{ $B^{\bx,\bz}_{i,\ell}$ is non-stable and $C^{\bw,\by}_{\ell+1,j}$ is stable}
		{$\beta = B^{\bx,\bz}_{i,\ell}  $}
		\If{ $B^{\bx,\bz}_{i,\ell}$ is  stable and $C^{\bw,\by}_{\ell+1,j}$ is non-stable}
		{$\beta = C^{\bw,\by}_{\ell+1,j} $}
		\If{ $B^{\bx,\bz}_{i,\ell}$  and $C^{\bw,\by}_{\ell+1,j}$ are stable}
		{$\beta = \epsilon $}
		
		\Output$(\beta , \map)$
	}	
	\caption{Apply Production\label{alg:applyprod} }
\end{algorithm}

The recursive procedure $\enum$ outputs the mapping as a set of pairs of the form $(\gamma, i )$ with $\gamma \in \Gamma_X$ a variable operation and $1\le i \le n$ is $\gamma$'s position in this mapping. 
The main enumeration algorithm calls 
the recursive procedure $\enum$ with pairs $(S_{1,n}^{\bx,\by}, \map)$ where $S_{1,n}^{\bx,\by}$ is a non-terminal in $\decgrmr{G_{\doc}}$, and $\map$ is the set containing pairs $(\tau,1)$ for any $\tau\in \bx$, and $(\tau,n+1)$ for any $\tau\in \by$.
The recursive procedure $\enum$ gets a pair $(\alpha, \map)$ as input where $\alpha$ is a (possibly empty) sequence of non-stable non-terminals, and $\map$ is a set of pairs of variable operation and position.
 $\enum$ recursively constructs an output mapping by applying derivations on the non-stable non-terminals (by calling $\applyProd$) while skipping the skippable productions (by using $\jmp$).
We assume that $\enum$ has $O(1)$ access to everything computed in the preprocessing stage, that is, the grammar $\decgrmr{G_{\doc}}$, the $\jmp$ function, and the sets of stable and non-stable non-terminals.
\newcommand{\thmenumenum}{For every functional unambiguous extraction grammar $G$ in CNF and for every document $\doc$, the main enumerating algorithm described above enumerates the mappings in $\repspnr G (\doc)$ (without repetitions) with delay of $O(k)$ between each two consecutive mappings where $k$ is the number of variables $G$ is associated with.}
\begin{theorem}\label{thm:enumenum}
	\thmenumenum
\end{theorem}
Had $G$ been ambiguous, the complexity guarantees on the delay would not have held since the same output might have been outputted more than once.

Finally, we remark that the proof of Theorem~\ref{thm:enum} follows from 
Corollary~\ref{cor:decwd}, Proposition~\ref{prop:comp}, Lemma~\ref{lem:stb}, Lemma~\ref{lem:jmp}, and Theorem~\ref{thm:enumenum}.
	\subsection{Proof of Theorem~\ref{thm:enumenum}}

In what follows, we treat each $(X,\doc)$-mapping $\mu$ as a set  $\{(\vop{x},i),(\vcl{x},j) \,\vline\, \mu(x) \df \mspan{i}{j} \}$ of pairs.
Before proving the theorem itself we prove some lemmas.

\newcommand{\si}{S_{1,n}^{\bx,\by}}
\newcommand{\ai}{A_{i,j}^{\bx,\by}}
\newcommand{\bci}{B_{i,i'}^{\bx,\bz} C_{i'+1, j}^{\bw,\by}}
\newcommand{\abc}{A_{i,j}^{\bx,\by} \rightarrow B_{i,i'}^{\bx,\bz} C_{i'+1, j}^{\bw,\by}}
\newcommand{\stb}{\tsc{nonStable}}

We say that a mapping $A$ is compatible with a set $B$ of pairs of the form $(\tau,i)$ with $\tau \in \Gamma_X$ and $1\le i\le n$ if $B\subseteq A$.
\begin{lemma}
	The procedure $\enum(S_{1,n}^{\bx,\by},\map)$ with $\map \df \{ (x,1) \vl x\in \bx  \} \cup \{ (y,n+1) \vl y\in \by \}$  outputs all mappings $\mu^w$ compatible with $\map$ for a decorated word $w$ produced by $\decgrmr{G_{\doc}}$.
\end{lemma}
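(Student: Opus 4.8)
The plan is to prove, by induction on the total number of nodes in a family of parse trees, a statement about arbitrary recursive calls $\enum(\alpha,\map)$, of which the lemma is the instance $\alpha=S_{1,n}^{\bx,\by}$. For a parse tree $T$ of $\decgrmr{G_{\doc}}$ whose root is labelled by a non-terminal, write $w_T$ for the decorated (sub)word it produces and $\widehat\mu(T)$ for the set of pairs $(\tau,p)$ such that the variable operation $\tau$ occurs at boundary $p$ of $w_T$ (a decorated symbol $(\bz,p,\bw)$ placing $\bz$ at boundary $p$ and $\bw$ at boundary $p+1$). Two facts are immediate and used throughout: $(i)$ if the root production of $T$ is binary, $D\to D_L\,D_R$, with subtrees $T_L,T_R$, then $w_T=w_{T_L}w_{T_R}$, hence $\widehat\mu(T)=\widehat\mu(T_L)\cup\widehat\mu(T_R)$; $(ii)$ for $T$ rooted at $A_{i,j}^{\bx,\by}$ one has $\widehat\mu(T)\supseteq\{(\tau,i)\mid\tau\in\bx\}\cup\{(\tau,j+1)\mid\tau\in\by\}$, with equality when $A_{i,j}^{\bx,\by}$ is stable --- this follows from the definition of stability, Proposition~\ref{prop:funcset}, and Lemma~\ref{cor:decwd}, since the set of operations occurring in $w_T$ is exactly $\bx_A$.

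The strengthened statement is the following \emph{Claim}: for every finite sequence $\alpha=A_1\cdots A_m$ of non-stable non-terminals of $\decgrmr{G_{\doc}}$, every set $\map$ of pairs that contains the two extreme-boundary operations of each $A_\iota$, and every choice of parse trees $T_1,\dots,T_m$ rooted at $A_1,\dots,A_m$, the call $\enum(\alpha,\map)$ outputs $\map\cup\widehat\mu(T_1)\cup\cdots\cup\widehat\mu(T_m)$. The lemma follows by taking $\alpha$ to be the single non-terminal $S_{1,n}^{\bx,\by}$ and $\map=\{(x,1)\mid x\in\bx\}\cup\{(y,n+1)\mid y\in\by\}$, which is precisely the boundary part of $\widehat\mu(T)$ for any $T$ rooted at $S_{1,n}^{\bx,\by}$; then $\map\cup\widehat\mu(T)=\widehat\mu(T)=\mu^{w_T}$, and ranging over all such $T$ yields every $\mu^w$ with $S_{1,n}^{\bx,\by}\Rightarrow^*_{\decgrmr{G_{\doc}}}w$, i.e.\ every mapping compatible with $\map$ among the decorated words this non-terminal produces. (If $S_{1,n}^{\bx,\by}$ is itself stable it produces a unique $w$ with $\mu^w=\map$, output directly.)

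The induction is on $|T_1|+\cdots+|T_m|$. If this is $0$ then $m=0$, $\alpha=\epsilon$, and $\enum(\epsilon,\map)$ outputs $\map$, the empty union. Otherwise write $\alpha=A_1\cdot\alpha'$ and look inside $T_1$: starting at the root, repeatedly move to the unique non-stable child as long as the production applied is skippable. Since non-stable non-terminals carry only binary productions this child is well defined and again non-stable, and since $\decgrmr{G_{\doc}}$ is acyclic the descent is finite, so it halts at a non-terminal $B$ to which a non-skippable binary production is applied; the chain of skippable productions from $A_1$ to $B$ witnesses $B\in\jmp(A_1)$. Each skippable production on the chain adds no operation at its split boundary (empty middle superscripts), and by fact $(ii)$ the stable child it discards carries only operations at boundaries extreme for an ancestor, hence already in $\map$. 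Let $B\to B_L\,C_R$ be the non-skippable production used at $B$ in $T_1$, with $\bz$ the right-superscript of $B_L$, $\bz'$ the left-superscript of $C_R$, and $\ell+1$ the split boundary; it is enumerated by $\applyProd(B)$, giving the pair $(\beta,\map')$ with $\beta$ the concatenation of the non-stable non-terminals among $B_L,C_R$ and $\map'=\{(\tau,\ell+1)\mid\tau\in\bz\cup\bz'\}$. Using $(i)$--$(ii)$ and the chain observations one checks $\map\cup\widehat\mu(T_1)=\map\cup\map'\cup\bigcup_{D\in\beta}\widehat\mu(T_D)$, where $T_D$ is the subtree of $T_1$ rooted at $D$ (a stable child among $B_L,C_R$ contributing nothing outside $\map\cup\map'$). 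Hence the algorithm performs the call $\enum(\beta\cdot\alpha',\map\cup\map')$; the trees $\{T_D\mid D\in\beta\}$ followed by $T_2,\dots,T_m$ form a valid witnessing family for it, of strictly smaller total size (node $B$, together with the chain nodes above it including $A_1$, is dropped), and $\map\cup\map'$ still contains the extreme-boundary operations of all non-terminals in $\beta\cdot\alpha'$. By the induction hypothesis that call outputs $\map\cup\map'\cup\bigcup_{D\in\beta}\widehat\mu(T_D)\cup\bigcup_{\iota\ge 2}\widehat\mu(T_\iota)$, which by the identity equals $\map\cup\widehat\mu(T_1)\cup\cdots\cup\widehat\mu(T_m)$, closing the induction.

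The delicate point --- and the part I expect to take most care --- is the bookkeeping behind "operations already in $\map$": one must maintain, as part of the invariant, that for every currently pending non-terminal the operations at its two extreme boundaries already lie in the running map, and then verify this is preserved both while descending the skippable chain (the discarded stable siblings sit at boundaries extreme for an ancestor and have empty superscripts on the side facing the chain) and at the non-skippable step (the split-boundary operations are exactly $\map'$, and the outer boundaries of $B_L,C_R$ coincide with those of $B$, whose operations are in $\map$). Establishing this invariant, and hence the displayed identity, is where the construction of $\decgrmr{G_{\doc}}$, the definition of $\jmp$, and the description of $\applyProd$ must all be lined up precisely; the remainder is a routine structural induction on parse trees of the acyclic grammar $\decgrmr{G_{\doc}}$.
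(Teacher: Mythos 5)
Correct, and essentially the paper's own argument: both proofs simulate a run of $\enum$ along the derivation (parse tree) of each decorated word, showing that the chain of skippable productions is precisely what $\jmp$ skips, that $\applyProd$ supplies the matching non-skippable production with the split-boundary operations $\map'$, and closing by induction. The only difference is bookkeeping: the paper uses a nested induction over leftmost derivations with ``productive'' productions and a morphism erasing stable non-terminals, whereas you induct on total parse-tree size with an explicit invariant that the extreme-boundary operations of every pending non-terminal already lie in $\map$ --- which, if anything, makes the ``discarded stable siblings contribute nothing new'' step more explicit than in the paper's version.
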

\begin{proof}
	Let $w$ be a decorated word such that 
	$\gamma_1 \Rightarrow \cdots \Rightarrow \gamma_m \Rightarrow w$ in a leftmost derivation where every $\gamma_i \ne \gamma_j$ and $\gamma_1 = \si$.
	We recursively define $\mu^{w,j}$ as follows:
	$\mu^{w,m}$ is $\map$. 
	We denote the production rule applied in $\gamma_{\ell} \Rightarrow \gamma_{\ell+1}$ by $\abc$ and $\mu^{w,\ell}$ is the union of $\mu^{w,\ell+1}$ with
	$\{ (\tau,{i'+1} ) | \tau \in \bz \cup \bw \} $. 
	We say that $\abc$ is \emph{productive} if $\mu^{w,\ell}$ strictly contains $\mu^{w,\ell+1}$.
	We define the morphism $\stb$ that maps stable non-terminal and non-terminals into $\epsilon$ and acts as the identity on the non-stable non-terminals.
	
	We prove that if $\gamma_{\ell} \Rightarrow^* w$ where $\gamma_{\ell}$ is a sequence over the non-terminals and terminals then $\enum(\stb({\gamma_{\ell}}),\emptyset)$ returns the mapping $\mu^{w,\ell}$.
	The proof is done by a nested induction: on $\ell -i$ for $i=0,\cdots \ell-1$ and an inner induction on the number of productive productions applied throughout $\gamma_{\ell} \Rightarrow^* w$.
	If there were no productive productions applied then $\mu^{w,\ell} = \emptyset$ and, in addition, this implies that $\stb(\gamma_{\ell })= \epsilon$. Indeed $\enum(\epsilon, \emptyset)$ returns $\emptyset$ which completes the induction basis. 
	For the induction step: Let us assume that there is one or more productive production applied in  $\gamma_{\ell} \Rightarrow^* w$. This implies that there is at least one non-stable non-terminal in $\gamma_{\ell}$. Let us find the left most non-stable non-terminal $\ai$ in $\gamma_{\ell}$ and denote $\gamma_{\ell } = \alpha A \beta$.
	We can write $$\alpha A \beta \Rightarrow^* \alpha' \ai \beta' \Rightarrow \alpha' \bci \beta' \Rightarrow^* w  $$ where no productive production was applied here $\alpha A \beta \Rightarrow^* \alpha' \ai \beta'$ and $\abc$ is a productive production. We then denote $\gamma_{\ell} = \alpha A \beta$, $\gamma_{\ell'}  = \alpha' \ai \beta' $, and $\gamma_{\ell'+1}  = \alpha' \bci \beta' $. 
	We distinguish between two cases: 
	\begin{itemize}
		\item 
		If $A$ is a non-stable non-terminal then it holds that $\ai \in \jmp(A)$ and then we consider the run $\enum(\stb(\alpha A \beta),\emptyset)$ in which we enter the outer for  loop with $\ai$ and choose from the output of $\applyProd$ the pair $(\beta', \map')$ that corresponds with the iteration in $\applyProd$ that matches $\abc$.
		Note that in this case $\map'$ is the union $\{ (\tau, {i'+1} ) | x\in \bz\cup \bw  \} $.
		We then continue with calling $\enum(\stb(\gamma_{\ell'}+1),\map')$.
		By induction hypothesis and by the definition of $\enum$ (and in particular the fact that it accumulates the output mapping throughout the run), it holds that $\enum(\stb(\alpha A \beta),\emptyset )$ returns $\mu^{w,\ell}$. 
		\item
		If $A$ is a stable non-terminal then it holds that   $\ai \in \jmp(A')$ where $A'$ is the first non-stable non-terminal to the right of $A$. In this case, the proof continues similarly. 
	\end{itemize}
	
	We can finally conclude the desired claim.
\end{proof}

\begin{lemma}
	The procedure $\enum(S_{1,n}^{\bx,\by})$ outputs every mapping $\mu^w$ that corresponds with a decorated word $w$ produced by $\decgrmr{G,\doc}$ exactly once.
\end{lemma}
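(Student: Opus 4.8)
The preceding lemma already shows that every such mapping is output at least once: by Lemma~\ref{cor:decwd}, $\repspnr{G}(\doc)=\{\mu^w \mid S\Rightarrow^*_{\decgrmr{G_{\doc}}} w\}$, and any derivation witnessing $S\Rightarrow^*_{\decgrmr{G_{\doc}}} w$ starts with a step $S\Rightarrow S_{1,n}^{\bx,\by}$ in which $\bx$ and $\by$ are exactly the variable operations occurring before $\sigma_1$ and after $\sigma_n$ in $w$; hence $\mu^w$ is compatible with the set $\map$ with which the main algorithm invokes $\enum(S_{1,n}^{\bx,\by},\map)$ in the corresponding iteration of its loop, and the preceding lemma gives that $\mu^w$ is output there. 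What remains is to prove that no mapping is output twice, and this is where both forms of unambiguity enter.

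The plan is to exhibit a bijection between the output events of the main algorithm and the parse trees of $\decgrmr{G_{\doc}}$, sending an event that issues a mapping $\mu$ to a parse tree $T$ with $\mu^{w(T)}=\mu$, where $w(T)$ is the decorated word read at the leaves of $T$. Granting this, ``exactly once'' follows from a rigidity observation: since $G$ is unambiguous, every $\mu\in\repspnr{G}(\doc)$ has a \emph{unique} valid ref-word $\br$ with $\mu^{\br}=\mu$, hence (unambiguity of $G$ passing to $G_{\doc}$) a unique parse tree of $G_{\doc}$ producing $\br$; applying the parse-tree bijection $f$ used in the proof of Lemma~\ref{cor:decwd} and Proposition~\ref{prop:unamb} turns this into a unique parse tree of $\decgrmr{G_{\doc}}$ whose decorated word $w$ satisfies $\mu^w=\mu$. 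Therefore, if some $\mu$ were issued by two distinct output events, we would obtain two distinct parse trees $T_1\neq T_2$ of $\decgrmr{G_{\doc}}$ with $\mu^{w(T_1)}=\mu^{w(T_2)}=\mu$: the case $w(T_1)\neq w(T_2)$ contradicts the uniqueness just derived from the unambiguity of $G$, and the case $w(T_1)=w(T_2)$ contradicts the unambiguity of $\decgrmr{G_{\doc}}$.

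To build the bijection I would strengthen the construction in the proof of the preceding lemma so that it tracks parse trees, not just mappings. The non-stable non-terminals of a parse tree $T$ of $\decgrmr{G_{\doc}}$ form a connected skeleton: its root below $S$ is some $S_{1,n}^{\bx,\by}$, and below any non-stable non-terminal $A$ the productions applied in $T$ are skippable until a non-terminal $B\in\jmp(A)$ is reached at which a non-skippable production $A_{i,j}^{\bx,\by}\to B^{\bx,\bz}_{i,i'} C^{\bz',\by}_{i'+1,j}$ is applied; the non-stable children of that node in $T$ are exactly those among $B,C$ that are non-stable, and $\bz\cup\bz'$ is precisely the set of pairs $(\tau,i'+1)$ that $\applyProd$ adds to the current mapping. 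A leftmost traversal of this skeleton is literally the computation of $\enum$ started at $S_{1,n}^{\bx,\by}$ with the matching initial mapping, and at each recursive call the choices offered to $\enum$ — a member of $\jmp(A)$ and a pair returned by $\applyProd(B)$ — are forced to agree with $T$; reaching the base case $\alpha=\epsilon$ corresponds to exhausting the skeleton, and the accumulated mapping equals $\mu^{w(T)}$ by the bookkeeping (the quantities $\mu^{w,\ell}$) of the preceding proof. This yields exactly one output event from each $T$. Conversely, from an output event one reads off, along the recursion, the chosen skippable chains and non-skippable productions; the stable subtrees hanging off the skeleton are never visited by $\enum$, but each stable non-terminal occurring there derives a uniquely determined decorated sub-word (a fixed range of positions, with leading and trailing variable operations prescribed by its superscript and none in between), which by the unambiguity of $\decgrmr{G_{\doc}}$ has a unique parse subtree; splicing these in reconstructs a well-defined $T$, and distinct output events give distinct skeletons and hence distinct $T$.

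The main obstacle I anticipate is exactly this faithful reconstruction — proving that the abstract skippable chain witnessing $B\in\jmp(A)$ is the one actually appearing in $T$ (so that $\jmp$ does not conflate distinct partial derivations), and that the stable, off-skeleton material is both uniquely recoverable from an output event and irrelevant to the mapping. Both points ultimately rely on $G$ being functional, so that by Proposition~\ref{prop:funcset} the variable operations occurring below each non-terminal are fixed in advance, and on $G$ being unambiguous, so that parse trees and decorated words determine one another. Once the bijection is established, ``exactly once'' follows as in the rigidity argument above.
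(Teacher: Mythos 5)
Your argument is sound, but it takes a genuinely different route from the paper. The paper's proof is a short, local argument by contradiction: it assumes some mapping is output twice, looks at the \emph{first point where the two call stacks of $\enum$ diverge}, notes that at that point $\applyProd$ must have returned two distinct pairs $(\beta,\map)$ and $(\beta',\map')$, claims $\map\neq\map'$ (invoking unambiguity of $\decgrmr{G_{\doc}}$) with both non-empty, and concludes from the fact that $\enum$ only accumulates pairs that the two final outputs differ. You instead argue globally: you transport each output event to a parse tree of $\decgrmr{G_{\doc}}$ and then use a rigidity fact — unambiguity of the extraction grammar gives a unique valid ref-word (with a unique parse tree) per mapping, which via the map $f$ of Lemma~\ref{cor:decwd} and Proposition~\ref{prop:unamb} yields a unique decorated parse tree per mapping — so two distinct output events for the same $\mu$ would force two distinct trees with the same mapping, a contradiction whether their decorated words coincide or not. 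Your route is heavier, but it buys something real: it makes explicit how \emph{both} unambiguity assumptions enter, and it substantiates the step the paper treats tersely (why distinct productions cannot lead to the same output), since two different non-skippable productions with the same split point could a priori contribute the same $\map$. Conversely, the paper's argument avoids any tree reconstruction. One remark on the obstacle you flag: for the ``at most once'' direction you do not need the skippable chain witnessing $B\in\jmp(A)$ to be the one in a pre-existing tree, nor a full bijection. It suffices that \emph{some} chain exists (immediate from the definition of $\jmp$) so that each output event can be completed to a genuine parse tree whose mapping is the output, and that distinct output events produce distinct skeletons of non-stable non-terminals and non-skippable productions — which are intrinsic to the reconstructed trees — so the trees differ; the arbitrariness in choosing chains and stable subtrees is harmless. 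With that simplification your proof closes without the anticipated difficulty.
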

\begin{proof}
	Assume to the contrary that there is a mapping $\mu$ that is outputted twice. 
	Let us consider the two different  call stacks of the procedure $\enum$ and let us examine the first point where they differ. 
	That is, in one stack  $\enum(\alpha, \map)$ has called $\enum(\alpha', \map')$ and in the other stack $\enum(\alpha, \map)$ has called the procedure $\enum(\alpha'', \map'')$.
	This, in turn, implies that the $\applyProd$ returned two pairs $(\beta, \map)$ and $(\beta',\map')$ that are different.
	Note that for any two pairs $(\beta, \map)$ and $(\beta',\map')$ in the output of $\applyProd$ it holds that both $\map \ne \emptyset$ and $\map' \ne \emptyset$, and $\map \ne \map'$ since $\decgrmr{G_{\doc}}$ is unambiguous.  This, combined with the fact that $\enum$ accumulates its output mapping throughout the run, we conclude that the mappings in both runs would be different. That is a contradiction.
\end{proof}

\begin{proposition}\label{prop:prodnonstable}
	The procedure $\applyProd$ is always called by $\enum$ with a non-terminal $A$ that is non-stable and appears in the left-hand side of some non-skippable production.
\end{proposition}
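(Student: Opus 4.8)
The plan is to trace every invocation of $\applyProd$ back to the jump function and then invoke an invariant on the arguments that $\enum$ ever receives. Inspecting the pseudocode of $\enum$ in Figure~\ref{alg:enum}, one sees that $\applyProd$ is called only in the form $\applyProd(B)$ with $B \in \jmp(A)$, where $A$ is the leading non-terminal of the (non-empty) sequence $\alpha$ passed to the current $\enum$-call. Hence it suffices to establish (i) that $\enum$ only ever queries $\jmp$ on a non-stable non-terminal, and (ii) that for every non-stable $A$, each member of $\jmp(A)$ is non-stable and occurs on the left-hand side of at least one non-skippable production.

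Part (ii) amounts to unpacking the definition of $\jmp$. If $B \in \jmp(A)$ is witnessed by a nonempty chain of skippable productions $\rho_1, \dots, \rho_m$, then by definition $B$ is the non-stable non-terminal on the right-hand side of $\rho_m$ — this is well defined, since clause $(c)$ of skippability forces exactly one of the two right-hand-side non-terminals of a skippable production to be stable — and the last clause of the definition of $\jmp$ asks for a non-skippable production with left-hand side $B$. For the degenerate witness $B = A$, membership is recorded by the initialization $\jmp[A,\dots] := \{A\} \setminus \mathsf{remove}$ in Algorithm~\ref{alg:jmp}; since $\mathsf{remove}$ is precisely the set of non-terminals with no non-skippable production, $A$ has a non-skippable production, and non-stability of $A$ is the hypothesis of (ii). Uniformly: every write to a $\jmp$-cell in Algorithm~\ref{alg:jmp} has the form $(\cdot)\setminus\mathsf{remove}$, so lying on a non-skippable production is automatic, and a short induction in reverse topological order shows that $\mathsf{reachable}[A,\dots]$ is assembled by following skippable productions out of $A$ and therefore, when $A$ is non-stable, contains only non-stable non-terminals (clause $(c)$ again keeps the chain inside the non-stable non-terminals).

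Part (i) I would prove by induction on the depth of the $\enum$-recursion, maintaining the invariant that every call $\enum(\alpha, \map)$ receives an $\alpha$ that is a (possibly empty) concatenation of non-stable non-terminals. The induction step is immediate from the shape of the recursive call $\enum(\beta\cdot\alpha', \map\cup\map')$: by the specification of $\applyProd$, $\beta$ is the concatenation of the non-stable non-terminals occurring in the chosen non-skippable production body, hence a sequence of non-stable non-terminals, while $\alpha'$ is the tail of the sequence $\alpha = A\cdot\alpha'$ of the enclosing call and is a concatenation of non-stable non-terminals by the induction hypothesis. The base case concerns the top-level invocations $\enum(S_{1,n}^{\bx,\by}, \map)$: one must argue that each start non-terminal $S_{1,n}^{\bx,\by}$ of $\decgrmr{G_{\doc}}$ is non-stable — its subtree must contribute variable operations in its interior unless $\bx\cup\by$ already exhausts $\Gamma_X$, and in that last, degenerate situation the corresponding mapping is completely determined by $\map$ and may be emitted directly, so no $\applyProd$ call is made on it.

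I expect the only real friction to be this base case: cleanly ruling out, or separately dispatching, a stable top-level non-terminal $S_{1,n}^{\bx,\by}$ (which can arise only when its superscript already carries all of $\Gamma_X$). Once that is settled, the remainder is a routine structural induction driven entirely by the definitions of skippable productions and of the jump function, with no delicate estimates.
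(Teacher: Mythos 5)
Your core argument is the same as the paper's: the paper's entire proof is the one-line observation that $\applyProd$ is only invoked on members of $\jmp(A)$, and that by the definition of $\jmp$ such a member is the non-stable non-terminal on the right-hand side of the last skippable rule of a witnessing chain and, by the final clause, heads some non-skippable production --- exactly your part (ii). Where you go beyond the paper is in noticing that this covers only chains of length at least one: for the reflexive case $B=A$ (recorded by the initialization $\mathsf{jmp}[A,\ldots]:=\{A_{i,j}^{\bx,\by}\}\setminus\mathsf{remove}$ in Algorithm~\ref{alg:jmp}), non-stability of $B$ is not supplied by the definition and must come from an invariant on $\enum$'s arguments, which you prove in part (i) by induction: every $\alpha$ passed to $\enum$ is a concatenation of non-stable non-terminals because the $\beta$ returned by $\applyProd$ keeps only the non-stable symbols. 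The friction you flag at the base case is genuine and is silently skipped by the paper: a top-level non-terminal $S_{1,n}^{\bx,\by}$ with $\bx\cup\by=\Gamma_X$ is stable, and the pseudocode of Figure~\ref{alg:enum} does not ``emit $\map$ directly'' as your sketch assumes --- under the chain-based definition of $\jmp$ in Section~\ref{sec:en} one gets $\jmp(S_{1,n}^{\bx,\by})=\emptyset$, so the proposition holds vacuously but that mapping is never output, whereas under Algorithm~\ref{alg:jmp} one gets $\jmp(S_{1,n}^{\bx,\by})=\{S_{1,n}^{\bx,\by}\}$ and $\applyProd$ is then called on a stable non-terminal, contradicting both this statement and Proposition~\ref{prop:prodnotempty}. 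So your separate dispatch of that degenerate case (output $\map$ immediately when the top-level non-terminal is stable) is the right repair, but you should state it as an explicit amendment to the algorithm rather than as something the current pseudocode already does; with that amendment your proof is complete, and it is more careful than the paper's, whose appeal to ``the definition of $\jmp$'' does not by itself handle the reflexive case or the stable start symbols.
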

\begin{proof}
	This is straightforward from the definition of $\jmp$ and the fact that $\applyProd$ is called on a non-terminal that is returned by $\jmp$. Note that $\jmp$ never returns the empty set.
\end{proof}

\begin{proposition}\label{prop:prodnotempty}
	The following hold:
	\begin{itemize}
		\item 	The procedure $\applyProd$ always returns at least one pair $(\beta, \map)$ with $\map \ne \emptyset$.
		\item
		For every pair $(\beta, \map)$ that is in the output of $\applyProd$ it holds that $\map \ne \emptyset$.
	\end{itemize}
\end{proposition}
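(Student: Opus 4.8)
Both claims are about the iterator $\applyProd$, which by Proposition~\ref{prop:prodnonstable} is only ever invoked on a non-terminal $A^{\bx,\by}_{i,j}$ of $\decgrmr{G_\doc}$ that is non-stable and that occurs on the left-hand side of at least one non-skippable production. The plan is to reduce everything to this fact together with one elementary structural observation: every non-terminal of the form $A^{\bx,\by}_{i,i}$ is stable, since after Step~3 its only productions are $A^{\bx,\by}_{i,i}\to(\bx,i,\by)$, so the unique decorated word it derives is $(\bx,i,\by)$ and hence $\bx_A=\bx\cup\by$. Therefore a non-stable non-terminal $A^{\bx,\by}_{i,j}$ has $i<j$, and since $\decgrmr{G_\doc}$ is in CNF all of its productions are then binary. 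This already settles the first claim up to one point: the non-skippable production granted by Proposition~\ref{prop:prodnonstable} is binary, so $\applyProd$ emits at least the pair $(\beta,\map)$ attached to it; that this $\map$ is non-empty is the special case of the second claim applied to that pair.

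For the second claim I would argue by contradiction. Suppose $\applyProd$ emits some $(\beta,\map)$ with $\map=\emptyset$; it comes from a non-skippable production $\rho\colon A^{\bx,\by}_{i,j}\to B^{\bx,\bz}_{i,i'}C^{\bz',\by}_{i'+1,j}$ with $\map=\{(\tau,i'+1)\mid\tau\in\bz\cup\bz'\}$, so $\bz=\bz'=\emptyset$. Then $\rho$ already meets conditions $(a)$ ($A^{\bx,\by}_{i,j}$ non-stable, by Proposition~\ref{prop:prodnonstable}) and $(b)$ ($\bz=\bz'=\emptyset$) of the definition of a skippable production, so since $\rho$ is not skippable its condition $(c)$ must fail: it is not the case that exactly one of $B^{\bx,\emptyset}_{i,i'}$, $C^{\emptyset,\by}_{i'+1,j}$ is stable. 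The auxiliary fact to set up is that a production of $\rho$'s shape descends from a production $A\to BC$ of the functional grammar $G$ (the Step~2 eliminations create no new occurrences of variable operations), so that $\bx_A=\bx_B\cup\bx_C$ by Proposition~\ref{prop:funcset} and the definitions of $G_\doc$ and $\decgrmr{G_\doc}$, together with $\bx\subseteq\bx_B$ and $\by\subseteq\bx_C$ (the meaning of the superscripts). Two of the three cases for $(c)$ then close immediately: if both children of $\rho$ are stable then $\bx_B=\bx$ and $\bx_C=\by$, hence $\bx_A=\bx\cup\by$, contradicting the non-stability of $A^{\bx,\by}_{i,j}$; and if exactly one child is stable then $\rho$ satisfies $(a)$, $(b)$ and $(c)$, i.e. $\rho$ is skippable, contradicting the choice of $\rho$.

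The remaining case --- both $B^{\bx,\emptyset}_{i,i'}$ and $C^{\emptyset,\by}_{i'+1,j}$ non-stable --- is the step I expect to be the main obstacle, since here neither the non-stability of $A^{\bx,\by}_{i,j}$ nor the non-skippability of $\rho$ is contradicted by the arithmetic above. The plan there is to exclude this configuration by a closer bookkeeping of the construction of $\decgrmr{G_\doc}$: using Proposition~\ref{prop:funcset} to pin down, for each variable operation produced below $A^{\bx,\by}_{i,j}$, exactly which decorated position it lands in, and arguing that a binary production with empty glue whose two children each still hide a variable operation cannot survive the eliminations of Step~2, so that such a production is never handed to $\applyProd$. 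Once that case is dispatched the contradiction is complete, and, as noted, the non-emptiness of $\map$ in the first claim follows at once.
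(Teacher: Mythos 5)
Your decomposition follows the same skeleton as the paper's proof: the first bullet is reduced to Proposition~\ref{prop:prodnonstable} together with the second bullet, and the second bullet is attacked by asking how a non-skippable rule $A^{\bx,\by}_{i,j}\rightarrow B^{\bx,\bz}_{i,i'}C^{\bz',\by}_{i'+1,j}$ with non-stable left-hand side could have $\bz=\bz'=\emptyset$. The paper disposes of the second bullet in one line (``since the rule is not skippable, $\bz\cup\bz'\neq\emptyset$''), and your case split on how condition $(c)$ can fail is the honest unpacking of that line: the both-children-stable case is correctly killed via $\bx_A=\bx_B\cup\bx_C=\bx\cup\by$, and the exactly-one-stable case correctly makes the rule skippable.

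The gap is the third case, which you leave open, and the repair you sketch cannot work: a non-skippable rule with empty glue and \emph{two non-stable} children is not an artifact that Step~2 eliminates --- it genuinely occurs in $\decgrmr{G_{\doc}}$. Take the functional, unambiguous CNF grammar whose only ref-word is $\ta\,\vop{x}\,\ta\,\ta\,\vcl{x}\,\ta$, built as $S\rightarrow BC$ with $B\Rightarrow^*\ta\,\vop{x}\,\ta$ and $C\Rightarrow^*\ta\,\vcl{x}\,\ta$, and let $\doc=\ta\ta\ta\ta$. Then $\decgrmr{G_{\doc}}$ contains the rule $S^{\emptyset,\emptyset}_{1,4}\rightarrow B^{\emptyset,\emptyset}_{1,2}\,C^{\emptyset,\emptyset}_{3,4}$, where both children are non-stable ($\bx_B=\{\vop{x}\}$, $\bx_C=\{\vcl{x}\}$, while their superscripts are empty because the operations sit at interior boundaries); the rule is neither a unit nor an epsilon production and all its symbols are useful, so Step~2 keeps it; it is non-skippable because condition $(c)$ fails; and $\applyProd(S^{\emptyset,\emptyset}_{1,4})$ emits exactly this rule with $\map=\emptyset$. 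So the configuration you hoped to rule out is realizable, and the second bullet cannot be established along your route as stated --- in fact this is precisely the case the paper's one-line argument silently assumes away. A genuine fix has to go elsewhere: either restrict the claim to output pairs coming from rules in which at least one child is stable, or absorb the empty-$\map$ branching steps directly into the delay analysis (each such step splits the derivation into two subtrees that each still contain a variable operation, so only $O(k)$ of them can occur between two outputs); it does not follow from the Step~2 eliminations.
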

\begin{proof}
	\begin{itemize}
		\item 
		Straightforward from Proposition~\ref{prop:prodnonstable}.
		\item
		If $\applyProd$ is called with $A_{i,j}^{\bx,\by}$ then by Proposition~\ref{prop:prodnonstable}, it appears in the left-hand side of a non-skippable rule $A_{i,j}^{\bx,\by} \rightarrow B_{i,\ell}^{\bx,\bz} C_{\ell+1,j}^{\bw,\by} $. Since the rule is non-kippable it holds that  $\bz \cup  \bw \ne \emptyset$  which concludes the desired claim.
	\end{itemize}
	This concludes the proof.
\end{proof}

\begin{lemma}
	The procedure $\enum(S_{1,n}^{\bx,\by})$ outputs with $O(k)$ delay between two consecutive mappings.
\end{lemma}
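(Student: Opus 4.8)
The plan is to view the execution of $\enum(S_{1,n}^{\bx,\by})$ as a depth-first traversal of a finite \emph{computation tree} $T$ whose nodes are the recursive invocations of $\enum$ and whose leaves are exactly the invocations with $\alpha=\epsilon$, i.e., the ones that emit an output mapping. The crux is to establish three properties of $T$: it has depth $O(k)$; every internal node has a leaf descendant, reached within $O(k)$ further steps (so no work is ever ``wasted'' on a branch that produces nothing); and the local cost per node is $O(1)$ after amortizing the $O(k)$ cost of assembling the output. I would first bound the depth: every recursive call $\enum(\beta\cdot\alpha',\map\cup\map')$ uses a pair $(\beta,\map')$ returned by $\applyProd$, and by Proposition~\ref{prop:prodnotempty} we have $\map'\neq\emptyset$; since $G$ is functional, each variable operation occurs exactly once in any decorated word of $\decgrmr{G_\doc}$ (via Proposition~\ref{prop:funcset} and Lemma~\ref{cor:decwd}), so $\map'$ contributes only fresh pairs and $|\map\cup\map'|>|\map|$. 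As a complete mapping has exactly $2k$ pairs, every root-to-leaf path of $T$ contains at most $2k$ recursive calls, so $\mathrm{depth}(T)=O(k)$.

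Next I would show that no branch of $T$ is a dead end and, more precisely, that every node reaches a leaf within $O(k)$ further steps. I would maintain the invariant that the non-terminals occurring in the sequence $\alpha$ passed to $\enum$ are always non-stable, that distinct such non-terminals govern disjoint blocks of document positions, and that each of them carries at least one variable operation lying strictly inside its subtree and not yet recorded in $\map$; hence $|\alpha|$ never exceeds the number of unplaced operations, which is at most $2k$. Since $\jmp(A)$ is never empty and $\applyProd(B)$ always returns at least one pair (Proposition~\ref{prop:prodnotempty}), the recursion never gets stuck, and because each recursive call records at least one operation while $\decgrmr{G_\doc}$ is acyclic, after at most $2k$ further calls the sequence contains no non-stable non-terminal, i.e., $\alpha=\epsilon$, and an output is emitted.

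Finally I would verify the $O(1)$ local cost: $\jmp$ is a precomputed table with $O(1)$ access, $\applyProd$ is a constant-delay iterator by construction, and maintaining $\map$ as a linked list makes each $\map\cup\map'$ cost $O(|\map'|)$, which sums to $O(k)$ along any path. Then I would invoke the standard amortization for tree-shaped enumeration: in a DFS over a tree of depth $d$ in which every internal node has a child and per-node work is $O(1)$, the time between two consecutive leaves is $O(d)$ --- one ascends from the previous output to the first ancestor with an unexplored child, enters that child, and descends leftmost to the next output, a path of length at most $2d+1$, each step costing $O(1)$. With $d=O(k)$ this yields the claimed $O(k)$ delay, the bound also covering the emission of the $O(k)$-size mapping and the initial leftmost descent before the first output.

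The step I expect to be the main obstacle is the ``no dead branches'' argument, in particular pinning down exactly why $\alpha$ collapses to $\epsilon$ after $O(k)$ recursive calls. This needs a carefully stated invariant tying the multiset of non-stable non-terminals currently in $\alpha$ to the set of variable operations still missing from $\map$, together with a verification that the combination of $\jmp$ (skipping skippable productions) and $\applyProd$ (applying one non-skippable production) preserves it --- in particular that skipping never drops a non-stable non-terminal that still owes an operation, and that $\applyProd$ accounts for precisely the operations appearing at the split point of the chosen production. Once this invariant is in place, termination-with-output and the $O(k)$ distance to the next leaf both follow, and the rest reduces to the routine amortization described above.
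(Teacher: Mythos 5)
Your proposal is correct and follows essentially the same route as the paper: the delay bound comes from the fact that every recursive call strictly enlarges the mapping by at least one fresh pair (Proposition~\ref{prop:prodnotempty}), so the recursion depth and the number of non-stable non-terminals carried in $\alpha$ are both $O(k)$, while $\jmp$ and $\applyProd$ give constant-time work per step thanks to the preprocessing. Your write-up merely makes explicit the DFS amortization and the ``no dead branch'' point (every non-stable non-terminal still owes an unrecorded operation, and every $B\in\jmp(A)$ has a non-skippable production), which the paper's much terser proof leaves implicit.
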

\begin{proof}
	The size of each mapping that is outputted by $\enum$ is $O(k)$. 
	Due to Proposition~\ref{prop:prodnotempty}, before each recursive call to $\enum$ the mapping grows in at least one pair. Therefore the stack call of $\enum$ is at most $k+1$. Note that when the stack call of enum is of depth $i$ then in the last call to $\enum$ (the top of the stack) the first element passed to it contains at most $k-i$ non-stable non-terminals (since the mappings are always of fixed size $k$).
\end{proof}

}

\section{Output Stage of the Enumeration Algorithm}\label{sec:en}

In the output stage of our algorithm, we build recursively the parse-trees of the decorated grammar $\decgrmr{G_{\doc}}$ that was constructed in the preprocessing stage.

\begin{figure}[t]
		\centering
		\begin{tikzpicture}[ 
			level/.style={sibling distance=1.1cm,
				level distance = 1.0cm},
     box/.style = {draw,blue,inner sep=4pt,rounded corners=2pt}
			]
			\tikzset{level 1/.style={sibling distance=3.4cm}}
			\tikzset{level 2/.style={sibling distance=2.1cm}}
   \tikzset{level 3/.style={sibling distance=1.9cm}}
    \tikzset{level 4/.style={sibling distance=1.5cm}}
			\tikzset{frontier/.style={distance from root=8cm}}
			\node (S){${S^{\vop{x}\vop{y},\vcl{z}}_{1,6}}$}
			child{
				node (A) {{${A_{1,4}^{{\vop{x} \vop{y}, \vop{z}}}}$}}
				child {
					node (C) {${C_{1,3}^{{\vop{x} \vop{y}, \vcl{y} }  }}$} 
				}
				child {
					node (D){${D_{4,4}^{\emptyset,{\vop{z}}}}$}
				}
			}
			child{ 
				node (B) {${B_{5,6}^{\emptyset, \vcl{z}}}$}
				child{ node (E){${E_{5,5}^{\emptyset, \emptyset }}$}
				}
				child{ node (F) {${F_{6,6}^
						{\vcl{x},\vcl{z}}
						}$}
				}
			}
			;
		\end{tikzpicture}
		\caption{
			 $\decgrmr{G_{\doc}}$  parse-tree.   }
		\label{fig:tree}
	\end{figure}%
 
\subsection{The Output Stage Algorithm}

The enumeration procedure is presented in Algorithm~\ref{alg:enum}.  
The recursive $\enum$ procedure
represents the
$(X,\doc)$-mappings $\mu$ as sets of pairs $(\vop{x},i), (\vcl{x},j)$ whenever $\mu(x) = \mspan{i}{j}$. This representation is useful as $\enum$ gradually constructs the output mappings during the execution. 

\newcommand{\myalg}
{	\begin{algorithm}[H]
		\SetKw{Output}{output}		
		\SetKw{Denote}{denote}
  		\SetKw{Set}{set}
		\SetKw{Procedure}{procedure}
		\SetKw{Assign}{assign}
		\Procedure
		{$\enum(
			\alpha , \map) $}
		
		\If{ $\alpha = \epsilon$} 
		{\Output{$\map$}\;}
		\If{$\alpha = S_{1,n}^{\bx,\by}$ \text{ and } $S_{1,n}^{\bx,\by}$ \text{ is stable} }
		{\Output{$\map$}\; }
		\Denote{$\alpha$ \textbf{by} $A \cdot \alpha'$}\;
		\ForEach{$B\in\jmp{[A]}$}
		{	
			\Assign{ $B' \df \applyProd( B )$}\;
			\ForEach{  $(\beta, \map') \in 
				B'$ }
			{
				$\enum(\beta \cdot \alpha', \map \cup \map')$;
			}
		}
		\caption{\label{alg:enum} Output Stage}
	\end{algorithm}
}
\begin{figure}[h]
	\centering
	\begin{minipage}[b]{0.8\textwidth}
		\myalg
	\end{minipage}
\end{figure}

Before discussing $\enum$, we discuss the procedure $\applyProd$ it calls. 
Intuitively, $\applyProd$ reflects the effect of applying production rules $\rho$ on the non-terminal it gets as input.
The returned pairs $(\beta,\map)$ are such that $\beta$ consists of those non-terminals in the right-hand side of $\rho$ that have an effect on the output mapping, and $\map$ consists of the pairs added to the output mapping as a result of applying $\rho$.

\begin{algorithm}[h]
	\SetKw{Return}{return}
	\SetKw{Procedure}{procedure}
	\SetKw{Output}{output}
	\Procedure{$\applyProd(A_{i,j}^{\bx,\by}) $}
	\\
	initialize $\map=\emptyset$\;  
	
	\ForEach{non-skippable production of the form $A_{i,j}^{\bx,\by}\rightarrow B^{\bx,\bz}_{i,\ell}C^{\bw,\by}_{\ell+1,j}$  }
	{
		
		$\map = \emptyset$\;
		
		\ForEach{ $x\in \bz \cup \bw $}
		{$\map = \map \cup \{ (x, {\ell+1}) \}$}

		\If{ $B^{\bx,\bz}_{i,\ell}$ and $C^{\bw,\by}_{\ell+1,j}$ are non-stable}
		{$\beta = B^{\bx,\bz}_{i,\ell} C^{\bw,\by}_{\ell+1,j}  $}
		\If{ $B^{\bx,\bz}_{i,\ell}$ is non-stable and $C^{\bw,\by}_{\ell+1,j}$ is stable}
		{$\beta = B^{\bx,\bz}_{i,\ell}  $}
		\If{ $B^{\bx,\bz}_{i,\ell}$ is  stable and $C^{\bw,\by}_{\ell+1,j}$ is non-stable}
		{$\beta = C^{\bw,\by}_{\ell+1,j} $}
		\If{ $B^{\bx,\bz}_{i,\ell}$  and $C^{\bw,\by}_{\ell+1,j}$ are stable}
		{$\beta = \epsilon $}
		
		\Output$(\beta , \map)$
	}	
	\caption{Apply Production\label{alg:applyprod} }
\end{algorithm}

\begin{example}
	The procedure $\applyProd$ applied on $S_{1,6}^{\vop{x}\vop{y}, \vcl{z}}$ from Figure~\ref{fig:tree} adds the pair $(\vop{z},5)$ to $\map$; When applied
	on $A_{1,4}^{\vop{x} \vop{y},\vop{z}}$, adds the pair $(\vcl{y},4)$ to $\map$; When applied
	on $B_{5,6}^{\emptyset,\vcl{z}}$, adds the pair $(\vcl{x},6)$ to $\map$.
\end{example}

$\enum$ recursively constructs output mappings by calling the procedure $\applyProd$ to apply productions on non-stable non-terminals while skipping the skippable productions using $\jmp$ that was computed in the preprocessing stage for every non-terminal $A$ and is stored in $\jmp[A]$.
Its input $(\alpha,\map)$ is such that $\alpha$ is a sequence of non-stable non-terminals (i.e., those non-terminals that affect the output mapping) that need to be derived, and $\map$ which accumulates the pairs in the output mapping. 


The main enumeration algorithm calls sequentially
the recursive procedure $\enum$ with pairs $(S_{1,n}^{\bx,\by}, \map)$ for all non-terminal of the form $S_{1,n}^{\bx,\by}$ in $\decgrmr{G_{\doc}}$ with $\map$ that consists of  pairs $(\tau,1)$ for all $\tau\in \bx$, and $(\tau,n+1)$ for all $\tau\in \by$.


\newcommand{\thmenumenum}{For every functional unambiguous extraction grammar $G$ in CNF and for every document $\doc$, the main enumerating algorithm described above enumerates the mappings in $\repspnr G (\doc)$ (without repetitions) with delay of $O(k)$ between each two consecutive mappings where $k$ is the number of variables $G$ is associated with.}
\begin{theorem}\label{thm:enumenum}
	\thmenumenum
\end{theorem}
Had $G$ been ambiguous, the complexity guarantees on the delay would not have held since the same output might have been outputted arbitrary many times.

Finally, we conclude this section by remarking  that the proof of Theorem~\ref{thm:enum} follows from 
Corollary~\ref{cor:decwd}, Proposition~\ref{prop:comp}, Lemma~\ref{lem:stb}, Lemma~\ref{lem:jmp}, and Theorem~\ref{thm:enumenum}.
In the rest of this section we discuss the proof of  Theorem~\ref{thm:enumenum}.

\subsection{Proof of Theorem~\ref{thm:enumenum}}

In order to proof the theorem, we need to take a careful look at different stages of the execution of $\enum$ which correspond with different stages of derivations. 
To do that,  
	we define the mapping $\mu^{\alpha}$ that corresponds with a finite sequence  $\alpha$ of terminals of $\decgrmr{G_{\doc}}$: $\mu^{\alpha}$ is defined by $(x,i)\in \mu^{\alpha}$ whenever $\alpha$ contains the terminal $A_{i,j}^{\bx,\by}$ with $x\in \bx$ or the terminal $A_{i',i-1}^{\bx,\by}$
 with $x\in\by$.

\newcommand{\si}{S_{1,n}^{\bx,\by}}
\newcommand{\ai}{A_{i,j}^{\bx,\by}}
\newcommand{\bci}{B_{i,i'}^{\bx,\bz} C_{i'+1, j}^{\bw,\by}}
\newcommand{\abc}{A_{i,j}^{\bx,\by} \rightarrow B_{i,i'}^{\bx,\bz} C_{i'+1, j}^{\bw,\by}}
\newcommand{\stb}{\tsc{nonStable}}

We first wish to show how the run of $\enum$ reflects the derivation of the decorated grammar. 
\begin{lemma}
The call $\enum(\ai\alpha,\map)$ with $\map \supseteq \{ (x,i) \vl x\in \bx  \} \cup \{ (y,j+1) \vl y\in \by \}$ provokes a call to
$\enum(\alpha,\map \cup \mu^w)$ whenever there is a terminal derivation $\ai \Rightarrow^* w$. 
\end{lemma}
\begin{proof}
	We first observe that 
for every call 
$\enum(A_{i,j}^{\bx,\by} \alpha,\map)$ 
if $A_{i,j}^{\bx,\by}$ is stable then 
$i=1,j=n+1$ and $A=S$. 
Indeed, assume there is a call $\enum(A_{i,j}^{\bx,\by} \alpha,\map)$ 
with a stable $A_{i,j}^{\bx,\by}$. If $A_{i,j}^{\bx,\by} = S_{1,n}^{\bx,\by}$ then we are done. If not, this call was provoked by another call to $\enum$. This implies that there is a pair $(\beta, \map') \in B'$ that is not stable. Nevertheless, $B'$ is returned by $\applyProd$ and due to its definition this leads to a contradiction.

We prove the claim by induction on the length of the terminal derivation. 
The basis holds trivially. For the induction step,  
since $\ai$ is not stable, we can
distinguish between the following two possible cases:

If $\ai$ is skippable then 
$\jmp(\ai)$ returns $B_{i',j'}^{\bx',\by'}$ such that there is a leftmost derivation $\ai \Rightarrow^* \alpha' B_{i',j'}^{\bx',\by'}$ where $\mu^{\alpha'} = \emptyset$ (this can be formally proved by induction on the number of productions applied in the derivation). Thus, $\enum(B_{i',j'}^{\bx',\by'} \alpha, \map)$ is provoked and the claim follows from the induction hypothesis.
 
If $\ai$ is non-skippable then $\jmp(\ai)$ returns $\ai$ itself, and for each production of the form $\abc$ it holds that $\applyProd(\ai)$ returns  the pair $(\beta, \{ (x,i'+1) \,\vl \,i'\in \bz\cup \bw  \})$ where $\beta$ is either $B_{i,i'}^{\bx,\bz}$ or $C_{i'+1,j}^{\bw,\by}$ or their concatenation. In case $\beta$ is $B_{i,i'}^{\bx,\bz}$ then the next call is to 
$\enum(B_{i,i'}^{\bx,\bz} \alpha, \map \cup \{ (x,i'+1) \,\vl \,i'\in \bz\cup \bw  \})$. We can then apply the induction hypothesis and obtain that there is a call to $\enum(\alpha, \map \cup \{ (x,i'+1) \,\vl \,i'\in \bz\cup \bw  \} \cup \mu^{\beta})$ where $B_{i,i'}^{\bx,\bz} \Rightarrow^* \beta$. In this case, we have $\ai \Rightarrow^* B_{i,i'}^{\bx,\bz} C_{i'+1,j}^{\bw,\by} \Rightarrow^* \beta \gamma $ such that $\gamma$ is of the form 
$D_{i'+1,j}^{\bw,\emptyset}$.
Thus, $\mu^{\gamma} = \emptyset$ which implies that $\mu^{\beta \gamma }= \mu^{\beta}$ which, in turn, completes the proof of this case. 
In case $\beta =  C_{i'+1,j}^{\bw,\by}$ we show the claim similarly. And in case $\beta = B_{i,i'}^{\bx,\bz} C_{i'+1,j}^{\bw,\by}$ we use twice the induction hypothesis in a similar way. 
\end{proof}
This leads to the following direct conclusion which can be seen as proof of soundness of $\enum$.
\begin{corollary}\label{cor:dir1}
	The call $\enum(S_{1,n}^{\bx,\by},\map)$ with $\map \supseteq \{ (x,1) \vl x\in \bx  \} \cup \{ (y,n+1) \vl y\in \by \}$ returns $\mu^{w}$ whenever there is a terminal derivation $S_{1,n}^{\bx,\by} \Rightarrow^* w$. 
\end{corollary}

To show that $\enum$ is also complete, that is, for each terminal derivation of the decorated grammar there is a run with corresponding output, we prove the following.
\begin{lemma}
	For every terminal derivation $A \Rightarrow^* \alpha$, the call to the procedure $\enum(A \theta,\map)$ 
	invokes once  $\enum(\theta,\map \cup \mu^{\alpha})$.
\end{lemma}
\begin{proof}
We prove this claim by induction on the number of non-skippable production rules that were applied in a leftmost derivation $A \Rightarrow^* \alpha$.

For the basis, if all of the production applied were skippable this implies that $A$ was stable which implies that $A=S_{1,n}^{\bx,\by}$ and that $\theta = \epsilon$ and thus the claim holds (see the second if clause in $\enum$).
 
Assume that there is at least one production rule that was applied and is non-skippable. 
Thus, there is the following leftmost derivation
\[
A  \Rightarrow^* \alpha \ai \delta\Rightarrow \alpha B_{i,i'}^{\bx,\bz} C_{i'+1,j}^{\bw,\by} \delta \Rightarrow^*\alpha \beta \gamma \delta
\] 
where  the productions applied in $A \Rightarrow^* \alpha \ai \delta$ are all skippable, and the one applied in $\alpha \ai \delta\Rightarrow \alpha B_{i,i'}^{\bx,\bz} C_{i'+1,j} \delta $ (i.e., $\abc$) is non-skippable. 
Since all of the productions that were applied in $A \Rightarrow^* \alpha \ai \delta$ are skippable, we can show by a simple induction (on the number of productions applied) that $\mu^{\alpha} = \emptyset$. 
By the definition of $\enum$, $\applyProd(\ai)$ is invoked and then $\enum(B_{i,i'}^{\bx,\bz} C_{i'+1,j} \delta,\map )$ is invoked (recall that $\mu^{\alpha} = \emptyset$).
By induction hypothesis (on $B_{i,i'}^{\bx,\bz} \Rightarrow^* \beta)$, we conclude that $\enum( C_{i'+1,j} \delta, \map \cup \mu^{\beta} )$ is invoked. Applying again the induction hypothesis (this time on $C_{i'+1,j}^{\bw,\by} \Rightarrow^* \gamma)$ shows that  $\enum( \delta, \map \cup \mu^{\beta} \cup \mu^{\gamma})$ is invoked.

Notice that the call to $\enum(\theta, \map \cup \mu^{\alpha})$ is single due to the unambiguous nature of the decorated grammar.
\end{proof}

This leads us to the following direct conclusion that shows the desired completeness.
\begin{corollary}\label{cor:dir2}
	For every terminal derivation $S_{1,n}^{\bx,\by} \Rightarrow^* \alpha$ the call to the procedure $\enum(S_{1,n}^{\bx,\by} ,\map)$ 
    with $\map \df \{ (x,1),(y,n+1) \, \vl \, x\in \bx, y\in\by \}$ 
	outputs $\mu^{\alpha}$ once.
\end{corollary}

Combining Corollaries~\ref{cor:dir1} and~\ref{cor:dir2} leads to the following key lemma which intuitively show that $\enum$ outputs exactly what is needed.
\begin{lemma}\label{lem:enumout}
The procedure	$\enum(S_{1,n}^{\bx,\by},\map)$ with $\map \df \{ (x,1) \vl x\in \bx  \} \cup \{ (y,n+1) \vl y\in \by \}$ outputs every and each mapping in
	$
	\{ \mu^w \,\,\vline  \,\,
	S_{1,n}^{\bx,\by} \Rightarrow^* w  \}
	$ exactly once.
\end{lemma}
We next discuss the delay between consecutive  outputs of $\enum$, and show that it is independent of the document length. 
\begin{lemma}\label{lem:com}
	Every two consecutive outputs of  $\enum(S_{1,n}^{\bx,\by},\map)$ are different, and 
	the delay between them is bounded by $O(k)$.
\end{lemma}
\begin{proof}
Lemma~\ref{lem:enumout} allows us to conclude the first part of the statement. 
For the delay guarantees, notice that $(1)$ each output of $\enum$ consists exactly of $2k$ pairs. (This is also a straightforward conclusion from the same Lemma.) In addition, $(2)$ 
the procedure $\applyProd$ outputs 
 $(\beta, \map)$ with $\map \ne \emptyset$. This is due to the fact it
  is called on a non-terminal $A$ which was returned by $\jmp$. That is, there is a non-skippable rule with $A$ on the left of $\rightarrow$. Therefore, with the notation used by the algorithm, $\bz \cup \bw \ne \emptyset$ and $\map$ is updated with at least one pair.

Combining $(1)$ and $(2)$ allows us to conclude that between each two outputs of $\enum$, the procedure $\applyProd$ is called at most $2k$ times.
		We can thus conclude that the number of nested calls to $\enum$ between each two consecutive outputs is at most $2k$. Since the computation of $\applyProd$ is linear and since  $\jmp$ is computed as part of the preprocessing we obtain the desired result.
\end{proof}

Finally, to complete the proof of the theorem we combine Corollaries~\ref{cor:dir1} and~\ref{cor:dir2},  Lemma~\ref{lem:com}, and the different calls to $\enum$ the main enumeration algorithm invokes.

\section{Conclusion}\label{sec:conc}
In this paper 
we propose a new grammar-based language for document spanners, namely extraction grammars. We compare the expressiveness of context-free spanners with previously studied classes of spanners and present a pushdown model for these spanners. 
We present an enumeration algorithm for unambiguous grammars that outputs results with a constant delay after quintic preprocessing in data complexity. 
We conclude by presenting recent advances and suggesting several future research directions.

While this paper was being under review, a follow-up paper~\cite{amarilli2022efficient} to the conference version of this paper~\cite{DBLP:conf/icdt/Peterfreund21} was published. This follow-up contains an extension of the enumeration algorithm that allows obtaining constant delay enumeration after a cubic (rather than quintic) preprocessing time. 

To reach a full understanding of the expressiveness of context-free spanners, one should characterize the string relations that can be expressed with context-free spanners. This can be done by understanding
the expressiveness of context-free grammars enriched with string equality selection. We note that
there are some similarities between recursive Datalog over regex formulas~\cite{DBLP:conf/icdt/PeterfreundCFK19} and extraction grammars. Yet, with the former, we reach the full expressiveness of polynomial time spanners (data complexity) whereas with the latter we cannot express string equality. Understanding the connection between these two formalisms better can be a step in understanding the expressive power of extraction grammars.
\OMIT{
Regarding our enumeration complexity, it might be possible to decrease the preprocessing complexity by using other techniques to compute the jump function. 
Another direction is to find
restricted classes of extraction grammars that are more expressive than regular spanners yet allow linear time preprocessing (similarly to~\cite{DBLP:conf/icdt/AmarilliBMN19}).
}

It can be interesting to examine more carefully whether the techniques used here for enumerating the derivations can be applied also for enumerating queries on trees, or enumerating queries beyond MSO on strings.  
This connects to a recent line of work on efficient enumeration algorithms for monadic-second-order queries on trees~\cite{DBLP:conf/pods/AmarilliBMN19}. Can our techniques be used to obtain efficient evaluation for more expressive queries?



  \bibliographystyle{elsarticle-num} 
  \bibliography{references.bib}

\begin{thebibliography}{10}
\expandafter\ifx\csname url\endcsname\relax
  \def\url#1{\texttt{#1}}\fi
\expandafter\ifx\csname urlprefix\endcsname\relax\def\urlprefix{URL }\fi
\expandafter\ifx\csname href\endcsname\relax
  \def\href#1#2{#2} \def\path#1{#1}\fi

\bibitem{DBLP:conf/acl/BensonHB11}
E.~Benson, A.~Haghighi, R.~Barzilay, Event discovery in social media feeds, in:
  ACL, The Association for Computer Linguistics, 2011, pp. 389--398.

\bibitem{DBLP:journals/jamia/XuSDJWD10}
H.~Xu, S.~P. Stenner, S.~Doan, K.~B. Johnson, L.~R. Waitman, J.~C. Denny,
  {MedEx}: a medication information extraction system for clinical narratives,
  JAMIA 17~(1) (2010) 19--24.

\bibitem{DBLP:conf/www/AjmeraANVCDD13}
J.~Ajmera, H.-I. Ahn, M.~Nagarajan, A.~Verma, D.~Contractor, S.~Dill,
  M.~Denesuk, A {CRM} system for social media: challenges and experiences, in:
  WWW, ACM, 2013, pp. 49--58.

\bibitem{DBLP:conf/www/ZhuRVL07}
H.~Zhu, S.~Raghavan, S.~Vaithyanathan, A.~L{\"o}ser, Navigating the intranet
  with high precision, in: WWW, ACM, 2007, pp. 491--500.

\bibitem{DBLP:conf/vldb/ShenDNR07}
W.~Shen, A.~Doan, J.~F. Naughton, R.~Ramakrishnan, Declarative information
  extraction using {D}atalog with embedded extraction predicates, in: {VLDB},
  2007, pp. 1033--1044.

\bibitem{DBLP:conf/acl/LiRC11}
Y.~Li, F.~Reiss, L.~Chiticariu, {SystemT}: {A} declarative information
  extraction system, in: ACL, ACL, 2011, pp. 109--114.

\bibitem{DBLP:conf/acl/ChiticariuKLRRV10}
L.~Chiticariu, R.~Krishnamurthy, Y.~Li, S.~Raghavan, F.~Reiss,
  S.~Vaithyanathan, System{T}: An algebraic approach to declarative information
  extraction, in: {ACL}, 2010, pp. 128--137.

\bibitem{DBLP:conf/dsmml/LiBC04}
Y.~Li, K.~Bontcheva, H.~Cunningham, {SVM} based learning system for information
  extraction, in: Deterministic and Statistical Methods in Machine Learning,
  First International Workshop, Sheffield, UK, September 7-10, 2004, Revised
  Lectures, 2004, pp. 319--339.

\bibitem{DBLP:journals/sigmod/SaRR0WWZ16}
C.~D. Sa, A.~Ratner, C.~R{\'{e}}, J.~Shin, F.~Wang, S.~Wu, C.~Zhang, Deepdive:
  Declarative knowledge base construction, {SIGMOD} Record 45~(1) (2016)
  60--67.

\bibitem{DBLP:journals/jacm/FaginKRV15}
R.~Fagin, B.~Kimelfeld, F.~Reiss, S.~Vansummeren, Document spanners: {A} formal
  approach to information extraction, J. {ACM} 62~(2) (2015) 12.

\bibitem{DBLP:conf/icdt/Freydenberger17}
D.~D. Freydenberger, A logic for document spanners, in: ICDT, Vol.~68 of
  LIPIcs, Schloss Dagstuhl - Leibniz-Zentrum fuer Informatik, 2017, pp.
  13:1--13:18.

\bibitem{DBLP:conf/icdt/PeterfreundCFK19}
L.~Peterfreund, B.~ten Cate, R.~Fagin, B.~Kimelfeld, Recursive programs for
  document spanners, in: {ICDT}, Vol. 127 of LIPIcs, Schloss Dagstuhl -
  Leibniz-Zentrum fuer Informatik, 2019, pp. 13:1--13:18.

\bibitem{DBLP:journals/mst/FreydenbergerH18}
D.~D. Freydenberger, M.~Holldack, Document spanners: From expressive power to
  decision problems, Theory Comput. Syst. 62~(4) (2018) 854--898.

\bibitem{DBLP:conf/webdb/NahshonPV16}
Y.~Nahshon, L.~Peterfreund, S.~Vansummeren, Incorporating information
  extraction in the relational database model, in: WebDB, {ACM}, 2016, p.~6.

\bibitem{DBLP:journals/corr/abs-1912-06110}
D.~D. Freydenberger, L.~Peterfreund,
  \href{http://arxiv.org/abs/1912.06110}{Finite models and the theory of
  concatenation}, CoRR abs/1912.06110 (2019).
\newline\urlprefix\url{http://arxiv.org/abs/1912.06110}

\bibitem{schmid2020purely}
M.~L. Schmid, N.~Schweikardt, A purely regular approach to non-regular core
  spanners, arXiv preprint arXiv:2010.13442 (2020).

\bibitem{peterfreund2022weight}
L.~Peterfreund, W.~Martens, B.~Kimelfeld, J.~Doleschal, Weight annotation in
  information extraction, Logical Methods in Computer Science 18 (2022).

\bibitem{DBLP:conf/icdt/AmarilliBMN19}
A.~Amarilli, P.~Bourhis, S.~Mengel, M.~Niewerth, Constant-delay enumeration for
  nondeterministic document spanners, in: {ICDT}, 2019, pp. 22:1--22:19.

\bibitem{DBLP:conf/pods/FlorenzanoRUVV18}
F.~Florenzano, C.~Riveros, M.~Ugarte, S.~Vansummeren, D.~Vrgoc, Constant delay
  algorithms for regular document spanners, in: {PODS}, {ACM}, 2018, pp.
  165--177.

\bibitem{DBLP:conf/pods/FreydenbergerKP18}
D.~D. Freydenberger, B.~Kimelfeld, L.~Peterfreund, Joining extractions of
  regular expressions, in: PODS, 2018, pp. 137--149.

\bibitem{DBLP:conf/pods/PeterfreundFKK19}
L.~Peterfreund, D.~D. Freydenberger, B.~Kimelfeld, M.~Kr{\"{o}}ll, Complexity
  bounds for relational algebra over document spanners, in: {PODS}, {ACM},
  2019, pp. 320--334.

\bibitem{freydenberger2021splitting}
D.~D. Freydenberger, S.~M. Thompson, Splitting spanner atoms: A tool for
  acyclic core spanners, arXiv preprint arXiv:2104.04758 (2021).

\bibitem{DBLP:conf/pods/MaturanaRV18}
F.~Maturana, C.~Riveros, D.~Vrgo\v{c}, Document spanners for extracting
  incomplete information: Expressiveness and complexity, in: PODS, 2018, pp.
  125--136.

\bibitem{DBLP:journals/tods/FaginKRV16}
R.~Fagin, B.~Kimelfeld, F.~Reiss, S.~Vansummeren, Declarative cleaning of
  inconsistencies in information extraction, {ACM} Trans. Database Syst. 41~(1)
  (2016) 6:1--6:44.

\bibitem{DBLP:conf/icdt/FreydenbergerT20}
D.~D. Freydenberger, S.~M. Thompson, Dynamic complexity of document spanners,
  in: {ICDT}, Vol. 155, 2020, pp. 11:1--11:21.

\bibitem{DBLP:conf/pods/DoleschalKMNN19}
J.~Doleschal, B.~Kimelfeld, W.~Martens, Y.~Nahshon, F.~Neven, Split-correctness
  in information extraction, in: {PODS}, 2019, pp. 149--163.

\bibitem{doleschal2021optimization}
J.~Doleschal, Optimization and parallelization of regex based information
  extraction, Ph.D. thesis (2021).

\bibitem{schmid2021spanner}
M.~L. Schmid, N.~Schweikardt, Spanner evaluation over slp-compressed documents,
  in: Proceedings of the 40th ACM SIGMOD-SIGACT-SIGAI Symposium on Principles
  of Database Systems, 2021, pp. 153--165.

\bibitem{schmid2022query}
M.~L. Schmid, N.~Schweikardt, Query evaluation over slp-represented document
  databases with complex document editing, in: Proceedings of the 41st ACM
  SIGMOD-SIGACT-SIGAI Symposium on Principles of Database Systems, 2022, pp.
  79--89.

\bibitem{munoz2022constant}
M.~Mu{\~n}oz, C.~Riveros, Constant-delay enumeration for slp-compressed
  documents, arXiv preprint arXiv:2209.12301 (2022).

\bibitem{thompson2022conjunctive}
S.~M. Thompson, Conjunctive queries for logic-based information extraction,
  arXiv preprint arXiv:2208.01298 (2022).

\bibitem{DBLP:conf/icalp/FreydenbergerP21}
D.~D. Freydenberger, L.~Peterfreund, The theory of concatenation over finite
  models, in: {ICALP}, Vol. 198 of LIPIcs, Schloss Dagstuhl - Leibniz-Zentrum
  f{\"{u}}r Informatik, 2021, pp. 130:1--130:17.

\bibitem{doleschal2021database}
J.~Doleschal, B.~Kimelfeld, W.~Martens, Database principles and challenges in
  text analysis, ACM SIGMOD Record 50~(2) (2021) 6--17.

\bibitem{schmid2022document}
M.~L. Schmid, N.~Schweikardt, Document spanners—a brief overview of concepts,
  results, and recent developments (2022).

\bibitem{moro2001syntax}
A.~Moro, M.~Tettamanti, D.~Perani, C.~Donati, S.~F. Cappa, F.~Fazio, Syntax and
  the brain: disentangling grammar by selective anomalies, Neuroimage 13~(1)
  (2001) 110--118.

\bibitem{smith2003spqr}
J.~M. Smith, D.~Stotts, {SPQR}: Flexible automated design pattern extraction
  from source code, in: 18th IEEE International Conference on Automated
  Software Engineering, IEEE, 2003, pp. 215--224.

\bibitem{buchi}
J.~R. B{\"{u}}chi, L.~H. Landweber, Definability in the monadic second-order
  theory of successor, J. Symb. Log. 34~(2) (1969) 166--170.

\bibitem{DBLP:conf/csl/LautemannST94}
C.~Lautemann, T.~Schwentick, D.~Th{\'{e}}rien, Logics for context-free
  languages, in: {CSL}, Vol. 933 of Lecture Notes in Computer Science,
  Springer, 1994, pp. 205--216.

\bibitem{yakushiji2000event}
A.~Yakushiji, Y.~Tateisi, Y.~Miyao, J.-i. Tsujii, Event extraction from
  biomedical papers using a full parser, in: Biocomputing 2001, World
  Scientific, 2000, pp. 408--419.

\bibitem{seoud2007extraction}
R.~A.~A. Seoud, A.-B.~M. Youssef, Y.~M. Kadah, Extraction of protein
  interaction information from unstructured text using a link grammar parser,
  in: 2007 International Conference on Computer Engineering \& Systems, IEEE,
  2007, pp. 70--75.

\bibitem{DBLP:journals/mst/Knuth68}
D.~E. Knuth, Semantics of context-free languages, Mathematical Systems Theory
  2~(2) (1968) 127--145.

\bibitem{DBLP:journals/mst/Knuth71}
D.~E. Knuth, Correction: Semantics of context-free languages, Mathematical
  Systems Theory 5~(1) (1971) 95--96.

\bibitem{wen2004enumerating}
H.~Wen-Ji, Enumerating sentences of context free language based on first one in
  order, Journal of Computer Research and Development 41~(1) (2004) 9--14.

\bibitem{nagashima1987formal}
T.~Nagashima, A formal deductive system for {CFG}, Hitotsubashi journal of arts
  and sciences 28~(1) (1987) 39--43.

\bibitem{makinen1997lexicographic}
E.~M{\"a}kinen, On lexicographic enumeration of regular and context-free
  languages, Acta Cybernetica 13~(1) (1997) 55--61.

\bibitem{domosi2000unusual}
P.~D{\"o}m{\"o}si, Unusual algorithms for lexicographical enumeration, Acta
  Cybernetica 14~(3) (2000) 461--468.

\bibitem{amarilli2022efficient}
A.~Amarilli, L.~Jachiet, M.~Mu{\~n}oz, C.~Riveros, Efficient enumeration for
  annotated grammars, in: Proceedings of the 41st ACM SIGMOD-SIGACT-SIGAI
  Symposium on Principles of Database Systems, 2022, pp. 291--300.

\bibitem{DBLP:books/daglib/0016921}
J.~E. Hopcroft, R.~Motwani, J.~D. Ullman, Introduction to automata theory,
  languages, and computation, 3rd Edition, Pearson international edition,
  Addison-Wesley, 2007.

\bibitem{sch:cha}
M.~L. Schmid, Characterising {REGEX} languages by regular languages equipped
  with factor-referencing, Inf. Comput. 249 (2016) 1--17.

\bibitem{DBLP:journals/mst/Freydenberger19}
D.~D. Freydenberger, A logic for document spanners, Theory Comput. Syst. 63~(7)
  (2019) 1679--1754.

\bibitem{hop:int}
J.~E. Hopcroft, R.~Motwani, J.~D. Ullman, Introduction to automata theory,
  languages, and computation - international edition {(2.} ed), Addison-Wesley,
  2003.

\bibitem{DBLP:conf/icdt/FreydenbergerH16}
D.~D. Freydenberger, M.~Holldack, Document spanners: From expressive power to
  decision problems, in: {ICDT}, Vol.~48 of LIPIcs, Schloss Dagstuhl -
  Leibniz-Zentrum f{\"{u}}r Informatik, 2016, pp. 17:1--17:17.

\bibitem{DBLP:journals/jcss/Valiant75}
L.~G. Valiant, \href{https://doi.org/10.1016/S0022-0000(75)80046-8}{General
  context-free recognition in less than cubic time}, J. Comput. Syst. Sci.
  10~(2) (1975) 308--315.
\newblock \href {https://doi.org/10.1016/S0022-0000(75)80046-8}
  {\path{doi:10.1016/S0022-0000(75)80046-8}}.
\newline\urlprefix\url{https://doi.org/10.1016/S0022-0000(75)80046-8}

\bibitem{DBLP:conf/stoc/Williams12}
V.~V. Williams, Multiplying matrices faster than coppersmith-winograd, in:
  {STOC}, {ACM}, 2012, pp. 887--898.

\bibitem{earley1970efficient}
J.~Earley, An efficient context-free parsing algorithm, Communications of the
  ACM 13~(2) (1970) 94--102.

\bibitem{DBLP:conf/icdt/Peterfreund21}
L.~Peterfreund, \href{https://doi.org/10.4230/LIPIcs.ICDT.2021.7}{Grammars for
  document spanners}, in: K.~Yi, Z.~Wei (Eds.), 24th International Conference
  on Database Theory, {ICDT} 2021, March 23-26, 2021, Nicosia, Cyprus, Vol. 186
  of LIPIcs, Schloss Dagstuhl - Leibniz-Zentrum f{\"{u}}r Informatik, 2021, pp.
  7:1--7:18.
\newblock \href {https://doi.org/10.4230/LIPIcs.ICDT.2021.7}
  {\path{doi:10.4230/LIPIcs.ICDT.2021.7}}.
\newline\urlprefix\url{https://doi.org/10.4230/LIPIcs.ICDT.2021.7}

\bibitem{DBLP:conf/pods/AmarilliBMN19}
A.~Amarilli, P.~Bourhis, S.~Mengel, M.~Niewerth, Enumeration on trees with
  tractable combined complexity and efficient updates, in: {PODS}, 2019, pp.
  89--103.

\end{thebibliography}


%
%
%
\end{document}